\documentclass[a4paper,11pt]{article}
\usepackage{jheppub} 
\usepackage{hyperref}
\usepackage{graphicx}
\usepackage[T1]{fontenc}
\usepackage[utf8]{inputenc}
\usepackage[english]{babel}
\usepackage[labelfont=bf, font=small]{caption}
\usepackage{caption}
\usepackage{subcaption}
\usepackage{physics}
\usepackage{titlesec}
\usepackage{setspace}
\usepackage{slashed}
\usepackage{amsmath,amsthm,amsfonts,amssymb,amsopn,amscd}

\newcommand{\eqdef}{\overset{\mathrm{def}}{=}}
\newtheorem{Definition}{Definition}[subsection]
\newtheorem{Proposition}{Proposition}[subsection]

\title{\boldmath A QFT information protocol for charged black holes}

\author{Paolo Palumbo}
\affiliation{Dipartimento di Fisica "Ettore Pancini",
Universit\`a degli studi di Napoli Federico II,
Via Cintia, 80126 Napoli, Italy}
\affiliation{INFN, Sezione di Napoli, Complesso Universitario di Monte S. Angelo, Via Cintia Edificio 6, 80126 Napoli, Italy}

\emailAdd{paolo.palumbo@unina.it}

\abstract{A generalization for the quantum information retrieval protocol recently illustrated by Verlinde and van der Heijden for evaporating black holes is provided to inclusions of type $\mathrm{III}$ von Neumann factors. The physical interest of such scenario arises in Quantum Field Theory, where local algebras are type $\mathrm{III}$ von Neumann algebras. The formula obtained can be easily interpreted in terms of the statistical dimension of superselection sectors in the case of black holes undergoing charge evaporation, thanks to the index-statistics theorem, leading to a thermodynamic interpretation. A constraint on the values of the index leads to a final remark about the quantization of the charge emitted by the black hole during the evaporation process.}

\begin{document}
\maketitle
\flushbottom

\section{Introduction}
In ordinary quantum mechanics, we know that given two subsystems $a$ and $b$, we can associate a Hilbert space $\mathcal{H}_a$ and $\mathcal{H}_b$ to each and the global system will be described in terms of the tensor product $\mathcal{H}=\mathcal{H}_a \otimes \mathcal{H}_b$. Such tensor product structure is responsible for some quantum aspects of nature, like the existence of entangled states.

In the relativistic scenario, observables become \textit{local}, because of relativistic covariance: a relativistic quantum mechanics is necessarily a Quantum Field Theory (QFT). Consequently, we can define a map which associates a local algebra of observables $\mathcal{A}(\mathcal{O})$ to every open subset of spacetime $\mathcal{O}$\footnote{A concrete realization is given by the algebra generated by fields smeared on $\mathcal{O}$, i.e. with a suitable smearing function having compact support in $\mathcal{O}$.} \cite{Haag} \cite{Haag-Kastler}. Causally disconnected spacetime regions will result in commuting algebras. The striking feature of this formalism is that if one considers the \textit{subsystem} associated to a bounded spacetime region $\mathcal{O}$, the Hilbert space of the global spacetime, namely the Fock space of the theory, will not factorize as in non-relativistic quantum mechanics\footnote{This is true referring to the spacetime degrees of freedom, not to the field modes which are delocalized.}. For instance, if one were to consider the Cauchy slice $t=0$ in Minkowski spacetime, its associated Hilbert space does not factorize into a tensor product $\mathcal{H}_L \otimes \mathcal{H}_R$, with these Hilbert spaces being associated to the subregions $x<0$ and $x>0$ respectively. More specifically, Hilbert spaces associated to the left and right Rindler wedges do not exist at all. As a consequence, one cannot define reduced density matrices for subregions of spacetime: since all quantum information algorithms hold onto the Hilbert factorization assumption, one cannot extend them trivially to the relativistic regime \cite{Werner}. Mathematically, the reason comes from the observation that local algebras of QFT are type $\mathrm{III}$ von Neumann algebras\footnote{For the free scalar field the algebra of a Rindler wedge is a type $\mathrm{III}_1$ factor \cite{Connes1}.} \cite{Araki}. Operator algebras provide a natural framework to deal with the generalization of quantum information tasks in the relativistic regime.
On that note, in \cite{Verlinde} a generalization of the information retrieval protocol for black holes was achieved in the case of type $\mathrm{II}_1$ algebras thanks to the theory of Jones construction. The idea to generalize quantum teleportation in algebraic terms has already initiated in \cite{Conlon}. In this work, we discuss a possible extension of such protocol to the type $\mathrm{III}$ case, which better matches the framework of QFT. However, since spacetime inclusions cannot have finite index, the protocol cannot be applied generally to spacetime inclusions.

The paper is structured as follows.
In the first part, we briefly describe the quantum teleportation protocol for qubits both in the Schrödinger picture and the Heisenberg picture, the latter better capturing the information transfer in terms of observables.
Then, we recall the idea developed by Verlinde and van der Heijden for the algebraic picture of the protocol, emphasising the roles covered by Jones index and Jones projections.
Afterwards, we mention the mathematical definition and main properties of Kosaki-Longo index, which provides a generalization of Jones index in the type $\mathrm{III}$ case.
In the final part, we present a version of the protocol for type $\mathrm{III}$ factors, also covering the case of charged sectors in QFT with its application to black hole physics.

\section{Quantum Teleportation}
Entanglement \cite{Schrodinger} \cite{EPR} has long been exploited to perform quantum information protocols in physics. The first task often encountered in graduate courses involves the teleportation of a qubit from a first observer, Alice, to another observer, Bob \cite{Teleportation}. We assume that Alice and Bob are "far away" from each other, in the sense that they have mutually commuting observables at their disposal. 
This provides a first example aimed at highlighting the core differences between classical and quantum physics and makes explicit use of the phenomenon of entanglement, which is one of the main features present only in the quantum realm.

\subsection{Quantum teleportation for qubits}

Consider a tripartite system of qubits, two of each are at Alice's disposal and the remaining one belongs to Bob. The system is described by the algebra $M_2(\mathbb{C}) \otimes M_2(\mathbb{C}) \otimes M_2(\mathbb{C})$. Supposing the state of the system is prepared such that Bob's qubit is entangled with one of Alice's, we have:

\begin{equation}
    \ket{\Psi} = \ket{\psi} \otimes \ket{\beta_{11}}
\end{equation}
with

\begin{equation}
    \begin{split}
        &\ket{\beta_{11}} \eqdef \frac{1}{\sqrt{2}} (\ket{00} + \ket{11}) \\
        &\ket{\beta_{12}} \eqdef \frac{1}{\sqrt{2}} (\ket{00} - \ket{11}) = (Z \otimes I) \ket{\beta_{11}} \\
        &\ket{\beta_{21}} \eqdef \frac{1}{\sqrt{2}} (\ket{10} + \ket{01}) = (X \otimes I) \ket{\beta_{11}} \\
        &\ket{\beta_{22}} \eqdef \frac{1}{\sqrt{2}} (\ket{01} - \ket{10}) = (ZX \otimes I) \ket{\beta_{11}}
    \end{split}
\end{equation}
representing the Bell basis of $\mathbb{C}^2 \otimes \mathbb{C}^2$. $X$ and $Z$ correspond to the first and third Pauli matrices. The task we want to perform is to apply local operations to the system to have Bob's qubit in the state $\ket{\psi}$, which results in teleporting Alice's information from the first qubit onto Bob's one.

We can rewrite $\ket{\Psi}$ as

\begin{equation}
    \begin{split}
        \ket{\Psi} &= \ket{\psi} \otimes \ket{\beta_{11}} \\
        &= \frac{1}{2} (\ket{\beta_{11}} \otimes\ket{\psi} + \ket{\beta_{12}} \otimes Z \ket{\psi} + \ket{\beta_{21}} \otimes X \ket{\psi} + \ket{\beta_{22}} \otimes XZ \ket{\psi}) \\
        &= \sum_{i, j = 1}^2 \ket{\beta_{ij}} \otimes (X^{i-1}Z^{j-1})\ket{\psi} \text{.}
    \end{split}
    \label{initial state}
\end{equation}
Alice then checks whether her part of the system is in the state $\ket{\beta_{ij}}$; this measurement will yield the output $(i,j)$. After that, she classically sends the result to Bob, who correspondingly applies the unitary operation $Z^{j-1}X^{i-1}$. The resulting state is

\begin{equation}
    \ket{\Phi_{ij}} = \ket{\beta_{ij}} \otimes \ket{\psi}
\end{equation}
hence the teleportation of the state $\ket{\psi}$ from Alice to Bob worked.

\subsection{Heisenberg picture of the teleportation scheme}

We shall now reformulate the teleportation protocol in the Heisenberg picture of quantum mechanics \cite{Bedard} \cite{Conlon}, to better elucidate the switch to the algebraic approach discussed later.

The initial state of the system can be expressed in terms of the density matrix $\rho \otimes \omega$, where $\rho = \ket{\psi} \bra{\psi}$ and $\omega = \ket{\beta_{11}} \bra{\beta_{11}}$. The decomposition \ref{initial state} translates into

\begin{equation}
    \rho \otimes \omega = \sum_{i,j=1}^2 (U_{ij}^* \otimes I) \omega (U_{ij} \otimes I) \otimes U_{ij}^* \rho U_{ij}
\end{equation}
with $U_{ij} \eqdef X^{i-1} Z^{j-1}$. Alice's measurement consists in the application of the operator $M_{ij} \eqdef (U_{ij}^* \otimes I) \omega (U_{ij} \otimes I) = \ket{\beta_{ij}} \bra{\beta_{ij}}$ for the result $(i,j)$, where $I$ is the identity: the resulting state becomes

\begin{equation}
    \sum_{i,j=1}^2 (M_{ij} \otimes I)(\rho \otimes \omega)(M_{ij} \otimes I) \text{;}
\end{equation}
Bob's corresponding operation leads to the application of $U_{ij}^*$ on his part of the system:

\begin{equation}
    \sum_{i,j=1}^2 (M_{ij} \otimes U_{ij}^*)(\rho \otimes \omega)(M_{ij} \otimes U_{ij}) \text{.}
\end{equation}
Finally, Bob can apply a projection $P \in M_2(\mathbb{C})$ and reproduce the information contained in Alice's first qubit: the probabilities to have \textit{yes}-answers for any $P$ in Alice's possession concerning the first qubit are unaffected and correctly reproduced by Bob on his qubit after the previous operations. Indeed,

\begin{equation}
    \mathrm{Tr}(\rho P) = \sum_{i,j=1}^2 \mathrm{Tr} \big( (M_{ij} \otimes P U_{ij}^*)(\rho \otimes \omega)(M_{ij} \otimes U_{ij} P) \big) \text{.}
    \label{Success}
\end{equation}
Since the algebras considered are von Neumann algebras, they are spanned by their projections and hence the previous condition extends for any $A \in M_2(\mathbb{C})$. This provides an example of LOCC (\textit{Local Operations and Classical Communication}) protocol.

\section{Jones Index and the protocol}

 In \cite{Verlinde}, an operator-algebraic generalization of the Hayden-Preskill \cite{Hayden} protocol setup is presented. The work translates the task into algebraic language and provides a new physical interpretation for the Jones index \cite{Jones}, a quantity defined in the context of inclusions of type $\mathrm{II}_1$ factors\footnote{Factors are von Neumann algebras with trivial center. See \ref{Factors}.}. The result relates such quantity to a map between the algebras of two causally disconnected observers, allowing quantum teleportation between the two. 

In this part we sum up their results. With arguments from the AdS/CFT correspondence, the authors claim that the algebra arising in a certain limit is a type $\mathrm{II}_1$ factor. The emergence of a non-type-$\mathrm{III}$ algebra in this limit suggests that we are not describing the QFT-limit, but rather a sort of quantum geometry inside the black hole, including the treatment of the \textit{singularity} (which may not be such in the quantum gravity regime). Type $\mathrm{II}$ algebras do appear often when gravitational degrees of freedom are taken into account \cite{Witten-Longo}. In the type $\mathrm{II}_1$ setup, all the algebras admit traces defined on every element, which is not the case for type $\mathrm{III}$ factors. In this paper, the main contribution will be describing the strict QFT limit ($G_N \rightarrow 0$), in which type $\mathrm{III}$ algebras are supposed to emerge.

\subsection{The algebras of Alice and Bob}

We move on to describing the roles of the observers in the protocol and their corresponding algebras of observables. Suppose that an early observer, Alice, has stored some information in a diary and thrown it into an evaporating black hole \cite{Hawking}. We shall assume that the diary is small, in the sense that Alice's information can be retrieved by a small amount of radiation energy $E$. We denote with $\mathcal{M}$ the algebra of the black hole of mass $M$ and with $\mathcal{N}$ the algebra of the black hole of mass $M - E$, i.e. after an evaporation step which radiates the information content of Alice's diary. $\mathcal{N}$ is a subalgebra of $\mathcal{M}$, since the black hole shrinks after the evaporation step (see figure \ref{Black hole algebras}). The fact that Alice's information is encoded in the radiation of energy $E$ can be formalized as stating that Alice's algebra is

\begin{equation}
    \mathcal{A} \eqdef \mathcal{N}' \cap \mathcal{M} \text{,}
\end{equation}
as the shrunk black hole has lost all information about Alice's diary.

Let us switch to Bob's point of view: he is another observer who has access to the early radiation and, at first, has no access to the black hole algebra $\mathcal{M}$. However, after the evaporation step described above,  Bob will receive the outgoing Hawking radiation, although it does not directly correspond to the information of Alice's diary: he cannot simply read off such information from the radiation. Let us introduce the following inclusions:

\begin{equation}
    \mathcal{M} \subset \mathcal{M}_1 \subset \mathcal{M}_2 \text{,}
\end{equation}
where $\mathcal{M}_1$ contains $\mathcal{M}$ and additionally the algebra of the infalling Hawking radiation, while $\mathcal{M}_2$ contains both the algebra of the infalling and outgoing radiation. Notice that the infalling and outgoing Hawking pairs are entangled and such entanglement is what allows the teleportation protocol to work. Bob's algebra is then defined as

\begin{equation}
    \mathcal{B} \eqdef \mathcal{M}_1' \cap \mathcal{M}_2 \text{.}
\end{equation}
In fact, Bob cannot access the information of the infalling Hawking particles, since it lays beyond the horizon, but can probe the outgoing radiation.
\begin{figure}[h!]
    \centering
    \includegraphics[width=0.5\textwidth]{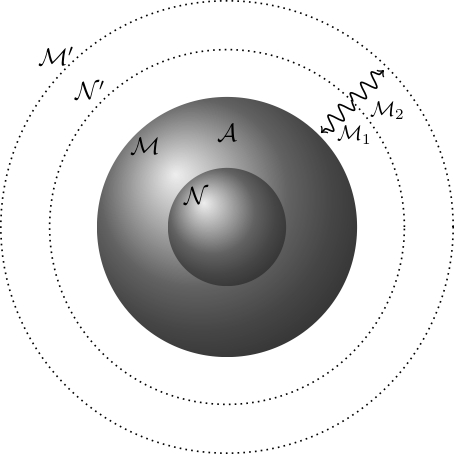}
    \caption{Visualization of the algebras introduced above. The picture is to be taken as a reference and does not represent the true picture.}
    \label{Black hole algebras}
\end{figure}
Our claim is that Alice's information can be recovered from the radiation: concretely, Bob can reproduce Alice's correlation functions. The inclusions of algebras sketched above are realized mathematically through Jones basic construction and Jones index theory, which has been known in literature for about half a century for the case of type $\mathrm{II}_1$ factors.

From Tomita-Takesaki theorem \cite{Takesaki} \cite{Takesaki2}, we have

\begin{equation}
    \mathcal{M}_1' = J_{\mathcal{M}_1} \mathcal{M}_1 J_{\mathcal{M}_1} \text{,}
    \label{M1 and M1'}
\end{equation}
where the modular conjugation can be constructed from a cyclic separating vector. In the type $\mathrm{II}_1$ case, such vector could be the tracial state $\ket{\Psi}$; however, we have to deal with multiple GNS representations as the tracial state vector changes when changing representations. This is not the case for type $\mathrm{III}$ factors, where we can choose a unique representation for all algebras, for instance the vacuum one in QFT.

There is a map, called \textit{canonical shift}, which naturally connects Alice's and Bob's algebras:

\begin{equation}
    \mathcal{B} = \Gamma (\mathcal{A}) \eqdef J_{\mathcal{M}_1} J_{\mathcal{M}} \mathcal{A} J_{\mathcal{M}} J_{\mathcal{M}_1} \text{.}
\end{equation}
In the type $\mathrm{II}_1$ case discussed in \cite{Verlinde}, the KMS condition implies the triviality of the modular operator associated with the tracial state; in the type $\mathrm{III}$ case that is generally not true, although in the case of black holes we know the form of the modular operator, at least in simple cases, due to the equivalence principle and Bisognano-Wichmann theorem \cite{Bisognano-Wichmann} \cite{Sewell}.

\subsection{Conditional expectations and Jones basic construction}

For Bob to be able to recover the information contained in Alice's diary, he needs to be able to \textit{trace out} Alice's information from the black hole: this operation, in the case of generic von Neumann algebras, is realized through the \textit{conditional expectation}

\begin{equation}
    \mathcal{E} : \mathcal{M} \rightarrow \mathcal{N}
\end{equation}
defined to have the properties:

\begin{itemize}
    \item $\mathcal{E}(I) = 1$;
    \item $\big( \mathcal{E}(M) \big)^* = \mathcal{E}(M^*), \ \forall M \in \mathcal{M}$;
    \item $\mathcal{E}(N_1 M N_2) = N_1 \mathcal{E}(M) N_2, \ \forall M \in \mathcal{M}, \ \forall N_1, N_2 \in \mathcal{N}$.
\end{itemize}
The last one is called the \textit{bimodule property}. The \textit{Jones projection} is defined in terms of the conditional expectation as

\begin{equation}
    e_{\mathcal{N}} M e_{\mathcal{N}} \eqdef \mathcal{E}(M) e_{\mathcal{N}}
\end{equation}
and satisfies the usual properties of projections. Let us introduce now the algebra

\begin{equation}
    \mathcal{M}_1 \eqdef \langle \mathcal{M}, e_{\mathcal{N}} \rangle \text{,}
\end{equation}
where $\langle \cdot, \cdot \rangle$ denotes the smallest von Neumann algebra generated by its two arguments, namely $( \mathcal{M} \cup \{ e_{\mathcal{N}} \} )''$ in our case. This algebra corresponds to the algebra of the black hole and the infalling particle: in fact, we can express it as

\begin{equation}
    \mathcal{M}_1 = J_{\mathcal{M}} \mathcal{N}' J_{\mathcal{M}} \text{.}
    \label{M1 and N'}
\end{equation}
Notice that $\mathcal{N}' \supset \mathcal{M}'$, so $\mathcal{M}_1 = J_{\mathcal{M}} \mathcal{N}' J_{\mathcal{M}} \supset J_{\mathcal{M}} \mathcal{M}' J_{\mathcal{M}} = \mathcal{M}$, as it should be. If relation \ref{M1 and N'} holds, we can interpret the action of $J_{\mathcal{M}}$ as taking the particle conjugate of the outgoing radiation, which is indeed the infalling one (again, see figure \ref{Black hole algebras}).

\subsection{The information retrieval protocol}

Suppose the act of Alice creating her diary can be modelled by a suitable operator $X \in \mathcal{A}$ acting on a state $\ket{\Psi}$: $X \ket{\Psi} = \ket{X}$. For the type $\mathrm{II}$ case discussed in \cite{Verlinde}, we may pick $\ket{\Psi}$ as the tracial state vector. Bob's goal is to reproduce all Alice's correlation functions $\bra{X} A \ket{X}$ (for any $A \in \mathcal{A}$), but he does not have direct access to Alice's algebra. 

If we construct the algebra

\begin{equation}
    \mathcal{M}_2 \eqdef \langle \mathcal{M}_1, e_{\mathcal{M}} \rangle \text{,}
\end{equation}
the canonical shift $\Gamma$ maps the relative commutant $\mathcal{A}$ into the relative commutant $\mathcal{B}$. Bob can reproduce Alice's correlation functions using his algebra:

\begin{equation}
    \begin{split}
        &\bra{e_{\mathcal{N}} X} \Gamma (A) \ket{e_{\mathcal{N}} X} = \bra{X} e_{\mathcal{N}} A \ket{X} = \mathrm{Tr}_{\mathcal{M}_1} (e_{\mathcal{N}} A X X^*) = \\
        &[\mathcal{M} : \mathcal{N}]^{-1} \mathrm{Tr}_{\mathcal{M}} (A X X^*) \text{,}
    \end{split}
    \label{fResult}
\end{equation}
having used the so-called \textit{Markov property}

\begin{equation}
    \mathrm{Tr}_{\mathcal{M}_1}(e_{\mathcal{N}} A) = [\mathcal{M} : \mathcal{N}]^{-1} \mathrm{Tr}_{\mathcal{M}}(A), \ \forall A \in \mathcal{A} \text{.}
    \label{Jind}
\end{equation}
The quantity $[\mathcal{M} : \mathcal{N}]$ in the previous formula is called the \textit{Jones index} of $\mathcal{N}$ in $\mathcal{M}$. In the appendix, there is a section dedicated to its definition in the type $\mathrm{II}_1$ case and the proof of equation \ref{Jind}; we will discuss its generalization to the type $\mathrm{III}$ in the following section.

Finally, Bob can reproduce all of Alice's correlation functions using $\Gamma (A) \in \mathcal{B}$, after a suitable normalization. This relates to equation \ref{Success} and corresponds to a successful teleportation protocol. Since the canonical shift satisfies the algebraic structure of operators, we have that Bob can indeed reproduce the information of Alice's diary from the radiation.

In this discussion, the dynamics of the radiation-black hole system was ignored: during the evaporation step, there actually is a unitary operator $U$ describing time evolution. If Bob knew the form of $U$ and had such operator at his disposal, he could apply $U^*$ to his state to undo the evolution; however, that is not physically the case. Hayden and Preskill \cite{Hayden} showed that, under certain assumptions, Bob could infer the the form of $U$ and undo the dynamics, reproducing Alice's correlation functions. Further progress has been made on the matter, as $U$ does not generally belong to Bob's algebra: it can be shown that there are suitable operators at Bob's disposal which almost undo the dynamical evolution, allowing for the approximate retrieval of Alice's correlation functions, at least in the type $\mathrm{I}$ case \cite{Yoshida}.

\section{Jones Index for type III factors}
\subsection{Towards a spacetime picture}

In the previous description, several steps relied heavily on the assumptions about the algebras being type $\mathrm{II}$. Specifically:

\begin{itemize}
    \item the existence of a trace $\mathrm{Tr}$ for the algebras;
    \item the existence of a tracial state $\ket{\Psi}$ associated with $\mathrm{Tr}$;
    \item the triviality of the modular operator and, consequently, the particularly simple form the modular conjugation takes. 
\end{itemize}
Since local algebras of QFT are type $\mathrm{III}$, it would be desirable to provide a generalization of the previous protocol to the type $\mathrm{III}$ scenario, which includes QFT.

\subsection{Kosaki-Longo index}

Jones construction has been extended to the case of type $\mathrm{III}$ factors in recent times \cite{Kosaki} \cite{Kosaki1} \cite{Longo} \cite{Longo1}. In particular, the notions of von Neumann inclusions, Jones extension and Jones index can be defined. In the following section, we shall review the basics of such theory following \cite{Kosaki} and apply it to the black hole information recovery protocol discussed above\footnote{The definition in \cite{Longo} is equivalent, but for the purposes of this paper Kosaki's approach is more fitting. Longo's formulation uses the crossed product which turns a type $\mathrm{III}$ algebra into a type $\mathrm{II}$ one; all the properties of Longo's notion of index are naturally inherited from the usual Jones index.}, attempting some viable generalizations for the type $\mathrm{III}$ case: this would better reflect the reality of an evaporating black hole in the semi-classical limit, as the local algebras of QFT are type $\mathrm{III}$. We will start reviewing the mathematics behind type $\mathrm{III}$ inclusions.

Given two type $\mathrm{III}$ factors $\mathcal{M}$ and $\mathcal{N}$ such that $\mathcal{N} \subset \mathcal{M}$, let us denote with $C(\mathcal{M}, \mathcal{N})$ the set of all faithful normal conditional expectations $\mathcal{E}: \mathcal{M} \rightarrow \mathcal{N}$. This set is included in the set of all faithful normal operator-valued weights from $\mathcal{M}$ to $\mathcal{N}$, denoted as $P(\mathcal{M}, \mathcal{N})$. Any faithful normal weight satisfies the bimodule property and is a multiple of a conditional expectation, when it is bounded. Since $\mathcal{N} \subset \mathcal{M} \implies \mathcal{M}' \subset \mathcal{N}'$, we can establish the canonical bijection

\begin{equation}
    (\cdot)^{-1} : \eta \in P(\mathcal{M}, \mathcal{N}) \rightarrow \eta^{-1} \in P(\mathcal{N}', \mathcal{M}')
\end{equation}
with the properties

\begin{itemize}
    \item $\big( \eta^{-1} \big)^{-1} = \eta$;
    \item $(\eta_1 \circ \eta_2)^{-1} = \eta_2^{-1} \circ \eta_1^{-1}$.
\end{itemize}
The bijection is briefly realized as follows \cite{Connes}. Let $\mathcal{M}$ be a von Neumann algebra on a Hilbert space $\mathcal{H}$ and $\mathcal{H}_{\Psi}$ the GNS Hilbert space constructed from the state $\Psi$ on $\mathcal{M}'$. If we consider the canonical injection $M \rightarrow \ket{M} \in \mathcal{H}_{\Psi}$ which realized the GNS construction, we may define a bounded operator $R_{\Psi}^{\ket{\xi}}: \mathcal{H}_{\Psi} \rightarrow \mathcal{H}$ such that

\begin{equation}
    R_{\Psi}^{\ket{\xi}} \ket{M} \eqdef M \ket{\xi} \text{.}
\end{equation}

\begin{Definition}
    The \textnormal{spatial derivative} $d\Phi/d\Psi$ is the unique positive self-adjoint operator $T$ on $\mathcal{H}$ satisfying

    \begin{equation}
        \bra{\xi} T \ket{\xi} = \overline{q}(\ket{\xi}) 
    \end{equation}
    where $\overline{q}(\ket{\xi})$ denotes the closure of $q(\ket{\xi}) \eqdef \Phi \Big( R_{\Psi}^{\ket{\xi}} {R_{\Psi}^{\ket{\xi}}}^* \Big)$.
    The \textnormal{canonical inverse} of $\Psi$ is the unique weight $\Psi^{-1}$ such that $d\Phi/d\Psi = \Phi \circ \Psi^{-1}$.
\end{Definition}

In many scenarios, given a state $\Psi$ on $\mathcal{M}$ and the state $\Psi' (X') \eqdef \Psi (J {X'}^* J)$ with $X' \in \mathcal{M}'$, we have $d\Phi/d\Psi' = \Delta_{\Phi|\Psi}$. 

\begin{Proposition}
    If $I \in \mathrm{Dom}(\mathcal{E}^{-1})$, then $\mathcal{E}^{-1}(I) \in \mathcal{M} \cap \mathcal{M}' = \mathbb{C} I$.
\end{Proposition}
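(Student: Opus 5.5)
The plan is to combine the bimodule property of the canonical inverse with the inclusion $\mathcal{M}' \subset \mathcal{N}'$. Here $\mathcal{E} \in C(\mathcal{M}, \mathcal{N})$, so $\mathcal{E}^{-1} \in P(\mathcal{N}', \mathcal{M}')$. The latter is a faithful normal operator-valued weight from $\mathcal{N}'$ to $\mathcal{M}'$, so its values lie in the extended positive part of $\mathcal{M}'$. The hypothesis $I \in \mathrm{Dom}(\mathcal{E}^{-1})$ says that $\mathcal{E}^{-1}(I)$ is a genuine bounded positive operator. Since $\mathcal{E}^{-1}$ takes values in $\mathcal{M}'$, we immediately get $\mathcal{E}^{-1}(I) \in \mathcal{M}'$.

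\textbf{Step 1: $\mathcal{E}^{-1}(I)$ commutes with $\mathcal{M}'$.} The bimodule property of operator-valued weights gives, for every $Y \in \mathcal{M}'$ (the target algebra),
\begin{equation}
    \mathcal{E}^{-1}(Y^* X Y) = Y^* \mathcal{E}^{-1}(X) Y \text{.}
\end{equation}
By polarization, $\mathcal{E}^{-1}$ restricted to its (bounded) domain is an $\mathcal{M}'$-bimodule map. Applying this with $X = I$ gives $\mathcal{E}^{-1}(Y) = Y\mathcal{E}^{-1}(I) = \mathcal{E}^{-1}(I) Y$. So $\mathcal{E}^{-1}(I)$ lies in the center of $\mathcal{M}'$, namely $\mathcal{M}' \cap \mathcal{M}''$. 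Because $\mathcal{M}$ is a von Neumann algebra, $\mathcal{M}'' = \mathcal{M}$, so $\mathcal{E}^{-1}(I) \in \mathcal{M} \cap \mathcal{M}'$.

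\textbf{Step 2: triviality of the center.} Since $\mathcal{M}$ is a factor, $\mathcal{M} \cap \mathcal{M}' = \mathbb{C} I$. Faithfulness together with $I \neq 0$ then forces $\mathcal{E}^{-1}(I) = \lambda I$ with $\lambda > 0$.

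\textbf{Main obstacle.} The delicate point is Step 1. It requires that $Y \mapsto \mathcal{E}^{-1}(Y)$ be defined on all of $\mathcal{M}'$, so that the bimodule identity holds with the left and right factors separately. I would handle this through the domain. The definition domain $\mathfrak{m}$ of an operator-valued weight is a $\mathcal{M}'$-bimodule, being the span of $\mathfrak{n}^*\mathfrak{n}$ where $\mathfrak{n}$ is a left ideal that is also invariant under right multiplication by $\mathcal{M}'$. Since $I \in \mathfrak{m}$, we get $Y = I\cdot Y\cdot I$, and also $YI$ and $IY$, all in $\mathfrak{m}$ for every $Y \in \mathcal{M}'$. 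The linear extension of $\mathcal{E}^{-1}$ to $\mathfrak{m}$ satisfies $\mathcal{E}^{-1}(aXb) = a\mathcal{E}^{-1}(X)b$ for $a, b \in \mathcal{M}'$. This is the standard property recalled above; alternatively, one can check it through the spatial derivative characterization $d\Phi/d\Psi = \Phi\circ\Psi^{-1}$, using the fact that $R_\Psi^{\ket{\xi}}$ intertwines $\mathcal{M}'$ appropriately. With that in place, Steps 1 and 2 close the argument.
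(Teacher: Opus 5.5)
Your proof is correct and follows the paper's argument: the bimodule property of $\mathcal{E}^{-1}$ over $\mathcal{M}'$, applied with $X = I$, shows that $\mathcal{E}^{-1}(I)$ commutes with $\mathcal{M}'$, so it lies in $\mathcal{M}'' = \mathcal{M}$ as well as in $\mathcal{M}'$, and factoriality finishes the proof. Your version is slightly tidier than the paper's: the paper writes $\mathcal{E}^{-1}\big((M')^{-1}M'\big)$ and so tacitly needs invertible (e.g.\ unitary) $M'$, whereas you use $\mathcal{E}^{-1}(Y) = Y\mathcal{E}^{-1}(I) = \mathcal{E}^{-1}(I)Y$ directly and also check that $\mathcal{M}'$ lies in the domain of $\mathcal{E}^{-1}$.
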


\begin{proof}
    By the bimodule property, for any $M' \in \mathcal{M}'$,
    
    \begin{equation}
        \mathcal{E}^{-1}(I) = \mathcal{E}^{-1} \big( (M')^{-1} M' \big) = (M')^{-1} \mathcal{E}^{-1} (I) M'
    \end{equation}
    and thus $M' \mathcal{E}^{-1}(I) = \mathcal{E}^{-1}(I) M'$, which implies $\mathcal{E}^{-1}(I) \in \mathcal{M}'' = \mathcal{M}$. As a result, $\mathcal{E}^{-1}(I) \in \mathcal{M}' \cap \mathcal{M} =  \mathbb{C} I$.
\end{proof}

Consequently, we can identify $\mathcal{E}^{-1}(I)$ with a scalar.

\begin{Definition}
    Given a normal faithful conditional expectation $\mathcal{E}$, we define the \textnormal{index} of $\mathcal{E}$ as

    \begin{equation}
        \mathrm{Ind}(\mathcal{E}) \eqdef \mathcal{E}^{-1}(I) \in [1, \infty]
    \end{equation}
    where $\mathcal{E}^{-1}(I) \eqdef \infty$ whenever $I \not\in \mathrm{Dom}(\mathcal{E}^{-1})$.
\end{Definition}

If $\mathrm{Ind}(\mathcal{E}) < \infty$, we can normalize $\mathcal{E}^{-1}$ to obtain a conditional expectation

\begin{equation}
    \mathcal{E}' \eqdef \big( \mathrm{Ind}(\mathcal{E}) \big)^{-1} \mathcal{E}^{-1} \in C(\mathcal{N}', \mathcal{M}')
\end{equation}
which satisfies $(\mathcal{E}')^{-1} = \mathrm{Ind}(\mathcal{E}) \ \big( \mathcal{E}^{-1} \big)^{-1} = \mathrm{Ind}(\mathcal{E}) \ \mathcal{E}$, thus $\mathrm{Ind}(\mathcal{E}') = \mathrm{Ind}(\mathcal{E})$.

\begin{Definition}
    Whenever there is a conditional expectation with finite index associated to an inclusion of factors $\mathcal{N} \subset \mathcal{M}$, the unique conditional expectation $\mathcal{E}_0 \in C(\mathcal{M}, \mathcal{N})$ such that

    \begin{equation}
        \mathrm{Ind}(\mathcal{E}_0) = \inf_{\mathcal{E} \in C(\mathcal{M}, \mathcal{N})} \mathrm{Ind}(\mathcal{E}) \eqdef [\mathcal{M} : \mathcal{N}]
    \end{equation}
    is called \textnormal{minimal conditional expectation}. $[\mathcal{M} : \mathcal{N}]$ is referred to as the \textnormal{index} of $\mathcal{N}$ in $\mathcal{M}$.
\end{Definition}

It can be proved that such $\mathcal{E}_0$ is unique \cite{Connes}. Furthermore, if $\mathcal{E}_0$ is minimal in $C(\mathcal{M}, \mathcal{N})$, then $\mathcal{E}_0'$ is minimal in $C(\mathcal{N}', \mathcal{M}')$.

Consider now an inclusion $\mathcal{N} \subset \mathcal{M}$ with finite index. The following theorem proves that the relative commutant $\mathcal{N}' \cap \mathcal{M}$ is finite-dimensional: as a result, even though in our case $\mathcal{M}$ and $\mathcal{N}$ are type $\mathrm{III}$, Alice's (as well as Bob's) algebra admits a trace.

\begin{Proposition}
    Consider the inclusion $\mathcal{N} \subset \mathcal{M}$ where $\mathcal{M}$ and $\mathcal{N}$ are both type $\mathrm{III}$ factors. If $[\mathcal{M} : \mathcal{N}] < \infty$, $\mathcal{A} = \mathcal{N'} \cap \mathcal{M}$ is finite-dimensional.
\end{Proposition}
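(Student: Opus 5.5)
The plan is to derive the standard Pimsner--Popa type inequality for a finite-index conditional expectation, and then use it to bound the number of mutually orthogonal projections in $\mathcal{A}$. Fix $\mathcal{E} \in C(\mathcal{M},\mathcal{N})$ with $\mathrm{Ind}(\mathcal{E}) = \lambda^{-1} < \infty$; the minimal conditional expectation $\mathcal{E}_0$ will do. The first step is to prove
\begin{equation}
    \mathcal{E}(X) \geq \lambda X, \qquad \forall X \in \mathcal{M}_+ .
\end{equation}
The idea is to pass to the dual side. Since $\mathcal{E}^{-1}(I) = \mathrm{Ind}(\mathcal{E}) I$ is bounded, the map $\mathcal{E}' = \mathrm{Ind}(\mathcal{E})^{-1}\mathcal{E}^{-1}$ is a conditional expectation in $C(\mathcal{N}',\mathcal{M}')$. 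Using the spatial derivative characterization of the canonical inverse, $d\Phi/d\Psi = \Phi\circ\Psi^{-1}$, applied with weights composed with $\mathcal{E}$, the duality $(\mathcal{E}^{-1})^{-1} = \mathcal{E}$ turns the trivial operator inequality $\mathcal{E}'(e) \leq I$ on projections into the inequality above. In practice I would realize this through the Jones projection $e_{\mathcal{N}}$ in a standard representation, where $\mathcal{E}^{-1}(e_{\mathcal{N}}) = I$. The relation $\mathcal{E}(X) e_{\mathcal{N}} = e_{\mathcal{N}} X e_{\mathcal{N}}$, combined with the bound $\|\mathcal{E}^{-1}\| = \mathrm{Ind}(\mathcal{E})$ on positive elements, then yields $X \leq \mathrm{Ind}(\mathcal{E})\,\mathcal{E}(X)$. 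This is Kosaki's (or Pimsner--Popa's) argument, and I expect it to be the main obstacle, since it requires handling unbounded operator-valued weights and the spatial derivative carefully. If the full derivation is too technical, I would cite it from \cite{Kosaki}.

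The second step restricts $\mathcal{E}$ to $\mathcal{A} = \mathcal{N}'\cap\mathcal{M}$. For $A \in \mathcal{A}$ and $N \in \mathcal{N}$, the bimodule property gives $N\mathcal{E}(A) = \mathcal{E}(NA) = \mathcal{E}(AN) = \mathcal{E}(A) N$. Hence $\mathcal{E}(A) \in \mathcal{N}'\cap\mathcal{N} = \mathbb{C}I$, because $\mathcal{N}$ is a factor. Therefore $\mathcal{E}|_{\mathcal{A}} = \varphi(\cdot)I$ for a faithful normal state $\varphi$ on $\mathcal{A}$.

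The third step is the counting argument. Let $p_1,\dots,p_k \in \mathcal{A}$ be nonzero mutually orthogonal projections. The first step gives $\varphi(p_i) I \geq \lambda p_i$, and since $p_i \neq 0$ this forces $\varphi(p_i) \geq \lambda$. Summing over $i$ yields $k\lambda \leq \varphi\big(\sum_i p_i\big) \leq 1$, so $k \leq \mathrm{Ind}(\mathcal{E})$.

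Consequently $\mathcal{A}$ contains no infinite family of orthogonal projections, and every maximal abelian subalgebra of $\mathcal{A}$ is generated by at most $\mathrm{Ind}(\mathcal{E})$ minimal projections. A von Neumann algebra whose maximal abelian subalgebras are all finite-dimensional is itself finite-dimensional. Indeed, it is then of type $\mathrm{I}$ with atomic center, and it is a finite direct sum of matrix algebras. This gives $\dim\mathcal{A} \leq [\mathcal{M}:\mathcal{N}]^2$ and proves the proposition. The same argument applied to $\mathcal{N}\subset\mathcal{M}$ replaced by $\mathcal{M}_1 \subset \mathcal{M}_2$ covers Bob's algebra $\mathcal{B}$.
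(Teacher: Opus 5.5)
\textbf{Verdict.} Your proof is correct, but it takes a different route from the paper.

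\textbf{The two arguments.} The paper uses the square-root additivity of the (minimal) index under a partition of unity $\sum_i P_i = I$ by projections of $\mathcal{N}'\cap\mathcal{M}$, namely $[\mathcal{M}:\mathcal{N}]^{1/2}=\sum_i[P_i\mathcal{M}P_i:\mathcal{N}P_i]^{1/2}$. Each reduced index is at least $1$. If the relative commutant were infinite-dimensional, it would admit arbitrarily large partitions, and these would force infinite index. You instead use the Pimsner--Popa bound to show that the state $\varphi=\mathcal{E}|_{\mathcal{A}}$ is at least $\mathrm{Ind}(\mathcal{E})^{-1}$ on every nonzero projection, and then you count.

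\textbf{What each route buys.} Your route is more elementary in three ways:
\begin{itemize}
\item it needs only the Pimsner--Popa inequality, which the paper itself invokes later in its $\mathcal{R}\otimes M_n(\mathbb{C})$ example;
\item it works for any finite-index $\mathcal{E}$, not only the minimal one;
\item you actually prove the lemma that a von Neumann algebra without arbitrarily large orthogonal families of projections is finite-dimensional, which the paper leaves implicit.
\end{itemize}
The paper's route relies on the deeper theory of the minimal expectation. In return it yields a sharper count: at most $[\mathcal{M}:\mathcal{N}]^{1/2}$ orthogonal projections, hence $\dim\mathcal{A}\le[\mathcal{M}:\mathcal{N}]$, instead of your $[\mathcal{M}:\mathcal{N}]^2$. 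It also matches the additivity $d(\rho)=\sum_i d(\rho_i)$ of statistical dimensions that the paper exploits later.

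\textbf{A correction to your first step.} $\mathcal{E}^{-1}$ lives on $\mathcal{N}'$, so for $X\in\mathcal{M}_+$ you should use the dual weight $\hat{\mathcal{E}}=J_{\mathcal{M}}\mathcal{E}^{-1}(J_{\mathcal{M}}\,\cdot\,J_{\mathcal{M}})J_{\mathcal{M}}$ on $\mathcal{M}_1$. It satisfies $\hat{\mathcal{E}}(e_{\mathcal{N}})=I$ and $\hat{\mathcal{E}}(I)=\mathrm{Ind}(\mathcal{E})$. Applying it to $X^{1/2}e_{\mathcal{N}}X^{1/2}\le\|\mathcal{E}(X)\|\,I$ gives only the norm inequality $\|X\|\le\mathrm{Ind}(\mathcal{E})\|\mathcal{E}(X)\|$. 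To get the operator inequality you would need the contraction $X^{1/2}(\mathcal{E}(X)+\varepsilon)^{-1/2}e_{\mathcal{N}}$ instead. However, $\mathcal{E}(p)$ is a scalar for $p\in\mathcal{A}$, so the norm version already suffices for your counting step.
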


\begin{proof}
    The proof relies on the property
    \begin{equation}
        [\mathcal{M}:\mathcal{N}]^{\frac{1}{2}} = \sum_i [\mathcal{M}_i:\mathcal{N}_i]^{\frac{1}{2}}
        \label{index sum}
    \end{equation}
    with $\mathcal{N} P_i \subset \mathcal{M}_i \eqdef P_i \mathcal{M} P_i$, valid for any partition $\sum_i P_i = I$ of the identity by projections $P_i$ in $\mathcal{N}' \cap \mathcal{M}$.
    If $\mathcal{N}' \cap \mathcal{M}$ were infinite-dimensional, we could find arbitrarily large partitions, yielding $[\mathcal{M}:\mathcal{N}] = \infty$ due to \ref{index sum}.
\end{proof}

In light of that, inclusions of factors of local algebras arising from spacetime region inclusions due to the isotonic property cannot have finite index, otherwise we would have a finite-dimensional algebra instead of another type $\mathrm{III}$ one. If we still ignore that, for example assuming that Alice's diary is so simple that it essentially encodes a few qubits' information, then we may define a trace for both Alice's and Bob's algebras. We will discuss a concrete physical realization of such scenario below.

In order to implement the algorithm as before, we need to find a generalization of \ref{Jind}.

\begin{Proposition}
    Given $\mathcal{E} \in C(\mathcal{M}, \mathcal{N})$, $\left. \mathcal{E} \right|_{\mathcal{N}' \cap \mathcal{M}}$ has range in $\mathbb{C} I$.
    \label{expectation scalar}
\end{Proposition}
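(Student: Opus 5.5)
We use the bimodule property of $\mathcal{E}$ in the same way as in the proof of the first Proposition, combined with the fact that $\mathcal{N}$ is a factor. Take $A \in \mathcal{N}' \cap \mathcal{M}$. The goal is to show that $\mathcal{E}(A)$ lies in the center of $\mathcal{N}$, which is $\mathbb{C} I$.

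First, $\mathcal{E}(A) \in \mathcal{N}$ by definition of a conditional expectation $\mathcal{E}: \mathcal{M} \rightarrow \mathcal{N}$. Next, fix any $N \in \mathcal{N}$. Since $A \in \mathcal{N}'$, we have $NA = AN$. Applying the bimodule property to both sides gives
\begin{equation}
    N \mathcal{E}(A) = \mathcal{E}(NA) = \mathcal{E}(AN) = \mathcal{E}(A) N \text{.}
\end{equation}
Here we use the bimodule property once with $N_1 = N$, $N_2 = I$ and once with $N_1 = I$, $N_2 = N$. Hence $\mathcal{E}(A)$ commutes with every element of $\mathcal{N}$, so $\mathcal{E}(A) \in \mathcal{N}' \cap \mathcal{N}$. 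Because $\mathcal{N}$ is a type $\mathrm{III}$ factor, its center is trivial, so $\mathcal{E}(A) = \lambda_A I$ for some scalar $\lambda_A$. Linearity of $\mathcal{E}$ then makes $A \mapsto \lambda_A$ a linear functional on $\mathcal{A}$. Positivity and $\mathcal{E}(I) = I$ make it a normal state on $\mathcal{A}$, and faithfulness of $\mathcal{E}$ makes this state faithful. This is the object needed later to replace the trace in the generalization of \ref{Jind}.

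No step here is a serious obstacle, since the argument is purely algebraic. The only point needing care is the bimodule property as listed in the paper, which is stated for products $N_1 M N_2$. It has to be specialized with $N_1$ or $N_2$ equal to $I$, which is allowed since $I \in \mathcal{N}$. Finite index and finite-dimensionality of $\mathcal{A}$ are not needed for this statement. The factor hypothesis on $\mathcal{N}$ is essential: without it, $\mathcal{E}(A)$ would only lie in the center $\mathcal{Z}(\mathcal{N})$.
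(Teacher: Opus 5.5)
Your proof is correct and is essentially the paper's argument: the bimodule property gives $N\mathcal{E}(A)=\mathcal{E}(NA)=\mathcal{E}(AN)=\mathcal{E}(A)N$, so $\mathcal{E}(A)\in\mathcal{N}\cap\mathcal{N}'=\mathbb{C}I$ because $\mathcal{N}$ is a factor. Two small points in your favour: you take $A\in\mathcal{N}'\cap\mathcal{M}$, whereas the paper writes $N'\in\mathcal{N}'$ even though $\mathcal{E}$ is only defined on $\mathcal{M}$, and your remark that $A\mapsto\lambda_A$ is a normal faithful state is the same observation the paper uses later in the proof that $\Gamma$ preserves the canonical trace.
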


\begin{proof}
    Let us show that $\mathcal{E}(N') \in \mathcal{N}'$, for any $N' \in \mathcal{N}'$.

    \begin{equation}
        N \mathcal{E}(N') = \mathcal{E}(N N') = \mathcal{E} (N' N) = \mathcal{E}(N') N \text{.}
    \end{equation}
    As a result, $\mathcal{E}(N') \in \mathcal{N} \cap \mathcal{N}' = \mathbb{C}I$ since $\mathcal{N}$ is a factor.
\end{proof}

Moreover, when we use the minimal conditional expectation $\mathcal{E}_0$, its uniqueness implies $\mathcal{E}_0 \circ \mathrm{Ad}_{U} = \mathcal{E}_0$ for any unitary $U \in \mathcal{N}' \cap \mathcal{M}$\footnote{$U$ must belong to $\mathcal{A}$ otherwise we would have $\mathcal{E}_0(M) = \mathcal{E}_0(U^*MU) = U^* \mathcal{E}_0(M) U \neq \mathcal{E}_0(M) U^* U = \mathcal{E}_0(M)$.}, hence $\mathcal{E}_0 (AM) = \mathcal{E}_0(MA), \ \forall A \in \mathcal{A}, \ \forall M \in \mathcal{M}$, yielding a trace on $\mathcal{A} = \mathcal{N}' \cap \mathcal{M}$ \cite{Longo}. The following property, first proved by Jones \cite{Jones} for type $\mathrm{II}_1$ factors, provides a crucial step towards the generalization of the previous results.

In the type $\mathrm{III}$ case, although no finite projections exist, we can still define Jones projections through the conditional expectation. Specifically, we can define the Jones projection $e_{\mathcal{N}}$ associated to the inclusion $\mathcal{N} \subset \mathcal{M}$ as the projection such that $e_{\mathcal{N}} M e_{\mathcal{N}} = \mathcal{E}_0(M) e_{\mathcal{N}}, \ \forall M \in \mathcal{M}$. We can represent all algebras in the largest algebra GNS representation $\mathcal{H}$ and $e_{\mathcal{N}}$ projects from $\overline{\mathcal{M}\ket{\Omega}}$ to $\overline{\mathcal{N}\ket{\Omega}}$, where $\ket{\Omega}$ is a common cyclic and separating vector. The fact that the projection is infinite in the Murray-von Neumann sense does not interfere with such definition.

\begin{Proposition}
    The relation

    \begin{equation}
        \mathcal{E}_0(e_{{\mathcal{N}}_1}) = [\mathcal{M} : \mathcal{N}]^{-1} I
    \end{equation}
    holds, with $e_{{\mathcal{N}}_1}$ being the Jones projection associated with the inclusion $\mathcal{N}_1 \subset \mathcal{N}$\footnote{$\mathcal{N}_1$ belongs to the \textit{downward} Jones basic construction. The Jones tower consists of the inclusions $...\subset \mathcal{N}_j \subset ... \subset \mathcal{N}_1 \subset \mathcal{N} \subset \mathcal{M} \subset \mathcal{M}_1 \subset ... \subset \mathcal{M}_k \subset ...$.}.
    \label{Index inv}
\end{Proposition}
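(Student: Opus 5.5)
Our plan is to derive the statement from the canonical inverse $\mathcal{E}_0^{-1}$ and the basic construction applied one step down the tower, $\mathcal{N}_1 \subset \mathcal{N} \subset \mathcal{M}$. Since $[\mathcal{M}:\mathcal{N}]<\infty$, the downward basic construction exists: there is a subfactor $\mathcal{N}_1 \subset \mathcal{N}$ and a projection $e_{\mathcal{N}_1} \in \mathcal{N}$ with $\mathcal{M} = \langle \mathcal{N}, e_{\mathcal{N}_1}\rangle$. This is obtained by transporting, via $J_{\mathcal{N}}$, the upward construction of the conjugate inclusion, in analogy with relation \ref{M1 and N'}. Moreover $e_{\mathcal{N}_1} x e_{\mathcal{N}_1} = \mathcal{F}_0(x) e_{\mathcal{N}_1}$ for $x \in \mathcal{N}$, where $\mathcal{F}_0 \in C(\mathcal{N}, \mathcal{N}_1)$ is minimal. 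In this picture $\mathcal{N}\subset\mathcal{M}$ is (up to the spatial isomorphism $\mathrm{Ad}\,J$) the basic construction of $\mathcal{N}_1\subset\mathcal{N}$, and the index is preserved: $[\mathcal{N}:\mathcal{N}_1]=[\mathcal{M}:\mathcal{N}]$, using that $\mathcal{E}_0'$ is minimal in $C(\mathcal{N}',\mathcal{M}')$ with $\mathrm{Ind}(\mathcal{E}_0')=\mathrm{Ind}(\mathcal{E}_0)$.

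The proof is then reduced to the following claim: for an inclusion $\mathcal{P}\subset\mathcal{Q}$ with Jones projection $e$ and basic construction $\mathcal{Q}_1=\langle \mathcal{Q},e\rangle$, the dual minimal expectation $\hat{\mathcal{E}}: \mathcal{Q}_1 \to \mathcal{Q}$ satisfies $\hat{\mathcal{E}}(e)=[\mathcal{Q}:\mathcal{P}]^{-1}$. To prove it we use Kosaki's characterization. The operator-valued weight $\mathcal{E}^{-1}\in P(\mathcal{P}',\mathcal{Q}')$, conjugated by $J_{\mathcal{Q}}$, gives a weight $\hat{\mathcal{E}}^{\,-}$ from $\mathcal{Q}_1=J_{\mathcal{Q}}\mathcal{P}'J_{\mathcal{Q}}$ to $\mathcal{Q}$. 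We verify on the generators $x e y$ ($x,y\in\mathcal{Q}$), which span a $\sigma$-weakly dense $*$-subalgebra of $\mathcal{Q}_1$, that it satisfies $\hat{\mathcal{E}}^{\,-}(x e y)=xy$. This is done by computing spatial derivatives: $e$ is the projection onto $\overline{\mathcal{P}\Omega}$, and $d\varphi\circ\mathcal{E}/d\psi$ versus $d\varphi/d\psi$ differ exactly by compression with $e$, so the canonical inverse of $\mathcal{E}$ evaluated on $e$ gives $1$. Normalizing, $\hat{\mathcal{E}}=\mathrm{Ind}(\mathcal{E})^{-1}\hat{\mathcal{E}}^{\,-}$ is a conditional expectation, since $\hat{\mathcal{E}}^{\,-}(I)=\mathrm{Ind}(\mathcal{E})$ by the definition of the index and the bijection $\eta\mapsto\eta^{-1}$. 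Hence $\hat{\mathcal{E}}(e)=\mathrm{Ind}(\mathcal{E})^{-1}$. With $\mathcal{E}$ minimal, $\mathrm{Ind}(\mathcal{E})=[\mathcal{Q}:\mathcal{P}]$, and $\hat{\mathcal{E}}$ is minimal as the (conjugated) normalized inverse $\mathcal{E}_0'$ of a minimal expectation, by the remark following the definition of minimality.

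Applying the claim with $\mathcal{P}=\mathcal{N}_1$, $\mathcal{Q}=\mathcal{N}$, $\mathcal{Q}_1\cong\mathcal{M}$, we get that $\hat{\mathcal{F}}_0$ is a minimal expectation $\mathcal{M}\to\mathcal{N}$. By uniqueness of the minimal conditional expectation, $\hat{\mathcal{F}}_0=\mathcal{E}_0$. Therefore $\mathcal{E}_0(e_{\mathcal{N}_1})=[\mathcal{N}:\mathcal{N}_1]^{-1}I=[\mathcal{M}:\mathcal{N}]^{-1}I$. A consistency check is available: $e_{\mathcal{N}_1}$ commutes with $\mathcal{N}_1$, so one might expect a scalar in the spirit of Proposition \ref{expectation scalar}. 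However, $e_{\mathcal{N}_1}\notin\mathcal{N}'$, so the scalar nature must really come from the computation above, namely from $\hat{\mathcal{E}}^{\,-}(e)=I$.

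The main obstacle is the spatial-derivative computation giving $\mathcal{E}^{-1}(e)=I$, i.e. identifying the canonical inverse on the Jones projection, together with the identification of the downward construction with an upward one so that the indices match. For the first we would try to work with a vector $\Omega$ cyclic and separating for all algebras, chosen with $\varphi\circ\mathcal{E}_0=\varphi$. Then $e$ is the closure of $x\Omega\mapsto \mathcal{E}_0(x)\Omega$, and the relation $d(\varphi\circ \mathcal{E}^{-1})/d\psi$ can be checked directly on vectors of the form $x\Omega$. For the index matching we would rely on the invariance $\mathrm{Ind}(\mathcal{E}')=\mathrm{Ind}(\mathcal{E})$ already shown in the text.
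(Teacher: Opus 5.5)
The paper does not prove this proposition; it only refers to \cite{Longo}. Your proposal therefore supplies an argument where the paper has none. The route you take is the standard one from Kosaki's theory of the basic extension, and it is sound:
\begin{itemize}
\item realize $\mathcal{N}_1\subset\mathcal{N}\subset\mathcal{M}$ as a basic construction, e.g.\ $\mathcal{N}_1=J_{\mathcal{N}}\mathcal{M}'J_{\mathcal{N}}$, so that $J_{\mathcal{N}}\mathcal{N}_1'J_{\mathcal{N}}=\mathcal{M}$ exactly, with the vector implementing $\varphi\circ\mathcal{F}_0$ chosen in the natural cone so that its modular conjugation is $J_{\mathcal{N}}$;
\item show that the dual expectation $x\mapsto\mathrm{Ind}(\mathcal{F}_0)^{-1}J_{\mathcal{N}}\,\mathcal{F}_0^{-1}(J_{\mathcal{N}}xJ_{\mathcal{N}})\,J_{\mathcal{N}}$ is minimal and equals $\mathrm{Ind}(\mathcal{F}_0)^{-1}$ on the Jones projection;
\item conclude by uniqueness of $\mathcal{E}_0$ and $[\mathcal{N}:\mathcal{N}_1]=[\mathcal{N}':\mathcal{M}']=[\mathcal{M}:\mathcal{N}]$.
\end{itemize}
Compared with the bare citation, this isolates exactly which facts are used. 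Your intermediate ``upward'' claim is also precisely the form in which the paper later applies the proposition, namely $E_0(e_{\mathcal{M}})=[\mathcal{M}_2:\mathcal{M}_1]^{-1}$.

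Two points need repair.

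\textbf{Location of the Jones projection.} Your opening sentence puts $e_{\mathcal{N}_1}\in\mathcal{N}$, which is impossible. It would give $\langle\mathcal{N},e_{\mathcal{N}_1}\rangle=\mathcal{N}\neq\mathcal{M}$. It would also make $\mathcal{E}_0(e_{\mathcal{N}_1})=e_{\mathcal{N}_1}$ a projection rather than the scalar $[\mathcal{M}:\mathcal{N}]^{-1}$. The Jones projection of $\mathcal{N}_1\subset\mathcal{N}$ lives in the basic construction: $e_{\mathcal{N}_1}\in\mathcal{N}_1'\cap\mathcal{M}$. Your later step with $e\in\mathcal{Q}_1\cong\mathcal{M}$ already uses it this way, so this is a slip, but it should be corrected.

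\textbf{The key lemma is not proved.} The whole content of the statement is the identity $\mathcal{E}^{-1}(e)=I$, and you only gesture at it. The phrase ``differ exactly by compression with $e$'' is not a computation. Moreover, $\varphi\circ\mathcal{E}^{-1}$ with $\varphi$ a state on the large algebra is not defined, since $\mathcal{E}^{-1}$ maps $\mathcal{P}'$ into $\mathcal{Q}'$ and must be composed with weights on the commutant. Either cite the lemma (it is the key step of Kosaki's basic extension \cite{Kosaki}) or prove it directly, which is short:
\begin{itemize}
\item Let $\xi$ be cyclic and separating for $\mathcal{Q}$ with $\langle\xi,x\xi\rangle=\varphi(\mathcal{E}(x))$ for $x\in\mathcal{Q}$, where $\varphi$ is a faithful normal state on $\mathcal{P}$.
\item With $\Lambda_\varphi$ the GNS map, $\Lambda_\varphi(y)\mapsto y\xi$ ($y\in\mathcal{P}$) is an isometry onto $e\mathcal{H}$, so $R^{\xi}_{\varphi}(R^{\xi}_{\varphi})^{*}=e$.
\item Likewise $\Lambda_{\varphi\circ\mathcal{E}}(x)\mapsto x\xi$ ($x\in\mathcal{Q}$) is unitary onto $\mathcal{H}$, because $\xi$ is cyclic for $\mathcal{Q}$. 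Hence $R^{\xi}_{\varphi\circ\mathcal{E}}(R^{\xi}_{\varphi\circ\mathcal{E}})^{*}=I$.
\item Take any faithful normal weight $\psi$ on $\mathcal{Q}'$. Use the defining relation $d(\varphi\circ\mathcal{E})/d\psi=d\varphi/d(\psi\circ\mathcal{E}^{-1})$ together with $d\chi/d\varphi=(d\varphi/d\chi)^{-1}$. This gives $(\psi\circ\mathcal{E}^{-1})(e)=\langle\xi,\tfrac{d(\psi\circ\mathcal{E}^{-1})}{d\varphi}\,\xi\rangle=\langle\xi,\tfrac{d\psi}{d(\varphi\circ\mathcal{E})}\,\xi\rangle=\psi(I)$.
\item Since this holds for all such $\psi$, $\mathcal{E}^{-1}(e)=I$.
\end{itemize}
With this lemma in place, the rest of your argument goes through.
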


The proof for the general case can be found in \cite{Longo}.

With these mathematical tools at our disposal, we can move onto generalizing equation \ref{fResult} to the field-theoretic case.

\section{The protocol for charged black holes}
\subsection{Physical setup}

We consider a toy model for the adiabatic evaporation of a hadronically-charged black hole, described in terms of the DHR superselection theory.
To realize black hole evaporation, one typically needs to work at finite $G_N$ in order to allow for mass loss inside the horizon. This description could switch the involved algebras from type $\mathrm{III}$ to type $\mathrm{II}$ due to quantum gravity corrections \cite{Witten-Longo}. Nevertheless, one could in principle take into account gravitational corrections and describe the picture semiclassically: the net effect of $G_N$ is a deformation to the black hole geometry into an evaporating one; if we focus just on the radiation degrees of freedom, the algebra describing them is still type $\mathrm{III}$, as any local algebra of a QFT in curved backgrounds.

Naturally, since the reference to black hole geometry is never explicitly present in the following discussion, we refer to \textit{black holes} because of the causal structure of the observables, which resembles the one we would expect from a realistic black hole scenario (see figure \ref{Black hole algebras}). Specifically, by \textit{charged black hole} we hereby mean a QFT endowed with a DHR endomorphism localized in a certain region $\mathcal{O}$. The region $\mathcal{O}$ should represent the black hole interior, since DHR endomorphisms create a charged state localized in the corresponding region; hence we interpret it as the state describing the charged matter forming the black hole. 

By \textit{adiabatic} we mean that the horizon shrinks slowly, so that we basically keep the same localization for the field algebra $\mathcal{A}(\mathcal{O})$ describing the black hole interior both before and after an evaporation step. This way, the black hole undergoes a change in the charge quantum number, but keeps the same localization region in spacetime. The term \textit{evaporating} may be misleading, since strictly speaking the horizon is fixed; what is being evaporated out are just charge quanta.
The obstruction to working with local algebras coming from spacetime inclusions is that they do not have finite index and consequently cannot be dealt with using the Jones index machinery. Still, this does not provide a no-go theorem and a similar protocol could, in principle. be found even for spacetime regions.

An observation to be made is that this picture is \textit{dynamical} in the sense that the algebra of the black hole does indeed lose some bits of information, due to charge change. However, we neglect a possible $U$ modelling the time evolution of the system: from when Alice creates her state to when Bob tries to reproduce it, the state undergoes a time evolution. It was argued in \cite{Verlinde} that including such $U$ could in principle be done and Bob would have access to an operator which approximates $U^{-1}$ (undoing the evolution) up to a certain error, but explicit calculations have not been done yet for the type $\mathrm{II}$ or $\mathrm{III}$ case. Summing up, the dynamics of the evaporation is encoded just in the fact that we deal with an inclusion $\mathcal{N} \subset \mathcal{M}$, specifically due to a change in the black hole charge.

A more mathematically precise picture is provided by a charged state localized in a causal diamond in Minkowski spacetime undergoing a superselection sector change (see figure \ref{Minkowski}). By the equivalence principle, we expect an analogous behaviour for charged black holes.

\begin{figure}[h!]
     \centering
     \begin{subfigure}[b]{0.475\textwidth}
         \centering
         \includegraphics[width=\textwidth]{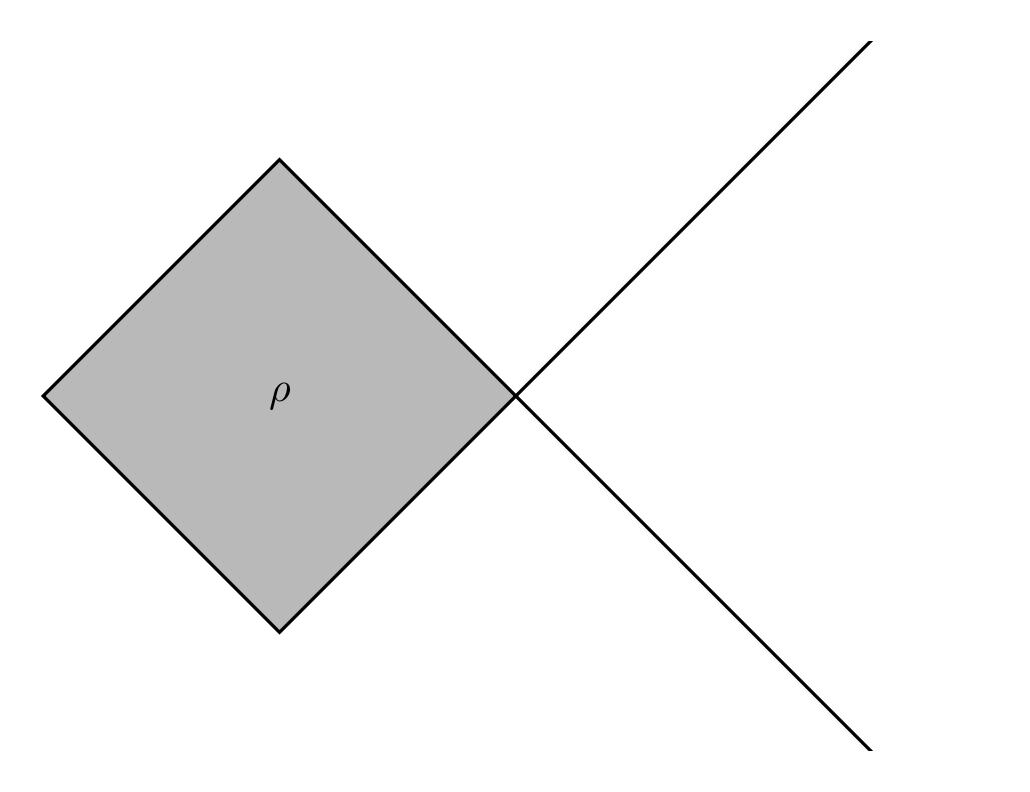}
         \caption{A charged state in the causal diamond is described by a DHR endomorphism $\rho$.}
         \label{fig:first}
     \end{subfigure}
     \hfill 
     \begin{subfigure}[b]{0.475\textwidth}
         \centering
         \includegraphics[width=\textwidth]{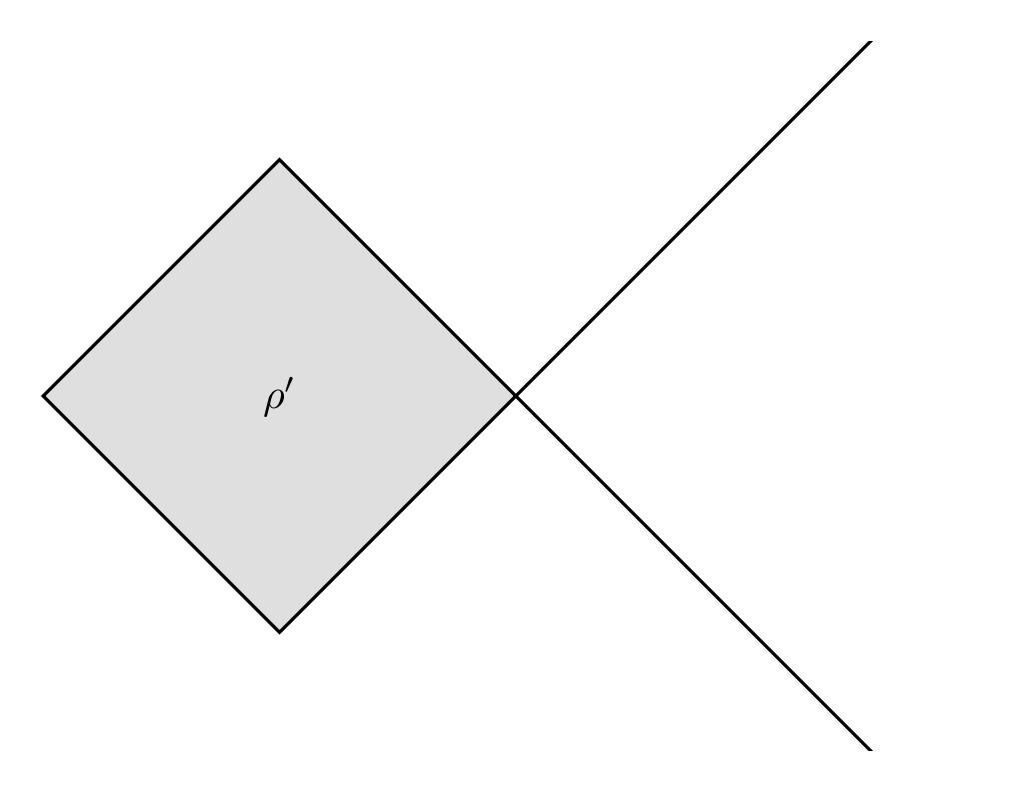}
         \caption{Another charged state in the causal diamond is described by another DHR endomorphism $\rho'$.}
         \label{fig:second}
     \end{subfigure}
     
     \caption{Visualization of the physical scenario in Minkowski spacetime. If the state somehow loses some charge (for black holes, due to evaporation), the new state is described by another endomorphism $\rho'$. We may think of $\rho$ as $\sigma \circ \rho'$, where $\sigma$ represents the missing charge.}
     \label{Minkowski}
\end{figure}

Concerning the black hole information paradox, we want to emphasise that this protocol does not directly address the information loss issue, since we are working with a fixed horizon. Even if this was not the case, the information paradox arises as the black hole reaches complete evaporation, leaving just thermal radiation outside, or at the very least as it approaches the Page time, where unitarity contrasts with the semiclassical result since radiation entropy starts decreasing. Far before the Page time, one may expect standard QFT in curved backgrounds to hold as an effective field theory.

Notwithstanding the strong limitations of the protocol, we remark that the extension of \cite{Verlinde} to the type $\mathrm{III}$ case, even keeping the index finite, is non-trivial at all, since the previous formulas allowing the teleportation explicitly rely on the existence of a trace on $\mathcal{M}$. Moreover, the generality of the mathematics adopted also allows for applications to different physical scenarios, which may have nothing to do with gravity. A simple example could be a single qubit coupled to an environment $\mathcal{N}$ consisting of a spin chain in the thermodynamic limit built as in \cite{Witten}. The algebra $\mathcal{M}$ would be $\mathcal{N} \otimes M_2(\mathbb{C})$ and both $\mathcal{N}$ and $\mathcal{M}$ would be type $\mathrm{III}$. The relative commutant $\mathcal{A}$ would correspond precisely to the qubit's algebra $M_2(\mathbb{C})$, which is finite-dimensional, hence allowing for a finite value of the Jones index. Alternatively, we may also think of $\mathcal{M} = \mathcal{N} \otimes M_n(\mathbb{C})$ as the algebra describing the fields in the black hole interior plus some qubits that had been tossed inside and $\mathcal{N} \cong \mathcal{N} \otimes I$ as the final field algebra where all the residual information about the qubits has already been evaporated out. Even in this scenario, the index is finite and the protocol applies smoothly.

\subsection{Type $\mathrm{III}$ version of the protocol}

To generalize the protocol formula \cite{Verlinde}, let us consider the state $\mathcal{E}_0$ on the relative commutant $\mathcal{A} = \mathcal{N}' \cap \mathcal{M}$: because of the tracial property, we have that such state on Alice's observables yields

\begin{equation}
    \mathcal{E}_0 (A) = \mathrm{Tr}_{\mathcal{A}}(A) I = \mathrm{Tr}_{\mathcal{A}}(A)
    \label{A}
\end{equation}
where $\mathrm{Tr}_{\mathcal{A}}$ denotes the canonical trace on Alice's algebra. Notice that, in general, the trace on $\mathcal{A}$ is not unique. To convince ourselves of this, consider the finite-dimensional algebra

\begin{equation}
    \mathcal{A} = M_n(\mathbb{C}) \oplus M_n(\mathbb{C}) \text{.}
\end{equation}
We may define a trace as

\begin{equation}
    \mathrm{Tr}_{\lambda}(A \oplus B) \eqdef \lambda \ \mathrm{Tr}(A) + (1-\lambda) \ \mathrm{Tr}(B)
    \label{lambda}
\end{equation}
and any choice of $\lambda \in [0, 1]$ would be admissible. The canonical trace is the one induced by $\mathcal{E}_0$, as described above. 

For example, if we consider the inclusion $\mathcal{N} \subset \mathcal{M}$ with $\mathcal{M} = \big( \mathcal{R} \otimes M_n(\mathbb{C}) \big) \oplus \big( \mathcal{R} \otimes M_n(\mathbb{C}) \big)$ and $\mathcal{N} = \big( \mathcal{R} \otimes I \big) \oplus \big( \mathcal{R} \otimes I \big)$, with $\mathcal{R}$ being a generic type $\mathrm{III}$ factor, we have

\begin{equation}
    \big( \mathcal{R} \otimes I \big)' \cap \big( \mathcal{R} \otimes M_n(\mathbb{C}) \big) = \big( \mathcal{R}' \otimes I \big) \cap \big( \mathcal{R} \otimes M_n(\mathbb{C}) \big) = I \otimes M_n(\mathbb{C}) \text{;}
\end{equation}
as a result, $\mathcal{N}' \cap \mathcal{M} = M_n(\mathbb{C}) \oplus M_n(\mathbb{C})$. The class of all conditional expectations is obtained considering $\mathcal{E}_{\varphi}(X \otimes A) = X \otimes \varphi(A) I$ for any weight $\varphi$\footnote{Since $\mathcal{N}$ contains two copies of elements in the form $X \otimes I$, the only way to preserve the second member of the tensor product is to transform $A$ into a scalar. We did not put $\mathcal{E}_{\varphi}(X \otimes A) = \varepsilon(X) \otimes \varphi(A) I$ for some conditional expectation $\varepsilon$ on $\mathcal{R}$ because we chose the first term of the tensor product to be $\mathcal{R}$ both for $\mathcal{M}$ and $\mathcal{N}$.}. The corresponding minimal conditional expectation $\mathcal{E}_0$ is

\begin{equation}
    \mathcal{E}_0 \big( (X \otimes A) \oplus (Y \otimes B) \big) = \frac{1}{2} \Big( \big( X \otimes \mathrm{Tr}(A) I) \oplus \big( Y \otimes \mathrm{Tr}(B) I \big) \Big) \text{.}
\end{equation}
In fact, when we consider $\varphi(A) = \mathrm{Tr}(\rho_{\varphi}A)$, the Pimsner-Popa bound holds \cite{Pimsner-Popa}:

\begin{equation}
    \mathcal{E} (X^+ \otimes A^+) \geq \big( \mathrm{Ind}(\mathcal{E}) \big)^{-1} (X^+ \otimes A^+), \ \forall X^+ \in \mathcal{R}^+, \ \forall A^+ \in M_n(\mathbb{C})^+
\end{equation}
with the abuse of notation of denoting again with $\mathcal{E}$ its restriction to the first term of the direct sum. Hence,

\begin{equation}
    \varphi(A^* A) \geq \big( \mathrm{Ind}(\mathcal{E}) \big)^{-1} A^* A, \ \forall A \in M_n(\mathbb{C}) \text{.}
\end{equation}
Since we may decompose $\rho = \sum_{k = 1}^n p_k \ket{k} \bra{k}$, it easily follows\footnote{It can be seen choosing $A = \ket{j} \bra{j}$, so that $p_j I \geq \big( \mathrm{Ind}(\mathcal{E}) \big)^{-1} \ket{j} \bra{j}$, which implies $p_j \geq \big( \mathrm{Ind}(\mathcal{E}) \big)^{-1}$, meaning that $\mathrm{Ind}(\mathcal{E}) \leq \frac{1}{\min_{j}{p_j}}$.} that the index is minimal for the choice $\rho_{\varphi} = \frac{1}{n} I$. Consequently, the canonical trace is the one corresponding to the choice $\lambda = \frac{1}{2}$ in equation \ref{lambda}. Notice that such choice of the trace corresponds to the one used in the computation of von Neumann entropy.

From Alice's point of view, \ref{A} represents a maximally mixed state, i.e. the state with maximum von Neumann entropy. It would be interesting to see what happens when changing the inclusion $\mathcal{N} \subset \mathcal{M}$ giving rise to the same $\mathcal{A}$, namely whether the canonical trace corresponds still to the one required in von Neumann entropy, possibly adopting the framework discussed in \cite{Balachandran}.

Concerning Bob's algebra, we have a canonical trace induced by the minimal conditional expectation relative to the inclusion $\mathcal{M}_1 \subset \mathcal{M}_2$, denoted here as $E_0$:

\begin{equation}
    E_0(B) = \mathrm{Tr}_{\mathcal{B}}(B), \ \forall B \in \mathcal{B} \text{.}
\end{equation}
From the proposition \ref{Index inv}, we know that

\begin{equation}
    [\mathcal{M}_2 : \mathcal{M}_1] = \frac{1}{E_0(e)}
\end{equation}
with $e \eqdef e_{\mathcal{M}}$ being the Jones projection associated with the inclusion $\mathcal{M} \subset \mathcal{M}_1$. Notice that $e \in \mathcal{M}_2$, consistently with the domain of $E_0$. Furthermore, we have $[\mathcal{M}:\mathcal{N}] = [\mathcal{M}_1:\mathcal{M}] = [\mathcal{M}_2:\mathcal{M}_1]$ \cite{Longo1}, yielding

\begin{equation}
    \begin{split}
        E_0 \big( e \Gamma(A) \big) &= E_0 \big( e^2 \Gamma(A) \big) = E_0 \big( e \Gamma(A) e \big) \\
        &= E_0 \big( E_0 ( \Gamma(A) ) e \big) = \mathrm{Tr}_{\mathcal{B}}\big( \Gamma(A) \big) E_0 (e) \\
        &= [\mathcal{M}:\mathcal{N}]^{-1} \mathrm{Tr}_{\mathcal{B}}\big( \Gamma(A) \big) \text{.}
    \end{split}
\end{equation}
The core property for the next step is achieved thanks to the following result.

\begin{Proposition}
    The canonical shift $\Gamma$ preserves the canonical trace.
    \label{Preserved trace main}
\end{Proposition}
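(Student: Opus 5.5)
We want $\mathrm{Tr}_{\mathcal{B}}(\Gamma(A)) = \mathrm{Tr}_{\mathcal{A}}(A)$ for all $A \in \mathcal{A}$, i.e. $E_0(\Gamma(A)) = \mathcal{E}_0(A)$. The plan is to write $\Gamma$ as the composition of two anti-linear, multiplicative, $*$-preserving maps, $j_{\mathcal{M}}(X) \eqdef J_{\mathcal{M}} X J_{\mathcal{M}}$ and $j_{\mathcal{M}_1}(X) \eqdef J_{\mathcal{M}_1} X J_{\mathcal{M}_1}$. We then track how each map transports the minimal conditional expectations. The anti-linearity cancels in the composition, and it is harmless on the traces, because we only evaluate them on self-adjoint elements and extend by linearity.

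The first step is $j_{\mathcal{M}}$. By relation \ref{M1 and N'} it maps $\mathcal{N}'$ onto $\mathcal{M}_1$, and by \ref{M1 and M1'} (applied to $\mathcal{M}$) it maps $\mathcal{M}$ onto $\mathcal{M}'$. Hence it carries $\mathcal{A} = \mathcal{N}' \cap \mathcal{M}$ onto $\mathcal{M}' \cap \mathcal{M}_1$ and the inclusion $\mathcal{M}' \subset \mathcal{N}'$ onto $\mathcal{M} \subset \mathcal{M}_1$. Since conjugation by an antiunitary sends faithful normal conditional expectations to faithful normal conditional expectations and preserves the Kosaki index (it commutes with the spatial-derivative construction defining $\mathcal{E}^{-1}$), it maps minimal expectations to minimal ones, by the uniqueness stated after the definition of $[\mathcal{M}:\mathcal{N}]$. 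We know that $\mathcal{E}_0'$ is minimal in $C(\mathcal{N}',\mathcal{M}')$. Therefore $j_{\mathcal{M}} \circ \mathcal{E}_0' \circ j_{\mathcal{M}}$ is the minimal expectation $F_0 : \mathcal{M}_1 \to \mathcal{M}$.

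The second step is the symmetric argument for $j_{\mathcal{M}_1}$. It maps $\mathcal{M}_1' \cap \mathcal{M}_2$, and more generally the inclusion $\mathcal{M}_1 \subset \mathcal{M}_2$, onto $\mathcal{M}_1' \subset \mathcal{M}'$; here we use $\mathcal{M}_2 = J_{\mathcal{M}_1}\mathcal{M}' J_{\mathcal{M}_1}$, which is \ref{M1 and N'} one step up. It thus identifies $E_0$ with the dual (normalized inverse) of the minimal expectation $F_0: \mathcal{M}_1\to\mathcal{M}$.

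The core step is showing that, on the relative commutants, the canonical traces agree under these identifications. Concretely, we need that for $X \in \mathcal{M}'\cap\mathcal{M}_1$ the trace $F_0(X)$ coincides with the trace given by the dual expectation $F_0'(X)$, where $F_0' = [\mathcal{M}_1:\mathcal{M}]^{-1} F_0^{-1}$, and similarly one level lower for $\mathcal{E}_0$ versus $\mathcal{E}_0'$ on $\mathcal{A}$. This is the main obstacle. I would prove it using the known fact (Longo / Kosaki–Longo) that for minimal expectations the two traces $\mathcal{E}_0|_{\mathcal{N}'\cap\mathcal{M}}$ and $\mathcal{E}_0'|_{\mathcal{N}'\cap\mathcal{M}}$ coincide. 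This is the type $\mathrm{III}$ analogue of $\mathrm{Tr}_{\mathcal{M}} = \mathrm{Tr}_{\mathcal{N}'}$ on the relative commutant in the $\mathrm{II}_1$ case, and it is exactly where minimality enters. The expected proof goes through the spatial derivative: $\mathcal{E}_0'(A)$ for $A$ in the relative commutant is computed via $\mathcal{E}_0^{-1}$ and the Pimsner–Popa/Markov relation of Proposition \ref{Index inv}. Minimality then forces the Radon–Nikodym density between the two traces on $\mathcal{A}$ to be trivial: a non-trivial density would allow perturbing $\mathcal{E}_0$ to lower the index, via the direct-sum formula \ref{index sum} applied to the minimal central projections of $\mathcal{A}$.

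With this in hand we chain the identifications. For $A \in \mathcal{A}$, we have $\mathcal{E}_0(A) = \mathcal{E}_0'(A) = F_0(j_{\mathcal{M}}(A))$. Applying the same equality to $F_0$ gives $F_0'(j_{\mathcal{M}}(A))$, and $j_{\mathcal{M}_1}$ converts this into $E_0(\Gamma(A))$. Complex conjugation from the antilinear maps drops out, since the traces are real on self-adjoint elements. This yields $\mathrm{Tr}_{\mathcal{B}}(\Gamma(A)) = \mathrm{Tr}_{\mathcal{A}}(A)$.
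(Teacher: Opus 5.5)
Your proof is correct, but it takes a longer road than the paper. The paper argues purely by transport of structure. Restriction to the relative commutant puts normal conditional expectations in bijection with normal states on $\mathcal{A}$, and the minimal expectation corresponds to the canonical trace. The conjugation preserves the index, so minimal goes to minimal and canonical trace to canonical trace. The paper never compares $\mathcal{E}_0$ with its dual $\mathcal{E}_0'$.

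Spelled out, this is immediate. $\Gamma=\mathrm{Ad}(J_{\mathcal{M}_1}J_{\mathcal{M}})$ is unitarily implemented, and because $J_{\mathcal{M}}\mathcal{N}J_{\mathcal{M}}=\mathcal{M}_1'$ and $J_{\mathcal{M}_1}\mathcal{M}'J_{\mathcal{M}_1}=\mathcal{M}_2$, it maps the inclusion $\mathcal{N}\subset\mathcal{M}$ onto $\mathcal{M}_1\subset\mathcal{M}_2$. Hence $\Gamma\circ\mathcal{E}_0\circ\Gamma^{-1}$ is minimal in $C(\mathcal{M}_2,\mathcal{M}_1)$, equals $E_0$ by uniqueness, and $E_0(\Gamma(A))=\Gamma(\mathcal{E}_0(A))=\mathcal{E}_0(A)$.

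In your two-step version, the key lemma is needed only because you push $\mathcal{E}_0'$ rather than $\mathcal{E}_0$ through $j_{\mathcal{M}}$. The map $j_{\mathcal{M}}\circ\mathcal{E}_0\circ j_{\mathcal{M}}$ is already the minimal element of $C(\mathcal{M}',\mathcal{M}_1')$, i.e.\ $F_0'$. Your second step then closes the chain with no further input.

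What your detour buys is a finer fact. A single conjugation $j_{\mathcal{M}}$ already intertwines (up to complex conjugation) the canonical trace of $\mathcal{N}\subset\mathcal{M}$ on $\mathcal{A}$ with the canonical trace of the next inclusion $\mathcal{M}\subset\mathcal{M}_1$ on $\mathcal{M}'\cap\mathcal{M}_1$. That gives trace compatibility at each step of the tower, not only under the two-step shift. The price is the characterization of minimality by $\mathcal{E}_0|_{\mathcal{A}}=\mathcal{E}_0'|_{\mathcal{A}}$. This is a genuinely deeper theorem that must be cited, since your perturbation sketch via the direct-sum formula is not by itself a proof.
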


\begin{proof}
    The restriction of a conditional expectation on the relative commutant yields a state on it, as evident from \ref{expectation scalar}: this one-to-one correspondence relates normal expectations with normal states on the relative commutant \cite{Longo}. The minimal conditional expectation corresponds to the canonical trace and we can consequently associate an index to each trace. The canonical conjugation, being an anti-isomorphism, preserves the index; as a result, it must map the canonical trace into a canonical trace.
\end{proof}

A theorem directly implying \ref{Preserved trace main} is present in \cite{Longo3} and it hold for both implications. Finally, we have

\begin{equation}
    \begin{split}
        E_0 \big( e \Gamma(A) \big) &= [\mathcal{M}:\mathcal{N}]^{-1} \mathrm{Tr}_{\mathcal{B}}\big( \Gamma(A) \big) = [\mathcal{M}:\mathcal{N}]^{-1} \mathrm{Tr}_{\mathcal{A}}\big( A \big) \\
        &= [\mathcal{M}:\mathcal{N}]^{-1} \mathcal{E}_0(A)
    \end{split}
    \label{teleport}
\end{equation}
providing a generalization for the type $\mathrm{III}$ case of the formula \ref{fResult}. From Alice's point of view it holds for any state at her disposal: following the same discussion in \cite{Verlinde}, we can choose instead of $A$ any operator in the form $X^* A X$, with $X$ being an isometry. $X$ has the role of creating a state $\ket{X}$ in Alice's GNS Hilbert space and $A$ is a generic observable; when the teleportation takes place, Bob applies $\Gamma$ to $X^* A X$. From Bob's point of view, the state induced by $E_0$ also corresponds to the maximally mixed state in his GNS Hilbert space. So Bob needs to perform an operation that turns the state of the system is in into his maximally entangled state. Hence the formula above, although generalizing Markov's property to the type $\mathrm{III}$ case, does not directly implement the protocol. In fact, conditional expectations are states only when restricted to the relative commutants. Let us consider a state vector $\ket{\Psi}$ in the largest GNS Hilbert space such that, when restricted on $\mathcal{A}$, yields the canonical trace: $\bra{\Psi} A \ket{\Psi} = \mathrm{Tr}_{\mathcal{A}}(A) = \mathcal{E}_0(A)$. To have $E_0$ appear, we apply the Jones projection $e_{\mathcal{M}_1}$ to such vector, so that

\begin{equation}
    \begin{split}
        \bra{\Psi} e_{\mathcal{M}_1} \Gamma (A) e_{\mathcal{M}_1} \ket{\Psi} &= \bra{\Psi} E_0 \big( \Gamma (A) \big) e_{\mathcal{M}_1} \ket{\Psi} = \mathcal{E}_0 (A) \bra{\Psi} e_{\mathcal{M}_1} \ket{\Psi} \\
        &= ||\ket{e_{\mathcal{M}_1} \Psi}||^2 \bra{\Psi} A \ket{\Psi} \text{.}
    \end{split}
    \label{teleport final}
\end{equation}
If we prove that $||\ket{e_{\mathcal{M}_1} \Psi}||^2 = [\mathcal{M}:\mathcal{N}]^{-1}$, we have successfully implemented the desired type $\mathrm{III}$ generalization of the protocol described in \cite{Verlinde}. That is easily realized if we choose $\bra{\Psi} \cdot \ket{\Psi} = \Phi \circ \tilde{E}_0(\cdot)$, according to \ref{Index inv}, where $\tilde{E}_0: \mathcal{M}_3 \rightarrow \mathcal{M}_2$ is the corresponding minimal conditional expectation and $\Phi$ a suitable state. When restricted to Alice's algebra, $\tilde{E}_0(\cdot)$ can yield $\mathrm{Tr}_{\mathcal{A}}(\cdot)$, since we still have freedom over $\Phi$\footnote{Since $\left. \tilde{E}_0 \right|_{\mathcal{A}} = \mathrm{Id}$, we can pick a state $\Phi$ that yields $\mathrm{Tr}_{\mathcal{A}}$ when restricted on $\mathcal{A}$.}. Notice that $e_{\mathcal{M}_1} \not\in \mathcal{B}$, so we need a third party in order to achieve this operation, which however consists in just one step which is independent of the observable $A$ in question.

So far, the discussion applies generally, without specific reference to charged black holes.

\subsection{DHR superselection sectors}

\subsubsection{Superselection rules}

In quantum physics, we often have situations where only incoherent superposition of some states is possible \cite{Superselection}. As a pedagogical example \cite{Giulini}, consider two different total-spin states $\ket{s=1}$ and $\ket{s=1/2}$. We know that superpositions between such states are forbidden and the reason is that they transform differently under a $2 \pi$-rotation:

\begin{equation}
    R(2 \pi) \frac{1}{\sqrt{2}} \big( \ket{s=1} + \ket{s=1/2} \big) = \frac{1}{\sqrt{2}} \big( \ket{s=1} - \ket{s=1/2} \big) \text{.}
\end{equation}
If we still try and superposing those states into $\ket{\psi} = \alpha \ket{s=1} + \beta \ket{s=1/2}$, we need to impose $\bra{\psi} A \ket{\psi} = \bra{\psi} R(2 \pi)^* A R(2 \pi) \ket{\psi}$, since a $2 \pi$-rotation should not change the results of experiments:

\begin{equation}
    \begin{split}
        |\alpha|^2 \langle A \rangle_{s=1} + |\beta|^2 \langle A \rangle_{s=1/2} + 2 \mathfrak{Re} \big( \overline{\alpha} \beta \bra{s=1} A \ket{s=1/2} \big) \\
        = |\alpha|^2 \langle A \rangle_{s=1} + |\beta|^2 \langle A \rangle_{s=1/2} - 2 \mathfrak{Re} \big( \overline{\alpha} \beta \bra{s=1} A \ket{s=1/2} \big)
    \end{split}
    \label{superselection}
\end{equation}
for any observable\footnote{Notice that this holds for \textit{observables}, i.e. gauge-invariant operators.} $A$ of our system and arbitrary $\alpha$, $\beta$. As a consequence, a superselection rule takes place, telling us that the constraint $\bra{s=1} A \ket{s=\frac{1}{2}} = 0$ must hold for any observable $A$. This means that if we try to coherently superpose $\ket{s=1}$ and $\ket{s=1/2}$, we actually end up with the equivalent description of a mixed state $\rho = |\alpha|^2 \ket{s=1} \bra{s=1} + |\beta|^2 \ket{s=1/2} \bra{s=1/2}$ for the two possible outcomes, as evident from

\begin{equation}
    \bra{\psi} A \ket{\psi} = |\alpha|^2 \langle A \rangle_{s=1} + |\beta|^2 \langle A \rangle_{s=1/2} \text{,}
\end{equation}
which in algebraic terms tells us that the state $\psi$ is a convex combination of $s=1$ and $s=1/2$. As a result, the superposition is actually classical\footnote{Trying to superpose two different total-spin states is equivalent to stating our ignorance about the actual total-spin state of the system.} or \textit{incoherent} \cite{Decoherence}. This also tells us that the representations of the algebra of observables $\mathcal{A}$ arising from those states lie in inequivalent Hilbert spaces and can be split into a direct sum.

This phenomenon happens in general when we deal with representations of symmetries. The line of reasoning is analogous to the one sketched above. Another example discussed by Weinberg \cite{Weinberg} is the case of the Galilean group in non-relativistic quantum mechanics, where we cannot superpose states of different mass. Since representations are generally projective \cite{Wigner}, we are allowed to have a phase relating $U(g_1 g_2)$ and $U(g_1)U(g_2)$. If we consider boosts generated by $\vec{K}$ and translations generated by $\vec{P}$, it holds that

\begin{equation}
    e^{i \vec{v} \cdot \vec{K}} e^{-i \vec{a} \cdot \vec{P}} = e^{i\frac{M}{2} \vec{a} \cdot \vec{v}} e^{i (\vec{v} \cdot \vec{K} - \vec{a} \cdot \vec{P})}
\end{equation}
forbidding coherent superpositions of states with different mass $M$.

\subsubsection{Motivation from short-range interactions and charges}

When we deal with theories carrying charges \cite{Aharonov} \cite{Strocchi}, we also deal with superselection sectors. For instance, already at the non-relativistic level gauge transformations $\Lambda \in U(1)$ act on charged particle states $\ket{\psi}$ as

\begin{equation}
    \ket{\psi} \rightarrow  e^{i q \Lambda(\vec{x})} \ket{\psi} \text{.}
\end{equation}
In QFT, the arising inequivalent representations can be connected by suitable intertwiners which form the \textit{field algebra} $\mathcal{F}(\mathcal{O})$. The algebra of observables $\mathcal{A}(\mathcal{O}) = \mathcal{F}(\mathcal{O})^{\alpha_G}$ is interpreted as the fixed-points subalgebra of $\mathcal{F}(\mathcal{O})$ with respect to the action $\alpha_G$ of the gauge group $G$.

The DHR treatment of superselection sectors focuses on charges associated with short-ranged interactions \cite{DHR}. The electric charge, due to Gauss law, can always be detected far away from the source, while short-ranged interactions, such as the one yielding the Yukawa potential $V \propto - \frac{e^{-Mr}}{r}$ \cite{Yukawa}, decreases rapidly far from the source, allowing for a description equivalent to the vacuum one. An explicit example of such theory is given by the pseudoscalar-coupling Lagrangian density \cite{Gell-Mann}:

\begin{equation}
    \mathcal{L} = \frac{1}{2} \partial_{\mu} \pi^a \partial^{\mu} \pi_a - \frac{1}{2} M^2 \pi^a \pi_a + \overline{N} (i \slashed{\partial} - m) N - i g \overline{N} \gamma^5 \tau^a \pi_a N
\end{equation}
where $N = (p, n)^T$ denotes nucleon field doublet, $\pi^a$ are the pion fields, $\tau^a$ are the Pauli matrices generating $SU(2)$ and we assume exact isospin symmetry, $m_{\text{proton}} = m_{\text{neutron}} = m$. This interaction gives rise to a Yukawa potential \cite{Nucleons} \cite{Peskin}. Generally, interactions mediated by massive particles give rise to short-ranged potentials.

\subsubsection{DHR theory and statistical dimension}

The formal treatment of superselection sectors in Algebraic Quantum Field Theory has been initiated by Doplicher, Haag and Roberts \cite{DHR}. The general assumption is that the charges are short-ranged: if $\pi_0$ is the vacuum GNS representation of the algebra of observables, we consider representations $\pi$ constructed from charged states (\textit{charged sectors}) so that the condition

\begin{equation}
    \left. \pi_0 \right|_{\mathcal{A}(\mathcal{O}')} \cong \left. \pi \right|_{\mathcal{A}(\mathcal{O}')}
\end{equation}
is fulfilled for a sufficiently large double cone $\mathcal{O}$\footnote{Double cone here means a causal diamond.}; namely, there is a unitary $V$ such that

\begin{equation}
    \pi_0 (A) = V^* \pi (A) V \text{.}
\end{equation}
For our purposes, we could also identify $\mathcal{A}(\mathcal{O})$ with $\pi_0 \big(\mathcal{A}(\mathcal{O}) \big)$. This naturally suggests the definition of a map

\begin{equation}
    \rho (A) \eqdef V^* \pi(A) V
\end{equation}
which is the identity in $\mathcal{U}'$ and is non-trivial in $\mathcal{U}$. We say that $\rho$ is \textit{localized} in $\mathcal{U}$.

If Haag duality holds, then for any $A \in \mathcal{A}(\mathcal{O})$ and $B \in \mathcal{A}(\mathcal{O}')$,

\begin{equation}
    \rho (A) B = \rho(AB) = \rho(BA) = B \rho(A) \implies \rho \big( \mathcal{A}(\mathcal{O}) \big) \subseteq \mathcal{A}(\mathcal{O}')' = \mathcal{A}(\mathcal{O})
\end{equation}
thus $\rho$ is an endomorphism. Concretely, we have

\begin{equation}
    \pi \cong \pi_0 \circ \rho
\end{equation}
and in particular $\Psi_{\rho} = \Omega \circ \rho$ for any charged state $\Psi_{\rho}$: $\rho$ can be interpreted as "creating" the charge.

Before moving on, let us present the physical meaning of the endomorphisms $\rho$ through an example \cite{Casini}. If we consider a Dirac field, physical observables are constructed taking bilinear in the fields so that they commute at spacelike-separated regions. The algebra of observables is thus the algebra $\mathcal{A}$ generated by an even number of Dirac fields $\mathbb{I}$, $\psi(x) \psi(y)$, $\psi^*(x) \psi(y)$, $\psi^*(x) \psi^*(y)$ and so on, spread with test functions having support inside $\mathcal{O}$. Given a charged\footnote{The charge here is the fermion number.} state $\ket{\Psi} = V_{\mathcal{O}} \ket{\Omega}$, with $V_{\mathcal{O}}^* \eqdef \int_{\mathcal{O}} d^d x f(x) \left( \psi^*(x) + \psi(x) \right)$ and $f(x)$ a real spinor supported in $\mathcal{O}$, any expectation value $\bra{\Psi} A \ket{\Psi}$ in the 1-fermion superselection sector can be equivalently computed as $\bra{\Omega} V_{\mathcal{O}}^* A V_{\mathcal{O}} \ket{\Omega} = \bra{\Omega} \rho_{\mathcal{O}}(A) \ket{\Omega}$ in the vacuum representation, defining $\rho_{\mathcal{O}}(\cdot) \eqdef V_{\mathcal{O}}^* (\cdot) V_{\mathcal{O}}$.
A similar reasoning can be adapted to curved spacetimes, where now $\mathcal{O}$ represents a black hole region. If the state inside $\mathcal{O}$ is charged, then there is a suitable endomorphism $\rho$ localized in the black hole which describes the charged state in terms of the vacuum representation (assuming a suitable \textit{vacuum} can be defined in such geometry). If during the evaporation the charge inside the black hole changes due to particle production, its interior may carry a different representation $\rho'$. This models the situation described above for the inclusion $\mathcal{N} \subset \mathcal{M}$, where $\mathcal{N}$ and $\mathcal{M}$ describe two differently-charged sectors of field algebra inside the horizon.

There may be some situations where we have the same charged state but differently distributed in space. Given two localized endomorphisms $\rho_1$ and $\rho_2$, they are said to be \textit{equivalent} ($\rho_1 \sim \rho_2$) if $\pi_0 \circ \rho_1 \cong \pi_0 \circ \rho_2$, that is there exists a unitary $U$ such that $\rho_1 = \mathrm{Ad}_U \circ \rho_2$. For instance, two particle states with the same charge localized in spacelike-separated regions correspond to equivalent localized endomorphisms: $U$ would be a translation in such scenario. The operator

\begin{equation}
    \epsilon_{\rho} \eqdef U^* \rho(U)
\end{equation}
has the role of swapping $\rho \circ \rho'$ and $\rho' \circ \rho$, i.e. it acts as a unitary representation of the permutation group. Indeed,

\begin{equation}
    \epsilon_{\rho} \big( \rho \circ \rho' (A) \big) \epsilon_{\rho}^* = U^* \rho(U) \rho \big( \rho'(A) \big) \rho(U^*) U = U^* \rho \big( U \rho'(A) U^* \big) U = \rho' \circ \rho (A) \text{.}
\end{equation}
It can be proven that, for $\rho$ irreducible \cite{DHR}, we can introduce a left inverse of the endomorphism $\rho$, namely a map $\varphi$ such that $\varphi \circ \rho = \mathrm{Id}$, satisfying the property

\begin{equation}
    \varphi(\epsilon_{\rho}) \in \{ 0 \} \cup \Big\{ \pm \frac{1}{d} \ \bigg| \ d \in \mathbb{N} \Big\} \text{.}
\end{equation}
The quantity $d = |\varphi(\epsilon_{\rho})|^{-1} \in \mathbb{N} \cup \{ \infty \}$ is called the \textit{statistical dimension} of the superselection sector $\rho$; the name comes form the fact that it coincides with the dimension of the representation of the permutation group associated with $\rho$.

The following result explicates a link between statistical dimension and Jones index, which (as shown earlier) provides an information-theoretic nature for the teleportation protocol.

\begin{Proposition}[\textbf{Index-statistics theorem}]
    The following equality holds:
    
    \begin{equation}
        [\mathcal{A}(\mathcal{O}):\rho \big( \mathcal{A}(\mathcal{O}) \big)] = \big( d(\rho) \big)^2
        \label{index-statistics}
    \end{equation}
\end{Proposition}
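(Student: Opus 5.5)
The plan is to follow Longo's original argument and identify both sides of \eqref{index-statistics} with a single scalar: the value of a canonical conditional expectation on the statistics operator. Set $\mathcal{M} \eqdef \mathcal{A}(\mathcal{O})$ and $\mathcal{N} \eqdef \rho(\mathcal{A}(\mathcal{O}))$. By Haag duality and the localization of $\rho$, $\mathcal{N} \subset \mathcal{M}$ is an inclusion of type $\mathrm{III}$ factors. A left inverse $\varphi$ of $\rho$ determines a conditional expectation $\mathcal{E}_\varphi \eqdef \rho\circ\varphi \in C(\mathcal{M},\mathcal{N})$. Conversely, every $\mathcal{E} \in C(\mathcal{M},\mathcal{N})$ gives a left inverse $\rho^{-1}\circ\mathcal{E}$. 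This bijection lets us translate DHR data into Kosaki index data. For $\rho$ irreducible, $\mathcal{N}'\cap\mathcal{M} = \mathbb{C}I$. By the correspondence between expectations and states on the relative commutant used in Proposition \ref{Preserved trace main}, $C(\mathcal{M},\mathcal{N})$ then consists of a single element $\mathcal{E}_0$, and $[\mathcal{M}:\mathcal{N}] = \mathrm{Ind}(\mathcal{E}_0)$. The unique left inverse $\varphi$ is the DHR one.

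The first step is the case $d(\rho)<\infty$. I will use the DHR conjugate sector $\bar\rho$ and intertwiners $R \in (\mathrm{Id}, \bar\rho\rho)$, $\bar R \in (\mathrm{Id}, \rho\bar\rho)$ satisfying the conjugate equations $\bar R^* \rho(R) = d^{-1}I$ and $R^*\bar\rho(\bar R) = d^{-1}I$, normalized so that $\varphi(X) = R^*\bar\rho(X)R$. From these I build the Jones projection of $\rho\bar\rho(\mathcal{M}) \subset \rho(\mathcal{M})$, namely $e \eqdef \rho(RR^*)$, and check the defining relation $e\,\rho(X)\,e = \mathcal{E}$-type$(\cdot)\,e$ using the conjugate equations. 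This exhibits $\rho\bar\rho(\mathcal{M}) \subset \rho(\mathcal{M}) \subset \mathcal{M}$ as a downward basic construction.

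The second step is to compute $\mathcal{E}_0(e_{\mathcal{N}_1})$, which Proposition \ref{Index inv} identifies with $[\mathcal{M}:\mathcal{N}]^{-1}$. With $e_{\mathcal{N}_1} = \bar R\bar R^*$, suitably normalized by $d$ so that it is a projection, we get $\mathcal{E}_0(\bar R\bar R^*) = \rho(\varphi(\bar R\bar R^*))$. Using $\varphi(X) = R^*\bar\rho(X)R$ together with the conjugate equations, this evaluates to $d^{-2}$ up to the normalization. Equivalently, I can route the computation through the statistics operator, since $\varphi(\epsilon_\rho) = \pm d^{-1}$ and $\epsilon_\rho$ is expressible through $R$ and $\bar R$. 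Either way the result is $[\mathcal{M}:\mathcal{N}] = d(\rho)^2$.

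The third step handles $d(\rho) = \infty$. If the index were finite, the Kosaki dual expectation $\mathcal{E}_0'$ would produce, via Longo's canonical endomorphism $\gamma = j_{\mathcal{N}}j_{\mathcal{M}}$, a conjugate $\bar\rho \eqdef \rho^{-1}\gamma$ together with intertwiners solving the conjugate equations with $d^2 = \mathrm{Ind}(\mathcal{E}_0)$. This would contradict $d = \infty$. Hence both sides are infinite. The main obstacle is exactly this construction: showing that $\rho^{-1}\gamma$ is again a localized, transportable DHR endomorphism, and that the isometries $V\in(\mathrm{Id},\gamma)$ and $W\in(\mathrm{Id},\gamma|_{\mathcal{N}})$ arising from finite index give the properly normalized $R,\bar R$. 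For this step I would rely on Longo's result relating the Pimsner--Popa bound to the existence of such isometries, taking the normalization $W^*\gamma(V)$ with value $[\mathcal{M}:\mathcal{N}]^{-1/2}$ as the key identity.
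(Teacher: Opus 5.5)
Your overall route is Longo's, which is all the paper offers (it gives no argument beyond pointing to Longo), but two steps do not go through as written.

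\textbf{Step 1 uses the wrong Jones projection.} $\rho(RR^*)$ is not the Jones projection of $\rho\bar\rho(\mathcal{M})\subset\rho(\mathcal{M})$. It lies in $\rho(\mathcal{M})$, so $\langle\rho(\mathcal{M}),\rho(RR^*)\rangle=\rho(\mathcal{M})\neq\mathcal{M}$. Moreover, $RR^*\bar\rho(y)RR^*=\bar\rho\rho(\varphi(y))\,RR^*$ shows that it commutes with $\rho\bar\rho\rho(\mathcal{M})$. It therefore implements $\rho\bar\rho\rho(\mathcal{M})\subset\rho\bar\rho(\mathcal{M})$, one floor lower in the tower. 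The right element is the one you switch to in step 2, $e_{\mathcal{N}_1}=\bar R\bar R^*$. Since $\bar R\in(\mathrm{Id},\rho\bar\rho)$, it lies in $\rho\bar\rho(\mathcal{M})'\cap\mathcal{M}$ and satisfies $\bar R\bar R^*\rho(x)\bar R\bar R^*=\rho\bar\rho(\bar R^*\rho(x)\bar R)\,\bar R\bar R^*$. With your isometric normalization it is already a projection and must not be rescaled by $d$: one gets exactly $\rho\varphi(\bar R\bar R^*)=\rho\big(R^*\bar\rho(\bar R)\,\bar\rho(\bar R)^*R\big)=d^{-2}$.

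Proposition \ref{Index inv} also presupposes a finite-index basic construction, and the Jones relation alone does not give you that. The clean fix bypasses the tower. The conjugate equations yield $x=d^2\,\bar R^*\,\rho\varphi(\bar R x)$ for all $x\in\mathcal{M}$. So $u=d\bar R^*$ is a one-element Pimsner--Popa basis, $x=u\,\rho\varphi(u^*x)$, and $\mathrm{Ind}(\rho\varphi)=uu^*=d^2$, with finiteness included.

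Finally, the link to the paper's definition $d=|\varphi(\epsilon_\rho)|^{-1}$ is not that $\epsilon_\rho$ is built from $R,\bar R$. Rather, $\bar R$ can be chosen as $\pm\,\epsilon(\bar\rho,\rho)R$, with $\epsilon(\bar\rho,\rho)$ the DHR operator exchanging $\bar\rho$ and $\rho$. This gives $\bar R^*\rho(R)=\pm\varphi(\epsilon_\rho)$. Your normalization in step 1 silently assumes this DHR result, so you should cite it.

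\textbf{The case $d=\infty$ is left unproved.} This case carries the real content of the theorem. The obstacle you name is extending $\rho^{-1}\gamma$, which is defined only on $\mathcal{A}(\mathcal{O})$, to a localized transportable endomorphism. You hand it back to Longo, so here the proposal is no more than the paper's citation.

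You can avoid the conjugate construction entirely, as follows.
\begin{itemize}
\item Let $\pi$ be the representation of $S_n$ generated by $\rho^{k}(\epsilon_\rho)$, $k=0,\dots,n-2$; it sits in $\rho^n(\mathcal{M})'\cap\mathcal{M}$.
\item Every $\sigma\in S_n$ either fixes the first point, so $\pi(\sigma)=\rho(\pi_{n-1}(\sigma'))$, or equals $a s_1 b$ with $s_1$ the first transposition and $a,b$ fixing that point, so $\pi(\sigma)=\rho(A)\,\epsilon_\rho\,\rho(B)$.
\item If $\lambda=\varphi(\epsilon_\rho)=0$, the bimodule property and induction give $\varphi^n(\pi(\sigma))=\delta_{\sigma,e}$.
\item Hence $P_n=\frac{1}{n!}\sum_\sigma\pi(\sigma)$ is a non-zero projection with $\rho^n\varphi^n(P_n)=\frac{1}{n!}$.
\item If $\mathrm{Ind}(\rho\varphi)<\infty$, multiplicativity of the Kosaki index gives $\mathrm{Ind}(\rho^n\varphi^n)=\mathrm{Ind}(\rho\varphi)^n$.
\item The Pimsner--Popa inequality $\rho^n\varphi^n(P_n)\geq\mathrm{Ind}(\rho\varphi)^{-n}P_n$ then forces $\mathrm{Ind}(\rho\varphi)^n\geq n!$ for every $n$, which is absurd.
\end{itemize}
Since $\rho\varphi$ is the only expectation in the irreducible case, both sides equal $\infty$. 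This uses only the DHR left inverse, which exists for infinite statistics too, and standard index facts.
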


The proof can be found in \cite{Longo}; the reader may be heuristically convinced about the validity of \ref{index-statistics} comparing the methods used by Kosaki in its definition of the type-$\mathrm{III}$ index and the DHR theory of the statistical dimension.

The statistical dimension can be interpreted at a field-theoretic level as follows. In ordinary QFT, the representations of the permutation group arising in spacetime dimensions greater than two\footnote{Because of topological reasons, in two dimensions we may have a sort of "in-between" statistics, namely $\ket{p_1, p_2} = e^{i \theta} \ket{p_2, p_1}$, corresponding to the so-called \textit{anyons}.} are one-dimensional, as they just amount to a sign change: $\ket{\vec{p}_1, \vec{p}_2} = \ket{\vec{p}_1, \vec{p}_2}$ for bosons and $\ket{\vec{p}_1, \vec{p}_2} = - \ket{\vec{p}_1, \vec{p}_2}$ for fermions. When representations are higher-dimensional, we refer to them as \textit{parastatistics}. A \textit{parabosonic} field is defined as \cite{Parastatistics}

\begin{equation}
    \phi (x) = \sum_{i} \phi^{(i)} (x)
    \label{par_field}
\end{equation}
with the fields $\phi^{(i)}$ satisfying $[\phi^{(i)}(x), \phi^{(i)}(y)] = 0$ and $\{\phi^{(i)}(x), \phi^{(j)}(y) \} = 0$, for $i \neq j$ and $x$, $y$ spacelike-separated. A \textit{parafermionic} field is analogously defined, switching commutators and anticommutators. When acting on their respective Fock spaces, these fields give rise to higher-dimensional representations of the permutation group: because the field operators neither commute nor anticommute, $\phi(x_1) \phi(x_2) \ket{\Omega}$ and $\phi(x_2) \phi(x_1) \ket{\Omega}$ lie in different rays of the Hilbert space. The operator $\epsilon$ introduced above acts on Fock states as $\epsilon \phi(x_1) \phi(x_2) \ket{\Omega} = \phi(x_2) \phi(x_1) \ket{\Omega}$.

\subsection{The index for charged black holes}

We can take \ref{teleport} further and specify the value of $[\mathcal{M}:\mathcal{N}]$ in the case of charged black holes discussed above. From the index-statistics theorem, we know that

\begin{equation}
    [\mathcal{A}(\mathcal{O}):\rho \big( \mathcal{A}(\mathcal{O}) \big)] = \big( d(\rho) \big)^2
\end{equation}
with $d(\rho)$ being the statistical dimension of the superselection sector $\rho$.
Since we are discussing an evaporation process, it is natural to assume a charge loss formalized by a change in the superselection sector yielding $\rho' \big( \mathcal{A}(\mathcal{O}) \big) \subset \rho \big( \mathcal{A}(\mathcal{O}) \big)$. This is consistent with our previous notation, where $\mathcal{M}$ and $\mathcal{N}$ are respectively replaced by $\rho' \big( \mathcal{A}(\mathcal{O}) \big)$ and $\rho \big( \mathcal{A}(\mathcal{O}) \big)$.

\begin{Proposition}
    Assuming $\rho' \big( \mathcal{A}(\mathcal{O}) \big) \subset \rho \big( \mathcal{A}(\mathcal{O}) \big)$, the following relation holds:

    \begin{equation}
        [\rho \big( \mathcal{A}(\mathcal{O}) \big) : \rho' \big( \mathcal{A}(\mathcal{O}) \big)] = \bigg( \frac{d(\rho')}{d(\rho)} \bigg)^2
    \end{equation}
\end{Proposition}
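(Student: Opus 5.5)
The plan is to insert the intermediate algebra $\rho\big(\mathcal{A}(\mathcal{O})\big)$ into a chain of subfactor inclusions and then apply the index-statistics theorem \ref{index-statistics} twice. Concretely, I would consider the tower
\begin{equation}
    \rho'\big(\mathcal{A}(\mathcal{O})\big) \subset \rho\big(\mathcal{A}(\mathcal{O})\big) \subset \mathcal{A}(\mathcal{O}) \text{.}
\end{equation}
The inner inclusion is the hypothesis of the statement. The outer one holds because $\rho$ is an endomorphism of $\mathcal{A}(\mathcal{O})$, which requires Haag duality as discussed above. All three algebras are type $\mathrm{III}$ factors: $\mathcal{A}(\mathcal{O})$ is one, and $\rho$, $\rho'$ are injective normal endomorphisms, so their images are isomorphic to it.

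\textbf{Step 1 (multiplicativity).} I would show that the minimal index is multiplicative along this tower:
\begin{equation}
    [\mathcal{A}(\mathcal{O}) : \rho'(\mathcal{A}(\mathcal{O}))] = [\mathcal{A}(\mathcal{O}) : \rho(\mathcal{A}(\mathcal{O}))]\,[\rho(\mathcal{A}(\mathcal{O})) : \rho'(\mathcal{A}(\mathcal{O}))] \text{.}
\end{equation}
Let $\mathcal{E}_1$ and $\mathcal{E}_2$ be the minimal conditional expectations of the outer and inner inclusions. The inverse map satisfies $(\eta_1\circ\eta_2)^{-1} = \eta_2^{-1}\circ\eta_1^{-1}$. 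Using this together with the scalar nature of $\mathcal{E}^{-1}(I)$ from the first Proposition, one gets
\begin{equation}
    \mathrm{Ind}(\mathcal{E}_2\circ\mathcal{E}_1) = \mathrm{Ind}(\mathcal{E}_1)\,\mathrm{Ind}(\mathcal{E}_2) \text{.}
\end{equation}
This gives the inequality "$\le$" for the minimal indices. Equality additionally needs the composition of minimal expectations to be minimal. That is a known theorem of Kosaki--Longo (see \cite{Longo}, \cite{Longo1}), which I would quote rather than reprove.

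\textbf{Step 2 (apply index-statistics).} Applying the index-statistics theorem \ref{index-statistics} to $\rho'$ and to $\rho$, Step 1 becomes
\begin{equation}
    d(\rho')^2 = d(\rho)^2 \, [\rho(\mathcal{A}(\mathcal{O})) : \rho'(\mathcal{A}(\mathcal{O}))] \text{.}
\end{equation}
Dividing by $d(\rho)^2$, which is finite and at least $1$, gives the claim.

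Alternative route: in the spirit of Figure \ref{Minkowski}, I could instead factor $\rho' = \rho\circ\sigma$ for a localized endomorphism $\sigma$ carrying the emitted charge. This is consistent with $\rho'(\mathcal{A}(\mathcal{O}))\subset\rho(\mathcal{A}(\mathcal{O}))$. Since $\rho$ is an isomorphism onto its image, the index is preserved:
\begin{equation}
    [\rho(\mathcal{A}(\mathcal{O})):\rho\sigma(\mathcal{A}(\mathcal{O}))] = [\mathcal{A}(\mathcal{O}):\sigma(\mathcal{A}(\mathcal{O}))] = d(\sigma)^2 \text{.}
\end{equation}
Multiplicativity of the DHR statistical dimension, $d(\rho\sigma)=d(\rho)d(\sigma)$, then finishes the argument.

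The main obstacle is Step 1 (equivalently, multiplicativity of $d$ in the alternative route). One has to justify that the composed minimal expectation is again minimal, and handle the possibly infinite case. If $d(\rho')<\infty$, finiteness of both factors follows from $\mathrm{Ind}\ge 1$. If $d(\rho')=\infty$, the identity should be read in $[1,\infty]$, and I would restrict the statement to finite statistics.
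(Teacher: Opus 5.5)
Your proposal is correct and follows the paper's own argument. The paper likewise uses the tower $\rho'(\mathcal{A}(\mathcal{O}))\subset\rho(\mathcal{A}(\mathcal{O}))\subset\mathcal{A}(\mathcal{O})$, invokes multiplicativity of the index, and applies the index-statistics theorem to $\rho$ and $\rho'$; the paper merely asserts multiplicativity, whereas you note that $\mathrm{Ind}(\mathcal{E}_2\circ\mathcal{E}_1)=\mathrm{Ind}(\mathcal{E}_1)\,\mathrm{Ind}(\mathcal{E}_2)$ alone gives only ``$\le$'' and that equality rests on the Kosaki--Longo theorem that a composition of minimal expectations is minimal, which is a useful sharpening rather than a different route.
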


\begin{proof}
    The proof follows trivially from the index-statistics theorem and the property $[\mathcal{M}:\mathcal{N}] = [\mathcal{M}:\mathcal{R}][\mathcal{R}:\mathcal{N}]$ for an inclusion $\mathcal{N} \subset \mathcal{R} \subset \mathcal{M}$.
\end{proof}

The information-theoretic nature of the statistical dimension, which directly follows from the index-statistics theorem, uncovers a link between charge loss and information loss which has already been examined in \cite{Longo2}. Essentially, we may think of it as follows: the black hole, losing charge, performs the switch $\rho \rightarrow \rho'$ with the latter inequivalent to the former. Since $[\mathcal{M}:\mathcal{N}] > 1$, we have $d(\rho') > d(\rho)$ meaning $\rho' \big( \mathcal{A} (\mathcal{O}) \big ) \subset \rho \big( \mathcal{A} (\mathcal{O}) \big )$. This means that the endomorphism describing the superselection sector undergoes an information loss due to being more restricted than before.

The formula \ref{teleport final} can thus be completed as

\begin{equation}
        \bra{\Psi} e_{\mathcal{M}_1} \Gamma (A) e_{\mathcal{M}_1} \ket{\Psi} = \bigg( \frac{d(\rho)}{d(\rho')} \bigg)^2 \bra{\Psi} A \ket{\Psi} \text{.}
    \label{teleport charge}
\end{equation}
Furthermore, the statistical dimension can be related to the relative free $\mathcal{F}$ energy of a black hole \cite{Longo2}, which is quantized as $d(\rho) \in \mathbb{N}$; we have

\begin{equation}
    \mathcal{F}(\Omega||\Psi_{\rho}) = - \frac{1}{2} \beta^{-1} \log \big( d(\rho) \big)
\end{equation}
yielding

\begin{equation}
    \bra{\Psi} e_{\mathcal{M}_1} \Gamma (A) e_{\mathcal{M}_1} \ket{\Psi} = \frac{e^{2 \beta \mathcal{F}(\Omega||\Psi_{\rho'}) }}{e^{2 \beta \mathcal{F}(\Omega||\Psi_{\rho})}} \bra{\Psi} A \ket{\Psi} \text{.}
\end{equation}
The Jones index cannot take arbitrary values: it must belong to $\{ 4 \cos^2(\frac{\pi}{n}) \ | \ n \in \mathbb{N} \} \cup [4, \infty]$. An important constraint that needs to be imposed in order for the overall discussion to have physical meaning is $\mathcal{A}, \ \mathcal{B} \neq \mathbb{C} I$, so that the information transferred is non-trivial. The simple request of having a non-trivial relative commutant, which physically corresponds to having some actual information being evaporated out, forces $[\mathcal{M} : \mathcal{N}] > 4$ \cite{Jones} \cite{Kosaki}. This argument does not rely on DHR theory, but is rather a purely information-theoretic argument: given a generic physical scenario modelled by an inclusion $\mathcal{N} \subset \mathcal{M}$ with finite index (such as $\mathcal{N} \otimes I \subset \mathcal{N} \otimes M_2(\mathbb{C})$ for a qubit coupled to a type $\mathrm{III}$ environment), if we impose $\mathcal{A} \neq \mathbb{C} I$ then we automatically have a minimum index value possible. When the inclusion $\mathcal{N} \subset \mathcal{M}$ is realized by a DHR inlcusion, if a non-trivial amount of information is emitted from a sector $\rho$, then due to the previous bound the possible final sector $\rho'$ is not arbitrary if the localization of the region stays about the same: the new endomorphism $\rho'$ must contain a minimum amount of information less then $\rho$. As a result, the final charge of the state is not arbitrary either.
From a field-theoretic point of view, if we have a $d$-dimensional representation of the parastatistical field $\phi$ as in \ref{par_field} describing the algebra $\mathcal{M}$, then the new algebra $\mathcal{N}$ consists of fields living in a $d'$-dimensional representation with $d' > 2d$. The more information is being evaporated through Hawking radiation, the more "elementary" fields $\phi^{(i)}$ we need to construct the parastatistical field $\phi$. Naturally, there is a further constraint in the case of charged DHR states, coming from the fact that $d(\rho) \in \mathbb{N} \cup \{ \infty \}$, which discretizes the values of the index. Perhaps, there may be mechanisms involving the appearance of the "elementary" fields $\phi^{(i)}$ related to quantum gravity corrections. This may for instance be the case in non-unitary models, where a non-unitary map may relate different superselection sectors during the evolution.

Alternatively, if the black hole horizon has a quantized area, as conjectured by some quantum gravity models, the emitted information will be discretized, possibly yielding a finite-dimensional relative commutant. In such case, the protocol described above would be fitting too, although there is no guarantee the algebras stay type $\mathrm{III}$ in that context.

\section{Conclusion}
In this work, we presented a natural generalization for the quantum information retrieval protocol for type $\mathrm{III}$ inclusions of factors, based on the work of Verlinde and van der Heijden. The mathematical framework adopted exploits the notion of Jones index for the inclusions of type $\mathrm{III}$ von Neumann algebras, developed by Kosaki and Longo.

The formula obtained finds a natural setting in the adiabatic evaporation of charged black holes, which undergo a change in their superselection sector due to charge loss.
The non-triviality of Alice's and Bob's algebras leads to a constraint on the minimal charge emitted during an evaporation step, providing a new vision on the quantization of charge from information-theoretic arguments.

Future developments may involve multiple directions. Concerning possible generalizations, the protocol could be extended to the case of von Neumann algebras with non-trivial centers. From a more physical point of view, another viable development could take dynamics into account, which was neglected in this treatment.

Other general ideas could concern the investigation of parastatistical models coupled to gravity, to analyse explicitly the interplay between gravity and the statistical dimension. Moreover, the role played by embezzlement in the context of LOCC maps in QFT could lead to severe differences with respect to the type $\mathrm{I}$ case. More generally, finding other applications of the Jones index for quantum information tasks in QFT could shed a new light in relativistic quantum information theory.

\acknowledgments
I want to thank Roberto Longo for the many fruitful discussions we had during the development of this work. His help has been essential to the writing of this paper. I also want to thank Massimo Taronna for sparking my interest on the role played by von Neumann algebras in frontier physics and inspiring me to work on the topic of black hole information.

I also acknowledge support from the INFN Iniziativa Specifica QUAGRAP and from the European COST Actions BridgeQG CA23130, RQI CA23115 and CaLISTA CA21109.

\newpage
\appendix
\section{Appendix}
\subsection{Generalities on von Neumann algebras}

In this section, we shall introduce the concept of von Neumann algebras and the role played by them in QFT \cite{Von Neumann} \cite{Naimark} \cite{Bratteli-Robinson} \cite{Sorce} \cite{Yvangson}.

\begin{Definition}
    Given a subset $\mathcal{A} \in \mathcal{B}(\mathcal{H})$, we define $\mathcal{A}'$ to be $\mathcal{A}$'s \textnormal{commutant} if and only if
    \begin{equation}
        \mathcal{A}' = \{ A \in \mathcal{B}(\mathcal{H}) \ | \ AB=BA, \ \forall B \in \mathcal{A} \} \text{.}
    \end{equation}
\end{Definition}

\begin{Definition}
    A subset $\mathcal{A} \in \mathcal{B}(\mathcal{H})$ is called \textnormal{von Neumann algebra} if and only if $\mathcal{A} = \mathcal{A}''$.
\end{Definition}

\begin{Definition}
   Given a subalgebra $\mathcal{A}$ of $\mathcal{B}(\mathcal{H})$, we define its \textnormal{center} to be the set $Z = \mathcal{A} \cap \mathcal{A}'$.
\end{Definition}

\begin{Definition}
    A \textnormal{factor} is a von Neumann algebra $\mathcal{A}$ whose center is $\mathbb{C}I$.
\end{Definition}

Factors can be classified basing on a quantity $d(P) \in \mathbb{R}_0^+ \cup \{ \infty \}$ defined on a projection $P \in \mathcal{A}$. The quantity $d$ has the suitable properties of a dimension and indeed, in familiar cases, it coincides with the dimension of the subspace onto which $P$ projects the vectors: for instance, a minimal projector in $\mathcal{B}(\mathcal{H})$ is any projector whose range is one-dimensional (such as $\ket{\psi}\bra{\psi}$). For such reason, it is called \textit{relative dimension}.

\begin{Proposition}[\textbf{Classification of von Neumann factors}]
    Given the set $\mathfrak{D}$ of the values the function $d$ can assume, we can only have the following cases:
    \begin{itemize}
        \item the factor $\mathcal{A}$ is \textnormal{type $\mathrm{I}_n$}: $\mathfrak{D} = \{ k \lambda | k \in \mathbb{N}_0, k \leq n \}, \lambda > 0$;
        \item the factor $\mathcal{A}$ is \textnormal{type $\mathrm{I}_{\infty}$}: $\mathfrak{D} = \{ n \lambda | n \in \mathbb{N}_0 \cup \{ \infty \} \}, \lambda > 0$;
        \item the factor $\mathcal{A}$ is \textnormal{type $\mathrm{II}_1$}: $\mathfrak{D} = [0, \lambda], \lambda > 0$;
        \item the factor $\mathcal{A}$ is \textnormal{type $\mathrm{II}_{\infty}$}: $\mathfrak{D} = [0, \infty]$;
        \item the factor $\mathcal{A}$ is \textnormal{type $\mathrm{III}$}: $\mathfrak{D} = \{0, \infty \}$.
    \end{itemize}
    \label{Classification of factors}
\end{Proposition}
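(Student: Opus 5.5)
The plan is to follow the Murray--von Neumann dimension theory. First I would make the relative dimension precise. Two projections $P,Q\in\mathcal{A}$ are \emph{equivalent}, $P\sim Q$, if there is a partial isometry $V\in\mathcal{A}$ with $V^*V=P$ and $VV^*=Q$. We write $P\preceq Q$ if $P$ is equivalent to a subprojection of $Q$. A projection is \emph{finite} if it is not equivalent to a proper subprojection of itself.

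The first key step is the \emph{comparison theorem}: in a factor, any two projections satisfy $P\preceq Q$ or $Q\preceq P$. I would prove it by a Zorn argument on maximal families of mutually orthogonal pairs of equivalent subprojections. The step where factoriality enters is the following: if $P\mathcal{A}Q\neq\{0\}$, then nonzero subprojections $P_0\le P$ and $Q_0\le Q$ with $P_0\sim Q_0$ exist, obtained from the polar decomposition of some $PAQ\neq 0$. In a factor $P\mathcal{A}Q=\{0\}$ forces $P=0$ or $Q=0$, because the central support of $P$ is $I$.

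Next I would construct $d$ on the equivalence classes. I require $d$ to be faithful, to be countably additive on orthogonal families, to satisfy $d(P)=d(Q)$ iff $P\sim Q$, and to have $d(P)<\infty$ iff $P$ is finite. Such a $d$ is unique up to the scale $\lambda$. If $P$ is infinite, set $d(P)=\infty$. On finite projections, fix a reference finite projection $E\neq 0$. For any $P$, let $[P:E]$ be the supremal number of mutually orthogonal copies of $E$ that fit into $P$; this is well defined by comparison. Then refine $E$ by repeated halving, $E\sim E_1+E_2$ with $E_1\sim E_2$, and set $d(P)=\lim_n [P:E^{(n)}]\,2^{-n}$. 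Additivity, and the fact that the limit exists and does not depend on the choices, I would obtain from the standard estimates $[P:F][F:G]\le[P:G]\le([P:F]+1)[F:G]$. I expect this construction to be the main obstacle, since it is essentially the construction of the trace on a finite factor. If it becomes too heavy, I would invoke the existence of a faithful normal semifinite trace on semifinite factors and set $d(P)=\tau(P)$.

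The classification then proceeds by cases.

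\textbf{Case 1:} $\mathcal{A}$ has a minimal projection $P_0$. By comparison every finite projection is a finite orthogonal sum of copies of $P_0$. Hence $\mathfrak{D}=\{k\lambda\}$ with $\lambda=d(P_0)$. The range of $k$ is bounded by $n$ when $I$ is finite, giving type $\mathrm{I}_n$. When $I$ is infinite it is all of $\mathbb{N}_0\cup\{\infty\}$, giving type $\mathrm{I}_\infty$.

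\textbf{Case 2:} there is no minimal projection, but there are nonzero finite projections. The halving lemma then applies: every finite projection splits into two equivalent halves, obtained by comparing a subprojection with its complement and using a maximality argument. This makes the values $d(P)$ dense and dyadically divisible. Countable additivity, through $d(\sup P_n)=\sum d(P_n)$ for orthogonal families, together with comparison then fills in every real value between $0$ and $d(I)$. If $I$ is finite we normalize to $d(I)=\lambda$ and obtain $[0,\lambda]$, type $\mathrm{II}_1$. Otherwise $I$ is a countable sum of finite projections of comparable dimension, so $\mathfrak{D}=[0,\infty]$, type $\mathrm{II}_\infty$.

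\textbf{Case 3:} every nonzero projection is infinite. Then $\mathfrak{D}=\{0,\infty\}$, type $\mathrm{III}$.

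The cases are exhaustive and mutually exclusive, which proves the classification.
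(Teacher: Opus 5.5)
The paper does not prove this proposition. It is stated in the appendix as background, and the proof is left to the cited operator-algebra literature, so there is no argument in the paper to compare yours with. Your plan is the classical Murray--von Neumann proof behind that citation: the comparison theorem via Zorn and factoriality, then the relative dimension, then a case split on whether there are minimal projections and whether there are nonzero finite projections. It is correct in outline.

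Three points need tightening.

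\textbf{The type $\mathrm{I}$ case.} The dyadic construction of $d$ cannot come before the case split, because a minimal projection cannot be halved. In Case 1, define $d(P)=\lambda\,[P:P_0]$ directly; this is what your Case 1 argument actually uses. Keep the halving construction for Case 2.

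\textbf{Countable additivity in Case 2.} The counting arguments give existence of the limit, independence of the choices, and finite additivity. They do not give the countable additivity you need to reach non-dyadic values. That requires the separate statement that $d(Q_n)\to 0$ for finite projections $Q_n\downarrow 0$, which is the genuinely delicate part of Murray--von Neumann's theory. Your fallback to a faithful normal semifinite trace settles it at once, since normality of $\tau$ gives exactly this continuity. It is the step where you will really need the fallback.

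\textbf{Infinite projections.} Requiring that $d(P)=d(Q)$ imply $P\sim Q$ also for infinite projections amounts to saying that all infinite projections are equivalent. That needs an extra lemma. It holds for countably decomposable factors, e.g.\ on separable Hilbert spaces, which the paper tacitly assumes, but not in general. It plays no role in determining $\mathfrak{D}$, so you can drop it.

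For the same reason, in the type $\mathrm{II}_\infty$ case you need not write $I$ as a countable sum of finite projections. Finite partial sums of an infinite orthogonal family of copies of a finite projection already give arbitrarily large finite values.
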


If $\mathcal{A} = \mathcal{B}(\mathcal{H})$, we are always in the case of a factor of type $\mathrm{I}_D$ where $D = \mathrm{dim}(\mathcal{H})$. That is the case in quantum mechanics (even when we consider subsystems, given the tensor product factorization). A stronger statement actually holds:

\begin{Proposition}
    A factor $\mathcal{A}$ is type $\mathrm{I}$ if and only if it is isomorphic to $\mathcal{B}(\mathcal{H})$.
    \label{Factors}
\end{Proposition}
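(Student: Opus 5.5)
The plan is to prove the two implications separately. For the easy direction I use the dimension function of Proposition \ref{Classification of factors}. For the converse I build a system of matrix units out of a minimal projection.

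\emph{From $\mathcal{B}(\mathcal{K})$ to type $\mathrm{I}$.} Suppose $\mathcal{A} \cong \mathcal{B}(\mathcal{K})$ via a $*$-isomorphism. A $*$-isomorphism carries projections to projections and preserves Murray--von Neumann equivalence and orthogonality. It therefore preserves the relative dimension up to normalisation. In $\mathcal{B}(\mathcal{K})$ the relative dimension of a projection is, up to a constant $\lambda>0$, the dimension of its range. So $\mathfrak{D} = \{k\lambda\}$ with $k \le \dim\mathcal{K}$, or $k \in \mathbb{N}_0\cup\{\infty\}$ when $\dim\mathcal{K}=\infty$. By Proposition \ref{Classification of factors}, $\mathcal{A}$ is type $\mathrm{I}_D$ with $D=\dim\mathcal{K}$.

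\emph{From type $\mathrm{I}$ to $\mathcal{B}(\mathcal{K})$.} Suppose $\mathcal{A}\subset\mathcal{B}(\mathcal{H})$ is type $\mathrm{I}$. Then $\mathfrak{D}$ has a smallest positive value $\lambda$. Any projection $P\neq 0$ with $d(P)=\lambda$ is \emph{minimal}: it has no nonzero proper subprojection in $\mathcal{A}$. I then proceed in four steps.
\begin{enumerate}
\item Show $P\mathcal{A}P=\mathbb{C}P$. The von Neumann algebra $P\mathcal{A}P$ has $P$ as its only nonzero projection. Since a von Neumann algebra is the norm-closed span of its projections (spectral theorem), $P\mathcal{A}P=\mathbb{C}P$.
\item Use Zorn's lemma to choose a maximal family $\{P_i\}_{i\in I}$ of mutually orthogonal minimal projections, and show $\sum_i P_i = I$. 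If the complement $Q=I-\sum_i P_i$ were nonzero, the comparison theorem in a factor would give either $P\precsim Q$ or $Q\precsim P$. In both cases $Q$ dominates a projection equivalent to a minimal one, which would enlarge the family and contradict maximality.
\item Since $\mathcal{A}$ is a factor and all $P_i$ have the same dimension $\lambda$, they are all equivalent. Fix $i_0\in I$ and choose partial isometries $V_i\in\mathcal{A}$ with $V_i^*V_i=P_{i_0}$ and $V_iV_i^*=P_i$. Set $E_{ij}\eqdef V_iV_j^*$. These form a system of matrix units: $E_{ij}E_{kl}=\delta_{jk}E_{il}$, $E_{ij}^*=E_{ji}$, and $\sum_i E_{ii}=I$ strongly.
\item Expand a general $A\in\mathcal{A}$. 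We have $A=\sum_{i,j}E_{ii}AE_{jj}$, with the double sum converging strongly. Each term is
\begin{equation}
E_{ii}AE_{jj}=V_i\big(P_{i_0}V_i^*AV_jP_{i_0}\big)V_j^*=a_{ij}E_{ij}, \qquad a_{ij}\in\mathbb{C},
\end{equation}
where the scalar $a_{ij}$ comes from step 1.
\end{enumerate}

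To identify the algebra, set $\mathcal{K}_0=P_{i_0}\mathcal{H}$ and define
\begin{equation}
W:\mathcal{H}\to\ell^2(I)\otimes\mathcal{K}_0,\qquad W\xi \eqdef \big(V_i^*\xi\big)_{i\in I}.
\end{equation}
This $W$ is unitary, because $\sum_i V_iV_i^*=I$ and each $V_i^*$ maps $P_i\mathcal{H}$ isometrically onto $\mathcal{K}_0$. A direct computation gives $WE_{ij}W^*=e_{ij}\otimes I$, where $e_{ij}$ are the standard matrix units on $\ell^2(I)$.

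By step 4, $W\mathcal{A}W^*$ is contained in $\mathcal{B}(\ell^2(I))\otimes I$. It also contains all the $e_{ij}\otimes I$. Since $\mathcal{A}$ is strongly closed and the $e_{ij}$ generate $\mathcal{B}(\ell^2(I))$ as a von Neumann algebra, we get equality: $W\mathcal{A}W^*=\mathcal{B}(\ell^2(I))\otimes I$. Hence $\mathcal{A}\cong\mathcal{B}(\ell^2(I))$.

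\emph{Main obstacle.} The delicate points are the comparison theorem for projections in a factor, used in steps 2 and 3, and the strong convergence of the double sums. The comparison theorem I would quote, as underlying the classification in Proposition \ref{Classification of factors}. For the convergence, I would handle it by checking matrix elements against vectors in the ranges $P_i\mathcal{H}$ and using boundedness of $A$.
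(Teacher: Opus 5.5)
The paper does not prove this proposition. It is quoted in the appendix as a standard fact of Murray--von Neumann theory, so there is no argument of the paper to compare yours with. Your proof is the standard one: a minimal projection, a maximal orthogonal family of copies of it, matrix units, and the unitary $W$ onto $\ell^2(I)\otimes\mathcal{K}_0$. It is correct. It even gives the sharper spatial form $W\mathcal{A}W^*=\mathcal{B}(\ell^2(I))\otimes I$, which is the right way to read ``isomorphic to $\mathcal{B}(\mathcal{H})$'' when $\mathcal{A}$ acts on an arbitrary Hilbert space.

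Three points should be tightened.

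\emph{Step 3.} You use that every $P_i$ has dimension $\lambda$. For a family of arbitrary minimal projections this needs a justification: in a factor any two minimal projections are equivalent, by comparison plus minimality. More simply, run Zorn's lemma over orthogonal families of projections equivalent to $P$. Your maximality argument in step 2 only ever produces such a projection anyway.

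\emph{Step 4.} For infinite $I$ the double sum $\sum_{i,j}E_{ii}AE_{jj}$ is in general \emph{not} unconditionally strongly convergent. If it were, the upper-triangular part of every bounded matrix would be a bounded operator, which is false. Read it instead as the strong limit of $P_FAP_F$ over finite $F\subset I$, with $P_F=\sum_{i\in F}P_i$. This does converge, since $\|P_FAP_F\xi-A\xi\|\le\|(I-P_F)A\xi\|+\|A\|\,\|(I-P_F)\xi\|$. Cleaner still, avoid the sum altogether. The block entries of $WAW^*$ are $V_i^*AV_j=a_{ij}P_{i_0}$, i.e.\ scalar multiples of $I_{\mathcal{K}_0}$. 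Hence $WAW^*$ commutes with $I\otimes\mathcal{B}(\mathcal{K}_0)$ and lies in $(I\otimes\mathcal{B}(\mathcal{K}_0))'=\mathcal{B}(\ell^2(I))\otimes I$, with no convergence question at all.

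\emph{Easy direction.} You implicitly use that the dimension function of a factor is unique up to a positive scalar. You also use that a $*$-isomorphism of von Neumann algebras is normal, so countable additivity transfers. Both belong to the same theory the paper's classification rests on, so this is acceptable, but it should be stated.
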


Let us briefly discuss now the relevance of this classification in QFT. As an illustrative example, consider the right Rindler wedge $\mathcal{R}$ of Minkowski spacetime. In this case, the algebra $\mathcal{A}(\mathcal{R})$ is a type $\mathrm{III}$ von Neumann factor and the algebra $\mathcal{A}(\mathcal{L})$ associated with the left Rindler wedge is its commutant \cite{Araki}.

A thing worth emphasising is the natural compatibility between local quantum physics and observables algebras being type $\mathrm{III}$. In fact, \textit{any projection in a type $\mathrm{III}$ factor is equivalent to the identity}; this is trivial since the relative dimension $d$ is always infinite for any non-zero projection, thus we always have $d(P) = d(Q)$ for any $P, \ Q \neq 0$, implying $P \sim Q$, namely there is a partial isometry $V$ such that $P = VV^*$ and $Q = V^* V$. The implication of this property in QFT is the following: we can perform local operations that change any state $\omega$ into an eigenstate of a local projection $P \in \mathcal{A}(\mathcal{U})$, without disturbing it in its causal complement. As a matter of fact, since there is an isometry $V \in \mathcal{A}(\mathcal{U})$ such that $P = V V^*$ and $I = V^* V$,

\begin{equation}
    \omega_V(P) = \omega(V^* P V) = \omega(V^* V V^* V) = \omega(I) = 1
\end{equation}
but

\begin{equation}
    \omega_V(B) = \omega(V^* B V) = \omega (V^* V B) = \omega(B), \ \forall B \in \mathcal{A}(\mathcal{U}') \subseteq \big( \mathcal{A}(\mathcal{U}) \big)' \text{,}
\end{equation}
with $\omega_V(\cdot) \eqdef \omega \big( V^* (\cdot) V \big)$.

We can interpret the classification discussed above as follows. Let us consider a quantum system described by a Hilbert space $\mathcal{H}$, whose corresponding algebra of observables is $\mathcal{B}(\mathcal{H})$. We may ask \textit{whether it is possible to construct density matrices from an algebra of observables}. That is certainly the case for $\mathcal{B}(\mathcal{H})$, as any orthonormal sequence $\{ \ket{\psi_n} \}$ with $ \{p_n\} \subseteq [0, 1]$ such that $\sum_n p_n = 1 $ defines the density matrix $\sum_n p_n \ket{\psi_n} \bra{\psi_n}$. Moreover, a state vector $\ket{\psi}$ in $\mathcal{H}$ represents a pure state, as represents the density matrix $\rho_{\psi} = \ket{\psi}\bra{\psi}$ associated with it. The operator $\rho_{\psi}$ is a \textit{minimal projection}, in the sense that there is no other non-trivial projection $P < \rho_{\psi}$; in fact, $d(\rho_{\psi}) = 1$ after a suitable normalization and $\mathfrak{D}$ is discrete. We can associate the presence of minimal projections to the existence of pure states. Thus, type $\mathrm{I}$ factors admit pure states within them.

In a type $\mathrm{II}$ factor, there are no minimal projections, because $\mathfrak{D}$ is continuous; as such, no pure state is constructable. However, we might think of constructing renormalizable density matrices: since finite\footnote{A projection $P$ is \textit{finite} iff $Q \leq P, \ Q \sim P \implies Q=P$.} projections exist, we can define a renormalized trace $\mathrm{Tr}_{ren}$ which is finite on them and infinite on infinite projections\footnote{Essentially, $\mathrm{Tr}_{ren}$ is realized by the relative dimension $d$.}. Indeed, if we want to preserve all the nice properties of the usual trace, we must have an infinite trace value for an infinite projection $Q$, i.e. such that there is a non-zero $P < Q$ satisfying $P \sim Q$. As we want $\mathrm{Tr}_{ren}$ to be the same on equivalent projections (in order to have $\mathrm{Tr}_{ren} (V^* V) = \mathrm{Tr}_{ren} (V V^*)$),

\begin{equation}
    \mathrm{Tr}_{ren} (Q) = \mathrm{Tr}_{ren} (P) + \mathrm{Tr}_{ren} (Q - P) = \mathrm{Tr}_{ren} (Q) + \mathrm{Tr}_{ren}(Q - P) \text{,}
\end{equation}
from which it follows that $\mathrm{Tr}_{ren}(Q - P) = 0 \implies Q - P = 0$, which is a contradiction. Thus, $\mathrm{Tr}_{ren} (Q) = \infty$. This shows that a renormalized trace with the desirable properties of a trace can only be defined for finite projections.

The type $\mathrm{III}$ case is the most peculiar, as there are neither minimal nor finite projections. In that case, every non-zero projection is equivalent to the identity.

To summarize,

\begin{center}
    $\mathcal{A}$ is type $\mathrm{I}$ $\longleftrightarrow$ there are \textnormal{pure} (and mixed) \textnormal{states} \\
    $\mathcal{A}$ is type $\mathrm{II}$ $\longleftrightarrow$ there are (renormalizable) \textnormal{mixed states} \\
    $\mathcal{A}$ is type $\mathrm{III}$ $\longleftrightarrow$ there are \textnormal{no renormalizable states}
\end{center}
A core result in the theory of operator algebras is Tomita-Takesaki theorem.

\begin{Proposition}[\textbf{Tomita-Takesaki theorem}]
    Let $\mathcal{A}$ be a von Neumann algebra with a cyclic and separating vector $\ket{\Omega}$. Calling $S_{\Omega}$ the minimal closed extension of the application $A \ket{\Omega} \rightarrow A^* \ket{\Omega}$, its polar decomposition is
    
    \begin{equation}
        S_{\Omega} = J_{\Omega} \Delta_{\Omega}^{\frac{1}{2}} \text{,}
    \end{equation}
    where $J_{\Omega}$, called \textnormal{modular conjugation}, is an anti-isometry and $\Delta_{\Omega} = S_{\Omega}^* S_{\Omega}$ is a positive self-adjoint operator, called \textnormal{modular operator}. Tomita-Takesaki theorem states that the relations
    
    \begin{equation}
        \begin{cases}
            &\Delta_{\Omega}^{-it} \mathcal{A} \Delta_{\Omega}^{it} = \mathcal{A} \\
            &J_{\Omega} \mathcal{A} J_{\Omega} = \mathcal{A}'
        \end{cases}
    \end{equation}
    hold. The set of automorphisms $\{ \sigma_t^{\Omega} (\cdot) \}_{t \in \mathbb{R}} \eqdef \{ \Delta_{\Omega}^{-it} (\cdot) \Delta_{\Omega}^{it} \}_{t \in \mathbb{R}}$ is called \textnormal{modular automorphism group}.
    \label{Tomita-Takesaki}
\end{Proposition}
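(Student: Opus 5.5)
The plan is to follow the standard Hilbert-space route, first establishing the analytic properties of $S_{\Omega}$ and then the two algebraic relations. Since the paper writes $\Delta_{\Omega}^{-it}\mathcal{A}\Delta_{\Omega}^{it}=\mathcal{A}$, which is equivalent to invariance under $t\mapsto -t$, I will prove the usual form $\Delta_{\Omega}^{it}\mathcal{A}\Delta_{\Omega}^{-it}=\mathcal{A}$.

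The first step is closability. On the dense domain $\mathcal{A}'\ket{\Omega}$ (dense because $\ket{\Omega}$ is separating for $\mathcal{A}$, hence cyclic for $\mathcal{A}'$), define $F_0: A'\ket{\Omega}\mapsto {A'}^*\ket{\Omega}$. For $A\in\mathcal{A}$ and $A'\in\mathcal{A}'$ one checks
\begin{equation}
\langle A'\Omega, S_0 A\Omega\rangle = \langle A\Omega, F_0 A'\Omega\rangle ,
\end{equation}
where $S_0: A\ket{\Omega}\mapsto A^*\ket{\Omega}$ (conjugate-linear). Hence $F_0\subset S_0^*$, so $S_0^*$ is densely defined and $S_0$ is closable. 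Let $S_{\Omega}$ be its closure and set $F=S_0^*$. Note that $S_{\Omega}$ is an involution on its domain, so it is injective with $S_{\Omega}^{-1}=S_{\Omega}$.

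The second step is the polar decomposition. Write $S_{\Omega}=J_{\Omega}\Delta_{\Omega}^{1/2}$ with $\Delta_{\Omega}=S_{\Omega}^*S_{\Omega}=FS_{\Omega}$. Using $S_{\Omega}^{-1}=S_{\Omega}$ and uniqueness of the polar decomposition, I will derive that $J_{\Omega}$ is an antiunitary involution and that
\begin{equation}
J_{\Omega}\Delta_{\Omega}J_{\Omega}=\Delta_{\Omega}^{-1}, \qquad F=J_{\Omega}\Delta_{\Omega}^{-1/2}.
\end{equation}
Functional calculus then gives $J_{\Omega}\Delta_{\Omega}^{it}J_{\Omega}=\Delta_{\Omega}^{it}$. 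All of this is routine unbounded-operator bookkeeping.

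The third step is the core commutation theorem, which I expect to be the main obstacle: showing that $J_{\Omega}\mathcal{A}J_{\Omega}\subset\mathcal{A}'$ and $\Delta_{\Omega}^{it}\mathcal{A}\Delta_{\Omega}^{-it}=\mathcal{A}$. I would use the resolvent method of van Daele. Fix $\lambda>0$ and $A'\in\mathcal{A}'$. The aim is to show that there exists $A_\lambda\in\mathcal{A}$ with
\begin{equation}
A_\lambda\ket{\Omega}=(\Delta_{\Omega}+\lambda)^{-1}A'\ket{\Omega}\cdot(\text{suitable } J\text{-twist}).
\end{equation}
To produce $A_\lambda$, I will build a bounded sesquilinear form on $\mathcal{A}'\ket{\Omega}$. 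The first sub-step is to verify that the form is bounded by $\|A'\|$, via the inequality $\lambda\|\xi\|^2\le\|\Delta^{1/2}\xi\|^2+\lambda\|\xi\|^2$. The second sub-step is to check that the form commutes with $\mathcal{A}'$. Riesz representation then yields an operator in $\mathcal{A}''=\mathcal{A}$ (von Neumann's bicommutant). This gives the identity
\begin{equation}
\langle \xi, (\Delta^{1/2}+\lambda\Delta^{-1/2})^{-1}\eta\rangle
\end{equation}
expressed through elements of $\mathcal{A}$. Next, I will use the integral formula
\begin{equation}
(\Delta^{1/2}+\lambda\Delta^{-1/2})^{-1}=\int_{\mathbb{R}}\frac{\lambda^{it-1/2}\,\Delta^{it}}{e^{\pi t}+e^{-\pi t}}\,dt .
\end{equation}
Inverting this Fourier transform in $\lambda$ yields $\Delta_{\Omega}^{it}J_{\Omega}A'J_{\Omega}\Delta_{\Omega}^{-it}\in\mathcal{A}$ for all $t$. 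Setting $t=0$ gives $J_{\Omega}\mathcal{A}'J_{\Omega}\subset\mathcal{A}$.

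The final step exchanges the roles of $\mathcal{A}$ and $\mathcal{A}'$, which is legitimate because $\ket{\Omega}$ is cyclic and separating for both. For $\mathcal{A}'$ the Tomita operator is $F$, whose modular data are $(J_{\Omega},\Delta_{\Omega}^{-1})$. The symmetric argument gives $J_{\Omega}\mathcal{A}J_{\Omega}\subset\mathcal{A}'$, and combining this with $J_{\Omega}^2=I$ yields $J_{\Omega}\mathcal{A}J_{\Omega}=\mathcal{A}'$. The same symmetry turns the inclusions from step three into the equality $\Delta_{\Omega}^{it}\mathcal{A}\Delta_{\Omega}^{-it}=\mathcal{A}$, and hence also $\Delta_{\Omega}^{-it}\mathcal{A}\Delta_{\Omega}^{it}=\mathcal{A}$ as written in the statement. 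This shows that $\sigma^{\Omega}_t$ is a one-parameter group of automorphisms of $\mathcal{A}$.
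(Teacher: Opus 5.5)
The paper gives no proof of this statement. It is recalled in the appendix as standard background (the main text cites Takesaki for it), so your proposal can only be compared with the literature. Your outline is van Daele's resolvent proof as presented in Bratteli--Robinson, and its architecture is sound. As written, though, it has a genuine gap in the last step.

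\textbf{Main gap: the Tomita operator of $\mathcal{A}'$.} You only established $F_0\subset S_0^*$. You then assert that the Tomita operator of $(\mathcal{A}',\Omega)$, which by definition is $\overline{F_0}$, equals $F=S_0^*$. The equality $\overline{F_0}=S_0^*$ (i.e.\ $\mathcal{A}'\Omega$ is a core for $S_\Omega^*$) is a real lemma, not bookkeeping, and everything hinges on it:
\begin{itemize}
\item Step three alone yields only $J\mathcal{A}'J\subset\mathcal{A}$; taking commutants just reproduces this inclusion.
\item Running step three for $\mathcal{A}'$ uses the conjugation $J'$ from the polar decomposition of $\overline{F_0}$, so it yields $J'\mathcal{A}J'\subset\mathcal{A}'$. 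This combines with the first inclusion only if $J'=J$.
\end{itemize}
Without the lemma you obtain neither $J\mathcal{A}J=\mathcal{A}'$ nor the equality $\Delta^{it}\mathcal{A}\Delta^{-it}=\mathcal{A}$.

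The standard proof runs as follows. Fix $\xi\in D(S_0^*)$ and define $Q_0\colon A\Omega\mapsto A\xi$ on $\mathcal{A}\Omega$. Then $Q_0^*\supset\big(A\Omega\mapsto AS_0^*\xi\big)$, so $Q_0$ is closable. Its closure $Q=V|Q|$ commutes with the unitaries of $\mathcal{A}$, hence is affiliated with $\mathcal{A}'$. Let $E_n$ be the spectral projection of $|Q|$ for $[0,n]$. The bounded operators $Q_n=QE_n$ lie in $\mathcal{A}'$ and satisfy $Q_n\Omega\to\xi$ and $Q_n^*\Omega=E_nQ^*\Omega\to S_0^*\xi$. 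Hence $\xi\in D(\overline{F_0})$ with $\overline{F_0}\xi=S_0^*\xi$.

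\textbf{Second issue: identifying $A_\lambda$ with the integral.} Step three is under-specified exactly where the work lies. Prescribing $A_\lambda\Omega$, or a scalar identity for the one-sided resolvent $(\Delta^{1/2}+\lambda\Delta^{-1/2})^{-1}$, cannot identify $A_\lambda$ with $\int f_\lambda(t)\,\Delta^{it}JA'J\Delta^{-it}\,dt$:
\begin{itemize}
\item That integral is not yet known to lie in $\mathcal{A}$, so the fact that $\Omega$ separates $\mathcal{A}$ does not help.
\item A one-sided resolvent never produces the conjugated operators $\Delta^{it}JA'J\Delta^{-it}$.
\end{itemize}
The sesquilinear-form/bicommutant construction must deliver an operator identity, tested on a dense set of vector pairs, together with uniqueness of bounded solutions. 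Formally the identity reads
\[
\lambda\,\Delta^{1/2}A_\lambda\Delta^{-1/2}+\Delta^{-1/2}A_\lambda\Delta^{1/2}=JA'J .
\]
This is the right target because $X\mapsto\int f_\lambda(t)\Delta^{it}X\Delta^{-it}\,dt$, with $f_\lambda(t)=\lambda^{it-1/2}/(e^{\pi t}+e^{-\pi t})$, is precisely the formal inverse of the two-sided map on the left. Note also that on $\mathcal{A}'\Omega$ the available graph norm is that of $F$, i.e.\ of $\Delta^{-1/2}$, not of $\Delta^{1/2}$ as in your inequality.

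\textbf{Minor slip.} Your scalar integral formula has the $\lambda$-dependence inverted. Its right-hand side equals $(\lambda\Delta^{1/2}+\Delta^{-1/2})^{-1}$; to get $(\Delta^{1/2}+\lambda\Delta^{-1/2})^{-1}$ you need $\lambda^{-it-1/2}$ in place of $\lambda^{it-1/2}$. This does not affect the Fourier-inversion step.
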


\begin{Proposition}[\textbf{KMS condition}]
    Let $\mathcal{A}$ be a von Neumann algebra with cyclic and separating vector $\ket{\Omega}$ and let $\omega$ be the state such that $\omega(A) = \bra{\Omega} A \ket{\Omega}$, for any $A \in \mathcal{A}$. The state $\omega$ satisfies the KMS condition with respect to its modular automorphisms group $\{ \sigma_t^{\omega} \}_{t \in \mathbb{R}}$.
    
    Conversely, suppose $\mathcal{A}$ is a $C^*$-subalgebra of $\mathcal{B}(\mathcal{H})$ with a cyclic vector $\ket{\Omega}$. If $\omega$ satisfies the KMS condition with respect to an automorphisms group $\{ \tilde{\sigma}_t \}_{t \in \mathbb{R}}$, then:

    \begin{itemize}
        \item $\omega \big( \tilde{\sigma}_t(A) \big) = \omega(A), \ \forall t \in \mathbb{R}, \ \forall A \in \mathcal{A}$;
        \item $\ket{\Omega}$ is separating for the von Neumann algebra $\mathcal{M} \eqdef \mathcal{A}''$;
        \item $\tilde{\sigma}_t$ is the restriction to $\mathcal{A}$ of the modular automorphism $\sigma_t^{\Omega}$ of $\mathcal{M}$ defined by $\ket{\Omega}$, for every $t \in \mathbb{R}$.
    \end{itemize}
\end{Proposition}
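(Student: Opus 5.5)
We prove the two directions separately. Both rest on the spectral theorem for the modular operator (respectively for the unitary group implementing $\tilde{\sigma}_t$), combined with elementary facts about bounded analytic functions on a strip. The KMS condition is written with the convention $\sigma_t^{\Omega}(\cdot) = \Delta_{\Omega}^{-it}(\cdot)\Delta_{\Omega}^{it}$ of Proposition \ref{Tomita-Takesaki}. This means $\omega$ is $\beta$-KMS for some fixed $\beta$ ($\beta=-1$ with this sign convention, or $\beta = 1$ after $t \mapsto -t$). We fix the strip accordingly and do not track the sign further.

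\emph{Direct implication.} Take $A, B \in \mathcal{A}$ and set $F(t) \eqdef \omega\big(A\,\sigma_t^{\Omega}(B)\big) = \bra{A^*\Omega} \Delta_{\Omega}^{-it} \ket{B\Omega}$, using $\Delta_{\Omega}^{it}\ket{\Omega} = \ket{\Omega}$. Both $A^*\Omega$ and $B\Omega$ lie in $\mathrm{Dom}(S_{\Omega}) = \mathrm{Dom}(\Delta_{\Omega}^{1/2})$.
\begin{itemize}
\item We split $\Delta_{\Omega}^{-iz} = \Delta_{\Omega}^{-i\bar{w}}\,\Delta_{\Omega}^{-iz'}$ symmetrically, letting one factor act on each vector.
\item By the spectral theorem, $z \mapsto \bra{\Delta_{\Omega}^{i\bar{z}/2}\,\cdot}\,\Delta_{\Omega}^{-iz/2}\ket{\cdot}$ is then bounded and continuous on the closed strip of width $1$ and analytic inside. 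Domain control comes from $\|\Delta^{s}\xi\| \le \|\xi\| + \|\Delta^{1/2}\xi\|$ for $0\le s\le 1/2$.
\item On the far edge we use $\Delta_{\Omega}^{1/2}\ket{B\Omega} = J_{\Omega}\ket{B^*\Omega}$ and $\Delta_{\Omega}^{1/2}\ket{A^*\Omega} = J_{\Omega}\ket{A\Omega}$.
\item Since $J_{\Omega}$ is antiunitary and commutes with $\Delta_{\Omega}^{it}$, the boundary value becomes $\bra{B^*\Omega}\Delta_{\Omega}^{-it}\ket{A\Omega} = \omega\big(\sigma_t^{\Omega}(B)A\big)$.
\end{itemize}
This is exactly the KMS boundary condition. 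It also gives invariance for free, since $\Delta_{\Omega}^{it}\Omega = \Omega$.

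\emph{Converse, first two items.} For invariance, take $A = I$ in the KMS condition. The function $F(z)$ is then bounded and analytic on the strip, with equal boundary values $\omega(\tilde{\sigma}_t(B))$ on both edges. Gluing translated strips yields a bounded entire function, which is constant by Liouville's theorem, so $\omega\circ\tilde{\sigma}_t = \omega$. Invariance lets us define the unitary $U_t\,\pi(A)\ket{\Omega} \eqdef \tilde{\sigma}_t(A)\ket{\Omega}$ on $\mathcal{H}$, with $U_t\Omega = \Omega$ and $\mathrm{Ad}_{U_t}$ extending $\tilde{\sigma}_t$ to $\mathcal{M}=\mathcal{A}''$.

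For separation, suppose $A \in \mathcal{M}$ with $A\Omega=0$; by weak density and normality of the vector state it suffices to reason with the KMS functions.
\begin{itemize}
\item For any $X$, the function $t\mapsto \omega\big(X\tilde{\sigma}_t(A)\big)=\bra{X^*\Omega}U_t\ket{A\Omega}$ vanishes identically.
\item An analytic function on the strip vanishing on one edge vanishes identically (Schwarz reflection plus the identity theorem). Hence $\omega(\tilde{\sigma}_t(A)X)=0$; at $t=0$ this gives $\omega(AX)=0$ for all $X$, so $A^*\Omega \perp \mathcal{A}\Omega$ and $A^*\Omega=0$.
\item Repeating the argument with $A$ replaced by $A^*$, and then with products $YAX$, gives $\bra{Y^*\Omega}AX\ket{\Omega}=0$. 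Cyclicity then forces $A=0$.
\end{itemize}

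\emph{Converse, third item (main obstacle).} The core of the converse is identifying $\tilde{\sigma}_t$ with $\sigma_t^{\Omega}$, and the real difficulty here is the unbounded-operator domain bookkeeping. Write $U_t = e^{itK}$ and compare with $\Delta_{\Omega}$; the goal is $U_t = \Delta_{\Omega}^{-it}$. I would proceed as follows.
\begin{itemize}
\item Use the KMS functions to show that, for $A$ in a suitable core, $A\Omega$ lies in the domain of the analytic continuation $U_{i/2}$ (i.e.\ $e^{-K/2}$ up to sign). The key identity to aim for is $\|U_{i/2}A\Omega\| = \|A^*\Omega\|$, read off from the KMS function at the midpoint of the strip.
\item This identity lets us define an antiunitary $\tilde{J}$ with $S_{\Omega}\supseteq \tilde{J}\,U_{i/2}$ on that core.
\item Passing to closures, with an essential self-adjointness/core argument (e.g.\ smoothing $A$ by $\int f(t)\tilde{\sigma}_t(A)\,dt$ with Gaussian $f$ to obtain entire analytic elements), we get equality $S_{\Omega} = \tilde{J}\,U_{i/2}$.
\item Uniqueness of the polar decomposition then yields $\Delta_{\Omega}^{1/2} = U_{i/2}$, i.e.\ $\Delta_{\Omega}= e^{\mp K}$ with the sign fixed by the KMS convention. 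Hence $\Delta_{\Omega}^{-it}=U_t$ and $\tilde{\sigma}_t = \sigma_t^{\Omega}$ on $\mathcal{A}$.
\end{itemize}
If the direct closure argument proves awkward, the fallback is to show that the generator of $U_t$ and $\log\Delta_{\Omega}$ agree on a common invariant core of analytic vectors, and conclude by Nelson's theorem.
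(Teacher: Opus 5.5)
The paper does not prove this proposition. It is quoted in the appendix as the standard KMS characterization of the modular group (Takesaki's theorem, cf.\ Bratteli--Robinson). Your outline is essentially that textbook argument, and it is sound: strip functions from $\Delta_{\Omega}^{-iz}$ with symmetric splitting, Liouville for invariance, the implementing unitaries $U_t$, ``vanishing on one edge'' for separation, and $S_{\Omega}=\tilde J\,U_{i/2}$ on a core of Gaussian-smoothed elements followed by uniqueness of the polar decomposition. The following points need tightening.

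\emph{Passing from $\mathcal{A}$ to $\mathcal{M}=\mathcal{A}''$.} This is the one step you assert rather than argue. The hypothesis gives KMS functions only for pairs in $\mathcal{A}$. Your separation argument uses them for pairs such as $(X,A)$ with $A\in\mathcal{M}$. In the third item, the smoothed elements $\int f(t)\tilde{\sigma}_t(A)\,dt$ lie in $\mathcal{M}$ but in general not in $\mathcal{A}$ (unless, e.g., $t\mapsto\tilde{\sigma}_t(A)$ is norm-continuous), so the identity $\|U_{i/2}A\Omega\|=\|A^*\Omega\|$ also needs KMS on $\mathcal{M}$. ``Weak density and normality of the vector state'' only give pointwise convergence of boundary values, which by itself does not control the functions inside the strip.

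The fix is short. For $C\in\mathcal{M}$ and $X\in\mathcal{A}$, take a Kaplansky net $C_\alpha\in\mathcal{A}$ with $\|C_\alpha\|\le\|C\|$ and $C_\alpha\to C$ strongly${}^*$. Since $U_t$ is unitary and fixes $\Omega$, the boundary values $\langle X^*\Omega,U_tC_\alpha\Omega\rangle$ and $\langle U_tC_\alpha^*\Omega,X\Omega\rangle$ converge uniformly in $t$. The errors are at most $\|X^*\Omega\|\,\|(C_\alpha-C)\Omega\|$ and $\|X\Omega\|\,\|(C_\alpha^*-C^*)\Omega\|$. By the maximum principle for bounded analytic functions on a strip, the $F_{X,C_\alpha}$ then converge uniformly on the closed strip to a KMS function for $(X,C)$, with $\tilde{\sigma}_t$ extended by $\mathrm{Ad}\,U_t$. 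Repeating in the other slot covers all pairs in $\mathcal{M}$. An approximate unit handles $A=I$ in the invariance step if $\mathcal{A}$ is not unital.

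With this in hand, your separation step also simplifies. Once you know $C\Omega=0\Rightarrow C^*\Omega=0$ for every $C\in\mathcal{M}$, apply it to $C=YA$ with $Y\in\mathcal{A}$. This gives $A^*Y^*\Omega=0$ for all $Y$, hence $A=0$ by cyclicity.

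\emph{Minor corrections.}
In the far-edge computation, moving $\Delta_{\Omega}^{-it}$ across the antiunitary pairing gives $\langle B^*\Omega,\Delta_{\Omega}^{it}A\Omega\rangle=\omega(\sigma_t^{\Omega}(B)A)$. The expression $\langle B^*\Omega,\Delta_{\Omega}^{-it}A\Omega\rangle$ you wrote is instead $\omega(B\sigma_t^{\Omega}(A))$.

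Your own strip function ($F(t)=\omega(A\sigma_t(B))$, $F(t+i)=\omega(\sigma_t(B)A)$) is the $\beta=+1$ condition in the paper's convention $\sigma_t=\Delta^{-it}(\cdot)\Delta^{it}$, as the paper states, not $\beta=-1$.

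In the third item, $\|U_{i/2}A\Omega\|^2=\langle A\Omega,U_iA\Omega\rangle=\omega(A^*\tilde{\sigma}_i(A))=\omega(AA^*)$ comes from the far edge $z=i$ of the KMS function for $(A^*,A)$, not from its midpoint.

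The two core claims each deserve a line. $\mathcal{M}_a\Omega$ is a dense $U_t$-invariant set of entire vectors, hence a core for $e^{-K/2}$. Gaussian smoothing converges strongly${}^*$, hence $\mathcal{M}_a\Omega$ is also a core for $S_{\Omega}$.
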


This theorems state that, given a fully entangled\footnote{A cyclic and separating vector corresponds to a state in the form $\sum_n c_n \ket{n} \otimes \ket{n} \in \mathcal{H} \otimes \mathcal{H}$, with $c_n \neq 0, \ \forall n$ \cite{Witten}.} state, we can always find a suitable Hamiltonian such that, for the observer telling time according to such Hamiltonian, the state looks thermal. This holds for any von Neumann algebras and, as a result, applies also to QFT. The operator $- \log(\Delta_{\Omega})$ is called \textit{modular Hamiltonian}, since it satisfies the KMS condition \cite{Kubo} \cite{Martin-Schwinger}

\begin{equation}
    \bra{\Omega} \sigma_t^{\Omega}(A) B \ket{\Omega} = \bra{\Omega} B \sigma_{t+i}^{\Omega}(A) \ket{\Omega}, \ \forall A, B \in \mathcal{A}
\end{equation}
at inverse temperature $\beta = 1$. The Unruh effect can be deduced in this framework by deriving the modular Hamiltonian corresponding to a Rindler wedge \cite{Bisognano-Wichmann} \cite{Unruh}.

\subsection{Jones index}

In this section, we will review the basics of Jones index theory \cite{Jones}, following the synthesis provided in \cite{Verlinde}. Consider a von Neumann algebra $\mathcal{M}$. Beyond its standard GNS representation in the Hilbert space $\mathcal{H}_{\mathcal{M}}$, one can construct other representations too. An example would be to consider the Hilbert space

\begin{equation}
    \mathcal{H}^{(n)} \eqdef \mathcal{H}_{\mathcal{M}} \oplus ... \oplus \mathcal{H}_{\mathcal{M}}
\end{equation}
which consists of $n$ copies of the GNS representation $\mathcal{H}_{\mathcal{M}}$. In such a Hilbert space, the vector $\ket{\Psi} \oplus ... \oplus \ket{\Psi}$ is no longer cyclic, as it only generates the diagonal elements of $\mathcal{H}^{(n)}$. An infinite generalization can also be considered, by taking the space

\begin{equation}
    \mathcal{H}^{(\infty)} \eqdef \ell^2(\mathbb{N}) \otimes \mathcal{H}_{\mathcal{M}} \text{,}
\end{equation}
that is the $\mathcal{H}_{\mathcal{M}}$-valued (square-integrable) sequences \footnote{Analogously to Lie algebra-valued one-forms in gauge theories. In fact, we can associate to each $\bigoplus_{n=1}^{\infty} \ket{\psi_n}$ the sequence $\{ c_k^{(n)} \}$ to be contracted with the basis $\{ \ket{n} \otimes \ket{\varphi_k} \}$, with $\ket{\psi_n} = \sum_{k=1}^{\infty} c_k^{(n)} \ket{\varphi_k}$, and vice versa.}. In general, one may think of constructing representations $\alpha$ times as big as the GNS one, with $\alpha \in [0, \infty]$. Such $\alpha$ will be the $\mathcal{M}$-dimension of the representation and quantifies the relative size of a representation $\mathcal{H}$ with respect to the standard one $\mathcal{H}_{\mathcal{M}}$. We require that this quantity should be equal to $1$ when $\mathcal{H} = \mathcal{H}_{\mathcal{M}}$ and should be the same for $\mathcal{H}'$ unitarily equivalent to $\mathcal{H}$. The following theorem allows us to define this dimension.

\begin{Proposition}
    Let $\mathcal{M}$ be a type $\mathrm{II}_1$ factor and $\mathcal{H}$ a representation of $\mathcal{M}$ \footnote{To be precise, $\mathcal{H}$ should be an $\mathcal{M}$-module.}. Then, there is an isometry

    \begin{equation}
        V : \mathcal{H} \rightarrow \mathcal{H}^{(\infty)}
    \end{equation}
    such that

    \begin{equation}
        V M = (I \otimes M) V, \ \forall M \in \mathcal{M} \text{.}
    \end{equation}
    Additionally,

    \begin{equation}
        V V^* \in (I \otimes \mathcal{M})' = \mathcal{B} \big( \ell^2(\mathbb{N}) \big) \otimes \mathcal{M}'
    \end{equation}
    defines a projection and, given the trace $\tau \eqdef \mathrm{tr} \otimes \mathrm{Tr}$ on the commuting algebra, $\tau (VV^*)$ is independent of $V$.
\end{Proposition}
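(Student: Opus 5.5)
We construct $V$ by decomposing $\mathcal{H}$ into cyclic pieces, embedding each piece into a copy of the standard representation, and then check that the trace of the resulting projection does not depend on the embedding.

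\textbf{Step 1: decomposition into cyclic subspaces.} By Zorn's lemma, choose a maximal family of nonzero vectors $\{\ket{\xi_n}\}$ whose cyclic subspaces $\overline{\mathcal{M}\ket{\xi_n}}$ are mutually orthogonal. Maximality gives
\begin{equation}
\mathcal{H} = \bigoplus_n \overline{\mathcal{M}\ket{\xi_n}} .
\end{equation}
Assuming $\mathcal{H}$ is separable, the family is countable.

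\textbf{Step 2: embedding each cyclic piece.} Consider the normal positive functional $\omega_n(M) \eqdef \bra{\xi_n} M \ket{\xi_n}$. Since $\mathcal{M}$ is type $\mathrm{II}_1$ and $\mathcal{H}_{\mathcal{M}}$ is its standard form, every normal positive functional is a vector functional there. So there is $\ket{\eta_n} \in \mathcal{H}_{\mathcal{M}}$ with $\omega_n(M) = \bra{\eta_n} M \ket{\eta_n}$. Concretely, take $\ket{\eta_n} = h_n^{1/2}\ket{\Psi}$, where $h_n$ is the density of $\omega_n$ with respect to the trace, using the Radon–Nikodym theorem for the trace.

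The map $M\ket{\xi_n} \mapsto M\ket{\eta_n}$ preserves inner products, so it extends to an isometry
\begin{equation}
V_n : \overline{\mathcal{M}\ket{\xi_n}} \to \mathcal{H}_{\mathcal{M}}
\end{equation}
that intertwines the action of $\mathcal{M}$. Set $V \eqdef \sum_n \ket{n} \otimes V_n$. This is an isometry $\mathcal{H} \to \ell^2(\mathbb{N}) \otimes \mathcal{H}_{\mathcal{M}}$ satisfying $VM = (I\otimes M)V$.

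\textbf{Step 3: properties of $VV^*$.} $VV^*$ is a projection because $V$ is an isometry. The intertwining relation and its adjoint give
\begin{equation}
(I\otimes M)VV^* = VMV^* = VV^*(I\otimes M),
\end{equation}
so $VV^* \in (I\otimes\mathcal{M})'$. The identification $(I\otimes \mathcal{M})' = \mathcal{B}(\ell^2(\mathbb{N}))\otimes\mathcal{M}'$ is the standard commutation theorem for tensor products. On $\mathcal{M}'$ we use the trace $\mathrm{Tr}$ transported by $J$, since $\mathcal{M}' = J\mathcal{M}J$ by Tomita–Takesaki.

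\textbf{Step 4: independence of $V$ (main obstacle).} Let $W$ be another intertwining isometry. Then $U \eqdef WV^*$ is a partial isometry in $(I\otimes\mathcal{M})'$ with
\begin{equation}
U^*U = VV^*, \qquad UU^* = WW^* .
\end{equation}
So the two projections are Murray–von Neumann equivalent in the semifinite factor $\mathcal{B}(\ell^2)\otimes\mathcal{M}'$. A normal semifinite faithful trace takes equal values on equivalent projections, because $\tau(U^*U)=\tau(UU^*)$ for the trace $\tau = \mathrm{tr}\otimes\mathrm{Tr}$. This holds even when the value is infinite, by the normality and lower semicontinuity of $\tau$.

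The delicate points are the following:
\begin{itemize}
\item verifying the trace property $\tau(U^*U)=\tau(UU^*)$ for possibly unbounded trace values, which is handled by approximating with finite sub-projections via the $\mathrm{II}_\infty$ structure;
\item the existence of the vector representative $\ket{\eta_n}$ in Step 2, which relies on the standard form of a finite factor.
\end{itemize}
Finally, for $\mathcal{H}=\mathcal{H}_{\mathcal{M}}$ we can take $V = \ket{1}\otimes I$, which gives $\tau(VV^*) = \mathrm{Tr}(I) = 1$. This confirms the normalization required of the $\mathcal{M}$-dimension.
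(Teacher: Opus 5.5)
The paper does not prove this proposition. It quotes the result from Jones's index theory, via the summary of Verlinde and van der Heijden, so there is no in-paper argument to compare yours with. Your proof is the standard one and it is correct: you decompose $\mathcal{H}$ into cyclic subspaces, represent each normal functional $\omega_n$ by a vector in the standard form $\mathcal{H}_{\mathcal{M}}$, and then obtain $VV^*\sim WW^*$ through the partial isometry $WV^*\in(I\otimes\mathcal{M})'$.

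Two small remarks.

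First, state explicitly that ``$\mathcal{M}$-module'' means a normal, unital representation. Normality is what makes $\omega_n$ normal, so that the Radon--Nikodym (standard-form) step applies. Unitality is what lets the cyclic subspaces exhaust $\mathcal{H}$. Separability then gives countably many pieces, which matches $\ell^2(\mathbb{N})$.

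Second, the first ``delicate point'' needs no approximation by finite subprojections. Write $x\in\mathcal{B}(\ell^2(\mathbb{N}))\otimes\mathcal{M}'$ as a matrix $(x_{ij})$ with entries in $\mathcal{M}'$. Normality and the tracial property of the finite trace on $\mathcal{M}'$ then give $\tau(x^*x)=\sum_{i,j}\mathrm{Tr}(x_{ji}^*x_{ji})=\tau(xx^*)$. Both sides are sums of nonnegative terms, so this holds in $[0,\infty]$, and $\tau(U^*U)=\tau(UU^*)$ for $U=WV^*$ follows directly.
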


\begin{Definition}
    The $\mathcal{M}$-\textnormal{dimension} of $\mathcal{H}$ it is defined as

    \begin{equation}
        \mathrm{dim}_{\mathcal{M}} (\mathcal{H}) \eqdef \tau(VV^*) \in [0, \infty] \text{.}
    \end{equation}
\end{Definition}

This definition satisfies the following properties:

\begin{enumerate}
    \item $\mathrm{dim}_{\mathcal{M}} (\mathcal{H}) = \mathrm{dim}_{\mathcal{M}} (\mathcal{H}') \iff \mathcal{H} \cong \mathcal{H}'$;
    \item if $\mathrm{dim}_{\mathcal{M}} (\mathcal{H}) = \infty$ then $\mathcal{M}'$ is type $\mathrm{II}_{\infty}$, otherwise it is type $\mathrm{II}_1$;
    \item for a countable set $\{ \mathcal{H}_k \}$, 
    
    \begin{equation}
        \mathrm{dim}_{\mathcal{M}} \Big( \bigoplus_k \mathcal{H}_k \Big) = \sum_k \mathrm{dim}_{\mathcal{M}} (\mathcal{H}_k) \text{;}
    \end{equation}
    \item for a projection $P \in \mathcal{M}'$,
    
    \begin{equation}
        \mathrm{dim}_{\mathcal{M}} (\mathcal{H} P) = \mathrm{Tr}_{\mathcal{M}'}(P) \ \mathrm{dim}_{\mathcal{M}} (\mathcal{H}) \text{;}
    \end{equation}
    \item for a projection $P \in \mathcal{M}$,
    
    \begin{equation}
        \mathrm{dim}_{P \mathcal{M} P} (P \mathcal{H}) = \frac{1}{\mathrm{Tr}_{\mathcal{M}}(P)} \mathrm{dim}_{\mathcal{M}} (\mathcal{H}) \text{;}
    \end{equation}
    \item $\mathrm{dim}_{\mathcal{M}'}(\mathcal{H}) = \frac{1}{\mathrm{dim}_{\mathcal{M}}}(\mathcal{H})$.
\end{enumerate}

\begin{Definition}
    Let $\mathcal{N} \subset \mathcal{M}$ be an inclusion of type $\mathrm{II}_1$ factors. The \textnormal{Jones index} is defined as

    \begin{equation}
        [\mathcal{M} : \mathcal{N}] \eqdef \mathrm{dim}_{\mathcal{N}}(\mathcal{H}_{\mathcal{M}}) \text{.}
    \end{equation}
\end{Definition}

Since $\mathcal{H}_{\mathcal{N}} \subset \mathcal{H}_{\mathcal{M}}$, we can write

\begin{equation}
    \begin{split}
        &\mathcal{H}_{\mathcal{M}} = \mathcal{H}_{\mathcal{N}} \oplus \mathcal{H}_{\mathcal{N}}^{\perp} \implies \mathrm{dim}_{\mathcal{N}}(\mathcal{H}_{\mathcal{M}}) = \mathrm{dim}_{\mathcal{N}}(\mathcal{H}_{\mathcal{N}}) + \mathrm{dim}_{\mathcal{N}}(\mathcal{H}_{\mathcal{N}}^{\perp}) = \\
        &1 + \mathrm{dim}_{\mathcal{N}}(\mathcal{H}_{\mathcal{N}}^{\perp}) \implies [\mathcal{M} : \mathcal{N}] \geq 1 \text{.}
    \end{split}
\end{equation}

\begin{Proposition}
    Given $\mathcal{N} \subset \mathcal{M}$ with a representation $\mathcal{H}$ such that $\mathrm{dim}_{\mathcal{N}}(\mathcal{H}) < \infty$,

    \begin{equation}
        [\mathcal{M} : \mathcal{N}] = \frac{\mathrm{dim}_{\mathcal{N}}(\mathcal{H})}{\mathrm{dim}_{\mathcal{M}}(\mathcal{H})} \text{.}
    \end{equation}
\end{Proposition}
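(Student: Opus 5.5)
The idea is to show that the Jones index can be read off from the type $\mathrm{II}_1$ dimension function on any $\mathcal{M}$-module $\mathcal{H}$ on which $\mathcal{N}$ has finite dimension. The tool is the transitivity of the coupling dimension along the inclusion $\mathcal{N}\subset\mathcal{M}$.

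First, I would reduce to a nice representation. Since $\mathrm{dim}_{\mathcal{N}}(\mathcal{H})<\infty$ and $\mathcal{N}\subset\mathcal{M}$, the module $\mathcal{H}$ restricted to $\mathcal{M}$ also has finite dimension. To see this, note that $\mathcal{M}'\subset\mathcal{N}'$ and that $\mathcal{N}'$ is type $\mathrm{II}_1$ by property 2. A type $\mathrm{II}_1$ algebra cannot contain a $\mathrm{II}_\infty$ factor with the same unit, so $\mathcal{M}'$ is $\mathrm{II}_1$ as well, and property 2 then gives $\mathrm{dim}_{\mathcal{M}}(\mathcal{H})<\infty$. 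Next, I would use property 1: finite-dimensional $\mathcal{M}$-modules are classified by their dimension. Hence $\mathcal{H}\cong p(\ell^2(\mathbb{N})\otimes\mathcal{H}_{\mathcal{M}})$ for a projection $p\in\mathcal{B}(\ell^2(\mathbb{N}))\otimes\mathcal{M}'$. Choosing $p=\sum_{k\le n} e_{kk}\otimes I\,+\,e_{n+1,n+1}\otimes q$ with $q\in\mathcal{M}'$, I can write $\mathcal{H}$ as $\mathcal{H}_{\mathcal{M}}^{\oplus n}\oplus\mathcal{H}_{\mathcal{M}}q$, where $n+\mathrm{Tr}_{\mathcal{M}'}(q)=\mathrm{dim}_{\mathcal{M}}(\mathcal{H})$. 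Both dimensions are invariant under unitary equivalence of $\mathcal{M}$-modules, which also intertwines the $\mathcal{N}$-action, so it suffices to treat this model.

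Second, I would compute both sides using additivity (property 3). On the $\mathcal{M}$ side, $\mathrm{dim}_{\mathcal{M}}(\mathcal{H})=n\cdot 1+\mathrm{Tr}_{\mathcal{M}'}(q)$ by properties 3 and 4. On the $\mathcal{N}$ side, $\mathrm{dim}_{\mathcal{N}}(\mathcal{H})=n\,\mathrm{dim}_{\mathcal{N}}(\mathcal{H}_{\mathcal{M}})+\mathrm{dim}_{\mathcal{N}}(\mathcal{H}_{\mathcal{M}}q)$, and the first term equals $n[\mathcal{M}:\mathcal{N}]$ by definition. The key claim is then
\begin{equation}
\mathrm{dim}_{\mathcal{N}}(\mathcal{H}_{\mathcal{M}}q)=\mathrm{Tr}_{\mathcal{M}'}(q)\,\mathrm{dim}_{\mathcal{N}}(\mathcal{H}_{\mathcal{M}}) .
\end{equation}
Property 4 gives this directly when applied to $\mathcal{N}$, since $q\in\mathcal{M}'\subset\mathcal{N}'$; what remains is to check that the two traces agree. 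Here $\mathrm{Tr}_{\mathcal{N}'}(q)$ is computed in $\mathcal{N}'$ acting on $\mathcal{H}_{\mathcal{M}}$, while the claim uses $\mathrm{Tr}_{\mathcal{M}'}(q)$. Dividing the two sides then gives the formula: the ratio is $[\mathcal{M}:\mathcal{N}]\,(n+\mathrm{Tr}_{\mathcal{M}'}(q))/(n+\mathrm{Tr}_{\mathcal{M}'}(q))$.

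The main obstacle is exactly this identification: I must show that the normalized trace of $\mathcal{N}'$ restricts to the normalized trace of $\mathcal{M}'\subset\mathcal{N}'$. When $[\mathcal{M}:\mathcal{N}]<\infty$, both $\mathcal{M}'$ and $\mathcal{N}'$ are $\mathrm{II}_1$ factors. The restriction of $\mathrm{Tr}_{\mathcal{N}'}$ to $\mathcal{M}'$ is a normalized faithful normal trace on $\mathcal{M}'$, so by uniqueness of the trace on a $\mathrm{II}_1$ factor it equals $\mathrm{Tr}_{\mathcal{M}'}$. The infinite-index case is excluded by the hypothesis, since it would force $\mathrm{dim}_{\mathcal{N}}(\mathcal{H})\ge n[\mathcal{M}:\mathcal{N}]=\infty$ whenever $n\ge1$. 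If $n=0$, I would first replace $\mathcal{H}$ by $\mathcal{H}\oplus\mathcal{H}_{\mathcal{M}}$, which has $n\ge1$; additivity shows this leaves the finiteness hypothesis unchanged. As a cross-check, property 6 gives an alternative route: $\mathrm{dim}_{\mathcal{N}}(\mathcal{H})/\mathrm{dim}_{\mathcal{M}}(\mathcal{H})=\mathrm{dim}_{\mathcal{M}'}(\mathcal{H})/\mathrm{dim}_{\mathcal{N}'}(\mathcal{H})$. This ratio is independent of $\mathcal{H}$ by the same additivity argument, and taking $\mathcal{H}=\mathcal{H}_{\mathcal{M}}$ evaluates it to $[\mathcal{M}:\mathcal{N}]$.
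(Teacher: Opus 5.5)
The paper states this proposition without proof; it is quoted from Jones's theory. Your argument is the standard one: reduce to the model $\mathcal{H}_{\mathcal{M}}^{\oplus n}\oplus q\mathcal{H}_{\mathcal{M}}$, apply additivity, and identify $\mathrm{Tr}_{\mathcal{N}'}|_{\mathcal{M}'}$ with $\mathrm{Tr}_{\mathcal{M}'}$ by uniqueness of the trace on the $\mathrm{II}_1$ factor $\mathcal{M}'=J_{\mathcal{M}}\mathcal{M}J_{\mathcal{M}}\subset\mathcal{N}'$. Your first step is fine: $\mathrm{dim}_{\mathcal{M}}(\mathcal{H})<\infty$ follows because $\mathcal{M}'\subset\mathcal{N}'$ with $\mathcal{N}'$ finite. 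The case $n\geq 1$ is also fine.

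\textbf{The gap is your treatment of $n=0$}, i.e.\ $0<\mathrm{dim}_{\mathcal{M}}(\mathcal{H})<1$. Before you can apply property 4 to $\mathcal{N}$ acting on $\mathcal{H}_{\mathcal{M}}$, you need $[\mathcal{M}:\mathcal{N}]<\infty$. You propose to get this by passing to $\mathcal{H}\oplus\mathcal{H}_{\mathcal{M}}$. But additivity gives $\mathrm{dim}_{\mathcal{N}}(\mathcal{H}\oplus\mathcal{H}_{\mathcal{M}})=\mathrm{dim}_{\mathcal{N}}(\mathcal{H})+[\mathcal{M}:\mathcal{N}]$. So the finiteness hypothesis survives only if the index is already known to be finite, which is exactly what you are trying to establish. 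As written, this case is circular.

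\textbf{The repair} is to amplify $\mathcal{H}$ itself rather than adjoin $\mathcal{H}_{\mathcal{M}}$. Pick $k\in\mathbb{N}$ with $k\,\mathrm{dim}_{\mathcal{M}}(\mathcal{H})\geq 1$. By additivity, $\mathrm{dim}_{\mathcal{N}}(\mathcal{H}^{\oplus k})=k\,\mathrm{dim}_{\mathcal{N}}(\mathcal{H})<\infty$ and $\mathrm{dim}_{\mathcal{M}}(\mathcal{H}^{\oplus k})=k\,\mathrm{dim}_{\mathcal{M}}(\mathcal{H})$, so the hypothesis is kept and the ratio is unchanged. By property 1, the model for $\mathcal{H}^{\oplus k}$ has $n\geq 1$. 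Your inequality $\mathrm{dim}_{\mathcal{N}}\geq n[\mathcal{M}:\mathcal{N}]$ then gives finite index, and the rest of your computation applies, even to the original $\mathcal{H}$.

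An alternative repair: cover $I\in\mathcal{M}'$ by finitely many projections, each equivalent in $\mathcal{M}'$ to a subprojection of $q$. Then monotonicity of $\mathrm{dim}_{\mathcal{N}}$ gives $\mathrm{dim}_{\mathcal{N}}(\mathcal{H}_{\mathcal{M}})\leq k\,\mathrm{dim}_{\mathcal{N}}(q\mathcal{H}_{\mathcal{M}})<\infty$.

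With either change the proof is complete for $\mathcal{H}\neq 0$. Your closing ``cross-check'' via property 6 is not an independent argument, since independence of the ratio from $\mathcal{H}$ is the content of the proposition itself.
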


\begin{Proposition}
    If $[\mathcal{M} : \mathcal{N}] < \infty$,
    \begin{equation}
        [\mathcal{M}_1 : \mathcal{M}] = [\mathcal{M} : \mathcal{N}] \text{.}
    \end{equation}
\end{Proposition}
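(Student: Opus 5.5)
We need to prove $[\mathcal{M}_1:\mathcal{M}] = [\mathcal{M}:\mathcal{N}]$ for type $\mathrm{II}_1$ factors with finite index. Here $\mathcal{M}_1 = \langle \mathcal{M}, e_{\mathcal{N}}\rangle = J_{\mathcal{M}}\mathcal{N}'J_{\mathcal{M}}$ acts on $\mathcal{H}_{\mathcal{M}}$, and $J \eqdef J_{\mathcal{M}}$ is the modular conjugation of the tracial vector.

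The plan is to compute $[\mathcal{M}_1:\mathcal{M}]$ through the ratio formula of the preceding proposition, using the common representation $\mathcal{H} = \mathcal{H}_{\mathcal{M}}$, and then to convert each dimension into one of $\mathcal{N}$ or $\mathcal{M}$ with the listed properties. That formula gives
\begin{equation}
[\mathcal{M}_1:\mathcal{M}] = \frac{\mathrm{dim}_{\mathcal{M}}(\mathcal{H}_{\mathcal{M}})}{\mathrm{dim}_{\mathcal{M}_1}(\mathcal{H}_{\mathcal{M}})} = \frac{1}{\mathrm{dim}_{\mathcal{M}_1}(\mathcal{H}_{\mathcal{M}})} ,
\end{equation}
since $\mathcal{H}_{\mathcal{M}}$ is the standard module, so $\mathrm{dim}_{\mathcal{M}}(\mathcal{H}_{\mathcal{M}})=1<\infty$ and the proposition applies. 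It then remains to show $\mathrm{dim}_{\mathcal{M}_1}(\mathcal{H}_{\mathcal{M}}) = [\mathcal{M}:\mathcal{N}]^{-1}$.

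The first step is to rewrite the $\mathcal{M}_1$-dimension through the commutant. By property 6,
\begin{equation}
\mathrm{dim}_{\mathcal{M}_1}(\mathcal{H}_{\mathcal{M}}) = \frac{1}{\mathrm{dim}_{\mathcal{M}_1'}(\mathcal{H}_{\mathcal{M}})} .
\end{equation}
From $\mathcal{M}_1 = J\mathcal{N}'J$ we get $\mathcal{M}_1' = J\mathcal{N}J$. The antiunitary $J$ intertwines the $\mathcal{N}$-module $\mathcal{H}_{\mathcal{M}}$ with the $J\mathcal{N}J$-module $\mathcal{H}_{\mathcal{M}}$, taking left multiplication to the corresponding right multiplication on $L^2$. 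Together with the invariance of $\mathrm{dim}$ under unitary equivalence (property 1), this should give
\begin{equation}
\mathrm{dim}_{J\mathcal{N}J}(\mathcal{H}_{\mathcal{M}}) = \mathrm{dim}_{\mathcal{N}}(\mathcal{H}_{\mathcal{M}}) = [\mathcal{M}:\mathcal{N}] .
\end{equation}
Combining the two displays yields $\mathrm{dim}_{\mathcal{M}_1}(\mathcal{H}_{\mathcal{M}}) = [\mathcal{M}:\mathcal{N}]^{-1}$, and therefore $[\mathcal{M}_1:\mathcal{M}] = [\mathcal{M}:\mathcal{N}]$.

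The main obstacle is the transfer of dimension through the antiunitary $J$. Property 1 is stated only for unitary equivalence, so it does not cover $J$ directly. I would handle this by identifying $\mathcal{N}^{\mathrm{op}}$ with $J\mathcal{N}J$ via $n \mapsto J n^* J$. Under this identification, $J$ turns the isometry $V$ of the defining proposition into an isometry $JVJ$ into $\ell^2(\mathbb{N})\otimes\mathcal{H}_{\mathcal{N}}$, intertwining the two module structures. The trace of $JVV^*J$ equals the trace of $VV^*$, because the trace on a $\mathrm{II}_1$ factor is invariant under $x\mapsto Jx^*J$.

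A secondary point is to check that the finiteness hypotheses of the ratio formula and of property 6 hold. Both follow from $[\mathcal{M}:\mathcal{N}]<\infty$, since that gives $\mathrm{dim}_{\mathcal{N}}(\mathcal{H}_{\mathcal{M}})<\infty$ and $\mathrm{dim}_{\mathcal{M}}(\mathcal{H}_{\mathcal{M}})=1$.
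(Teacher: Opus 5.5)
The paper gives no proof of this proposition. It is only listed in the appendix as a known fact of Jones' theory, and the main text quotes $[\mathcal{M}:\mathcal{N}]=[\mathcal{M}_1:\mathcal{M}]=[\mathcal{M}_2:\mathcal{M}_1]$ from Longo, so there is no argument of the paper to compare against. Your chain
\[
[\mathcal{M}_1:\mathcal{M}] = \frac{1}{\mathrm{dim}_{\mathcal{M}_1}(\mathcal{H}_{\mathcal{M}})} = \mathrm{dim}_{J\mathcal{N}J}(\mathcal{H}_{\mathcal{M}}) = \mathrm{dim}_{\mathcal{N}}(\mathcal{H}_{\mathcal{M}}) = [\mathcal{M}:\mathcal{N}]
\]
is the standard Jones argument, and it is correct. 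Two points should be stated more precisely.

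\textbf{Where finiteness is used.} The hypothesis $[\mathcal{M}:\mathcal{N}]<\infty$ is needed for more than the ratio formula. Without it, $\mathcal{M}_1=J\mathcal{N}'J$ is of type $\mathrm{II}_\infty$ by property 2, and then neither $[\mathcal{M}_1:\mathcal{M}]$ nor $\mathrm{dim}_{\mathcal{M}_1}$ is defined. So run the chain from the other end. First establish $\mathrm{dim}_{J\mathcal{N}J}(\mathcal{H}_{\mathcal{M}})=[\mathcal{M}:\mathcal{N}]<\infty$. Properties 2 and 6, applied to the $\mathrm{II}_1$ factor $J\mathcal{N}J$, then show that its commutant $\mathcal{M}_1$ is a $\mathrm{II}_1$ factor with $\mathrm{dim}_{\mathcal{M}_1}(\mathcal{H}_{\mathcal{M}})=[\mathcal{M}:\mathcal{N}]^{-1}$. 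Only at that point is the ratio formula for $\mathcal{M}\subset\mathcal{M}_1$ available.

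\textbf{The transfer through $J$.} The two conjugations involved are different. The correct intertwiner is $\tilde V = (j\otimes J_{\mathcal{N}})\,V\,J_{\mathcal{M}}$, where $j$ is a complex conjugation on $\ell^2(\mathbb{N})$.
\begin{itemize}
\item $\tilde V$ is linear and isometric.
\item It satisfies $\tilde V\, J_{\mathcal{M}} N J_{\mathcal{M}} = (I\otimes J_{\mathcal{N}} N J_{\mathcal{N}})\,\tilde V$ for $N\in\mathcal{N}$.
\item So it embeds the $J\mathcal{N}J$-module $\mathcal{H}_{\mathcal{M}}$ into an amplification of $\mathcal{H}_{\mathcal{N}}$, on which $\mathcal{N}'=J_{\mathcal{N}}\mathcal{N}J_{\mathcal{N}}\cong J\mathcal{N}J$ acts standardly.
\end{itemize}
Its range projection $\tilde V\tilde V^*=(j\otimes J_{\mathcal{N}})VV^*(j\otimes J_{\mathcal{N}})$ lies in $\mathcal{B}(\ell^2(\mathbb{N}))\otimes\mathcal{N}$, whereas $VV^*$ lies in $\mathcal{B}(\ell^2(\mathbb{N}))\otimes\mathcal{N}'$. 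The identity you actually need is therefore
\[
\mathrm{Tr}_{\mathcal{N}}(J_{\mathcal{N}}X^*J_{\mathcal{N}})=\mathrm{Tr}_{\mathcal{N}'}(X) \quad \text{for } X\in\mathcal{N}'.
\]
Both sides are the vector state of the trace vector, which $J_{\mathcal{N}}$ fixes. This relates the traces of two different algebras; it is not an invariance of a single trace under $x\mapsto Jx^*J$.

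This is the same device the paper uses in its proof of the Markov property, where it writes $\mathrm{Tr}_{J_{\mathcal{M}}\mathcal{N}'J_{\mathcal{M}}}(J_{\mathcal{M}}e_{\mathcal{N}}J_{\mathcal{M}})=\mathrm{Tr}_{\mathcal{N}'}(e_{\mathcal{N}})$.
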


We can now prove the relation \ref{Jind} used in the main text.

\begin{Proposition}
    \begin{equation}
        \mathrm{Tr}_{\mathcal{M}_1}(e_{\mathcal{N}} M) = [\mathcal{M} : \mathcal{N}]^{-1} \mathrm{Tr}_{\mathcal{M}}(M), \ \forall M \in \mathcal{M} \text{.}
    \end{equation}
\end{Proposition}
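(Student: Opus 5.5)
The plan is to use that $\mathcal{M}_1 = \langle \mathcal{M}, e_{\mathcal{N}} \rangle$ is a type $\mathrm{II}_1$ factor when $[\mathcal{M}:\mathcal{N}]<\infty$. By uniqueness of the normalized trace on a $\mathrm{II}_1$ factor, $\mathrm{Tr}_{\mathcal{M}_1}$ restricts on $\mathcal{M}$ to $\mathrm{Tr}_{\mathcal{M}}$, and $\mathrm{Tr}_{\mathcal{M}}$ restricts to $\mathrm{Tr}_{\mathcal{N}}$. The statement then splits into two claims.

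\emph{First claim:} $\mathrm{Tr}_{\mathcal{M}_1}(e_{\mathcal{N}} M) = \mathrm{Tr}_{\mathcal{M}_1}(e_{\mathcal{N}})\,\mathrm{Tr}_{\mathcal{M}}(M)$ for all $M\in\mathcal{M}$. Since $e_{\mathcal{N}}$ is a projection, the trace property and the defining relation $e_{\mathcal{N}} M e_{\mathcal{N}} = \mathcal{E}(M) e_{\mathcal{N}}$ give
\[
\mathrm{Tr}_{\mathcal{M}_1}(e_{\mathcal{N}} M) = \mathrm{Tr}_{\mathcal{M}_1}(e_{\mathcal{N}} M e_{\mathcal{N}}) = \mathrm{Tr}_{\mathcal{M}_1}(\mathcal{E}(M) e_{\mathcal{N}}).
\]
Now consider the functional $\phi(N) \eqdef \mathrm{Tr}_{\mathcal{M}_1}(N e_{\mathcal{N}})$ on $\mathcal{N}$. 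Because $e_{\mathcal{N}}$ commutes with $\mathcal{N}$, $\phi$ is a positive normal trace on $\mathcal{N}$. By uniqueness of the trace on the factor $\mathcal{N}$, $\phi = \phi(I)\,\mathrm{Tr}_{\mathcal{N}}$. Hence $\mathrm{Tr}_{\mathcal{M}_1}(e_{\mathcal{N}} M) = \mathrm{Tr}_{\mathcal{M}_1}(e_{\mathcal{N}})\,\mathrm{Tr}_{\mathcal{N}}(\mathcal{E}(M))$. Since the trace-preserving conditional expectation satisfies $\mathrm{Tr}_{\mathcal{N}}\circ\mathcal{E} = \mathrm{Tr}_{\mathcal{M}}$, this gives the first claim.

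\emph{Second claim:} $\mathrm{Tr}_{\mathcal{M}_1}(e_{\mathcal{N}}) = [\mathcal{M}:\mathcal{N}]^{-1}$. This is the main obstacle, and I would prove it with the $\mathcal{M}$-dimension calculus. Work in $\mathcal{H}=\mathcal{H}_{\mathcal{M}}$, where $e_{\mathcal{N}}$ projects onto $\mathcal{H}_{\mathcal{N}}$ and $\mathcal{M}_1 = J_{\mathcal{M}}\mathcal{N}'J_{\mathcal{M}}$, as in equation \ref{M1 and N'}. The key steps are:
\begin{itemize}
\item By the definition of the index and property 6, $\dim_{\mathcal{N}'}(\mathcal{H}_{\mathcal{M}}) = [\mathcal{M}:\mathcal{N}]^{-1}$.
\item Conjugating by $J_{\mathcal{M}}$, which is an anti-unitary symmetry of $\mathcal{H}_{\mathcal{M}}$ and preserves coupling constants, gives $\dim_{\mathcal{M}_1}(\mathcal{H}_{\mathcal{M}}) = [\mathcal{M}:\mathcal{N}]^{-1}$.
\item The projection $e_{\mathcal{N}}\in\mathcal{M}_1$ cuts $\mathcal{H}_{\mathcal{M}}$ down to $\mathcal{H}_{\mathcal{N}}$, on which $e_{\mathcal{N}}\mathcal{M}_1 e_{\mathcal{N}} = \mathcal{N} e_{\mathcal{N}} \cong \mathcal{N}$ acts standardly, so $\dim_{e_{\mathcal{N}}\mathcal{M}_1 e_{\mathcal{N}}}(e_{\mathcal{N}}\mathcal{H}_{\mathcal{M}}) = 1$.
\item Property 5 then gives $1 = \dim_{\mathcal{M}_1}(\mathcal{H}_{\mathcal{M}})/\mathrm{Tr}_{\mathcal{M}_1}(e_{\mathcal{N}})$, that is, $\mathrm{Tr}_{\mathcal{M}_1}(e_{\mathcal{N}}) = [\mathcal{M}:\mathcal{N}]^{-1}$.
\end{itemize}

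Two points here need care, and I expect them to be where the real work lies. The first is the identification $e_{\mathcal{N}}\mathcal{M}_1e_{\mathcal{N}} = \mathcal{N}e_{\mathcal{N}}$, which follows from the density of $\mathcal{M} + \mathcal{M}e_{\mathcal{N}}\mathcal{M}$ in $\mathcal{M}_1$ and the defining relation. The second is the invariance of the dimension under $J_{\mathcal{M}}$-conjugation.

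Combining the two claims gives the Markov property \ref{Jind}.
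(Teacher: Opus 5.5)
Your proof is correct. Your first claim is exactly the paper's second step: the functional $N\mapsto \mathrm{Tr}_{\mathcal{M}_1}(Ne_{\mathcal{N}})$ is a trace on $\mathcal{N}$, hence a multiple of $\mathrm{Tr}_{\mathcal{N}}$, and this is combined with $e_{\mathcal{N}}Me_{\mathcal{N}}=\mathcal{E}(M)e_{\mathcal{N}}$ and $\mathrm{Tr}_{\mathcal{M}}\circ\mathcal{E}=\mathrm{Tr}_{\mathcal{M}}$.

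The difference is in how you compute $\mathrm{Tr}_{\mathcal{M}_1}(e_{\mathcal{N}})$. The paper treats $e_{\mathcal{N}}$ as a projection in the commutant $\mathcal{N}'$ and applies property 4 to the $\mathcal{N}$-module $\mathcal{H}_{\mathcal{M}}$, obtaining $1=\mathrm{dim}_{\mathcal{N}}(e_{\mathcal{N}}\mathcal{H}_{\mathcal{M}})=\mathrm{Tr}_{\mathcal{N}'}(e_{\mathcal{N}})\,[\mathcal{M}:\mathcal{N}]$. It then transports only the \emph{trace} through $J_{\mathcal{M}}$, using $\mathcal{M}_1=J_{\mathcal{M}}\mathcal{N}'J_{\mathcal{M}}$ and $J_{\mathcal{M}}e_{\mathcal{N}}J_{\mathcal{M}}=e_{\mathcal{N}}$. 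You instead transport the \emph{coupling constant} through $J_{\mathcal{M}}$ (after property 6). You then treat $e_{\mathcal{N}}$ as a projection in $\mathcal{M}_1$ itself, using the reduction formula (property 5) together with the compression identity $e_{\mathcal{N}}\mathcal{M}_1e_{\mathcal{N}}=\mathcal{N}e_{\mathcal{N}}$.

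The two computations mirror each other. On $\mathcal{H}_{\mathcal{N}}$ one has $(e_{\mathcal{N}}\mathcal{N}'e_{\mathcal{N}})'=\mathcal{N}e_{\mathcal{N}}$, so property 5 for $(\mathcal{N}',e_{\mathcal{N}})$ is property 4 for $(\mathcal{N},e_{\mathcal{N}})$ read through property 6.

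What each route buys:
\begin{itemize}
\item The paper's version is shorter and needs $J$-invariance only of the normalized trace, which is immediate from uniqueness. It does, however, silently use $J_{\mathcal{M}}e_{\mathcal{N}}J_{\mathcal{M}}=e_{\mathcal{N}}$, which holds because $J_{\mathcal{M}}$ maps $\mathcal{H}_{\mathcal{N}}$ onto itself.
\item Your version avoids that commutation fact. The cost is the compression identity and the $J$-invariance of coupling constants; both are standard, and you correctly flag them as the points needing care. In exchange, it reads $\mathrm{Tr}_{\mathcal{M}_1}(e_{\mathcal{N}})$ directly as a ratio of $\mathcal{M}_1$-dimensions.
\end{itemize}
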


\begin{proof}
    To prove the statement, it is useful to first prove the relation

    \begin{equation}
        \mathrm{Tr}_{\mathcal{M}_1}(e_{\mathcal{N}}) = [\mathcal{M} : \mathcal{N}]^{-1} \text{.}
    \end{equation}
    Indeed,

    \begin{equation}
        \begin{split}
            1 &= \mathrm{dim}_{\mathcal{N}}(\mathcal{H}_{\mathcal{N}}) = \mathrm{dim}_{\mathcal{N}}(e_{\mathcal{N}} \mathcal{H}_{\mathcal{M}}) = \mathrm{Tr}_{\mathcal{N}'}(e_{\mathcal{N}}) \ \mathrm{dim}_{\mathcal{N}}(\mathcal{H}_{\mathcal{M}}) \\
            &= \mathrm{Tr}_{\mathcal{N}'}(e_{\mathcal{N}}) \ [\mathcal{M} : \mathcal{N}] \text{.}
        \end{split}
    \end{equation}
    Consequently, since $\mathcal{M}_1 = J_{\mathcal{M}} \mathcal{N}' J_{\mathcal{M}}$ and $e_{\mathcal{N}} = J_{\mathcal{M}} e_{\mathcal{N}} J_{\mathcal{M}}$,

    \begin{equation}
        \mathrm{Tr}_{\mathcal{M}_1}(e_{\mathcal{N}}) = \mathrm{Tr}_{J_{\mathcal{M}} \mathcal{N}' J_{\mathcal{M}}} (J_{\mathcal{M}} e_{\mathcal{N}} J_{\mathcal{M}}) = \mathrm{Tr}_{\mathcal{N}'} (e_{\mathcal{N}}) = [\mathcal{M} : \mathcal{N}]^{-1} \text{.}
    \end{equation}
    Considering $N_1, N_2 \in \mathcal{N}$,

    \begin{equation}
        \mathrm{Tr}_{\mathcal{M}_1}(N_1 N_2 e_{\mathcal{N}}) = \mathrm{Tr}_{\mathcal{M}_1}(N_1 e_{\mathcal{N}} N_2) = \mathrm{Tr}_{\mathcal{M}_1}(N_2 N_1 e_{\mathcal{N}})
    \end{equation}
    as $e_{\mathcal{N}} \in \mathcal{N}'$. Consequently, the map

    \begin{equation}
        N \in \mathcal{N} \rightarrow \mathrm{Tr}_{\mathcal{M}_1}(N e_{\mathcal{N}})
    \end{equation}
    satisfies the tracial property on $\mathcal{N}$. As a result, there is a constant $\alpha$ such that

    \begin{equation}
        \mathrm{Tr}_{\mathcal{M}_1} (N e_{\mathcal{N}}) = \alpha  \ \mathrm{Tr}_{\mathcal{N}}(N) = \alpha  \ \mathrm{Tr}_{\mathcal{M}}(N), \ \forall N \in \mathcal{N} \text{,}
    \end{equation}
    where we used the fact that $\mathrm{Tr}_{\mathcal{N}} = \left. \mathrm{Tr}_{\mathcal{M}} \right|_{\mathcal{N}}$. Evaluating the above expression for $N = I$,

    \begin{equation}
        \alpha = \mathrm{Tr}_{\mathcal{M}_1} (e_{\mathcal{N}}) = [\mathcal{M} : \mathcal{N}]^{-1}
    \end{equation}
    and thus, for $M \in \mathcal{M}$,

    \begin{equation}
        \begin{split}
            \mathrm{Tr}_{\mathcal{M}_1} (M e_{\mathcal{N}}) &= \mathrm{Tr}_{\mathcal{M}_1} (M e_{\mathcal{N}}^2) = \mathrm{Tr}_{\mathcal{M}_1} (e_{\mathcal{N}} M e_{\mathcal{N}}) \\
            &= \mathrm{Tr}_{\mathcal{M}_1} \big( \mathcal{E}(M) e_{\mathcal{N}} \big) = \alpha \ \mathrm{Tr}_{\mathcal{M}} \big( \mathcal{E}(M) \big) = \alpha \ \mathrm{Tr}_{\mathcal{M}}(M)
        \end{split}
    \end{equation}
    using $e_{\mathcal{N}} M e_{\mathcal{N}} = \mathcal{E}(M) e_{\mathcal{N}}$ and $\mathrm{Tr}_{\mathcal{M}} \circ \mathcal{E} = \mathrm{Tr}_{\mathcal{M}}$.
\end{proof}
\appendix




\end{document}